\documentclass{article}
\usepackage{amsmath,cases,amsthm}
\usepackage{dirtree}
\usepackage{hyperref}
\usepackage{color}
\usepackage{algorithm}
\usepackage[noend]{algpseudocode}
\usepackage{blindtext}
\usepackage{scalerel}

\def\vecsign#1{\rule[1.388\LMex]{\dimexpr#1-2.5pt}{.36\LMpt}%
  \kern-6.0\LMpt\mathchar"017E}

\usepackage{stackengine,amsmath,amsthm}
\stackMath
\usepackage{graphicx}

\usepackage{float}
\usepackage{caption}
\usepackage{subcaption}
\usepackage{amsfonts}

\theoremstyle{plain}

\usepackage[colorinlistoftodos]{todonotes}
\newcounter{todocounter}

\usepackage[normalem]{ulem}

\newcommand{\msg}[1]{\raisebox{.5pt}{\textcircled{\raisebox{-.9pt} {#1}}}}

\newcommand{\cmdBroadcast}{\ensuremath{ \textit{Broadcast} }\xspace}
\newcommand{\cmdReceive}{\ensuremath{ \textit{Receive} }\xspace}
\newcommand{\cmdDecide}{\ensuremath{ \textit{Decide} }\xspace}
\newcommand{\vala}{\ensuremath{0}\xspace}
\newcommand{\valb}{\ensuremath{1}\xspace}
\newcommand{\varR}{\ensuremath{\mathcal{T}}\xspace}
\newcommand{\varpriority}{\ensuremath{ \textit{priority} }\xspace}
\newcommand{\varpriorityCounter}{\ensuremath{ \textit{uCounter} }\xspace}
\newcommand{\slot}{\ensuremath{ \text{tick} }\xspace}	
	
\newcommand{\GMPLUS}{\ensuremath{ \text{GM+} }\xspace}

\newcommand{\GSand}{{\rm Gorilla}\xspace}

\theoremstyle{plain}

\newtheorem{lemma}{Lemma}

\newtheorem{claim}{Claim}
\newtheorem{observation}{Observation}
\newtheorem{definition}{Definition}

\usepackage{thmtools}	
\usepackage{thm-restate}

\bibliographystyle{plainurl}

\title{Gorilla: Safe Permissionless Byzantine Consensus}

\author{
	Youer Pu\\
	\texttt{Cornell University}
	\and
        Ali Farahbakhsh\\
        \texttt{Cornell University}
        \and
	Lorenzo Alvisi\\
	\texttt{Cornell University}
	\and
	Ittay Eyal\\
	\hspace{10 mm}\texttt{Technion}\hspace{10 mm}
}

\begin{document}

\maketitle

\begin{abstract}
Nakamoto’s consensus protocol works in a permissionless model and tolerates Byzantine failures, but only offers probabilistic agreement. Recently, the Sandglass protocol has shown such weaker guarantees are not a necessary consequence of a permissionless model; yet, Sandglass only tolerates benign failures, and operates in an unconventional partially synchronous model. We present Gorilla Sandglass, the first Byzantine tolerant consensus protocol to guarantee, in the same synchronous model adopted by Nakamoto,  deterministic agreement and termination with probability 1 in a permissionless setting. We prove the correctness of Gorilla by mapping executions that would violate agreement or termination in Gorilla to executions in Sandglass,  where we know such violations are impossible. Establishing termination proves particularly interesting, as the mapping requires reasoning about infinite executions and their probabilities.
\end{abstract}

    \section{Introduction}

Nakamoto's Bitcoin~\cite{bitcoin} demonstrated that a form of consensus can be reached even if participation is permissionless. 
Nakamoto achieved this by introducing the cryptographic primitive Proof of Work (PoW)~\cite{dwork1992pricing,jakobsson1999proofs} into the common synchronous Byzantine model~\cite{partial}. With PoW, a process can work for a short while and probabilistically succeed in solving a puzzle. 
But Bitcoin only achieves a probabilistic notion of consensus: both safety and liveness fail with negligible probability. 

Lewis-Pye and Roughgarden showed that deterministic and permissionless consensus cannot be achieved in a synchronous network in the presence of Byzantine failures~\cite{Lewis-Pye}. 
Nonetheless, previous work~(\S\ref{sec:Related}) has achieved deterministic safety and termination with probability~1 under different models. 
Sandglass~\cite{Sandglass} assumes a benign model with a non-standard hybrid synchrony model.
Momose et al.~\cite{momose} guarantee termination only if the set of processes stabilizes. 
Malkhi et al.~\cite{malkhi}, while leveraging either authenticated channels or digital signatures, propose a solution whose correctness depends on Byzantine nodes comprising fewer than a third of the nodes in the system.

The question is whether it is possible to achieve deterministic safety and termination with probability~1 without limiting either Byzantine behavior or how nodes join and leave, and without relying on authentication. 

We answer this question in the affirmative for a synchronous model~(\S\ref{sec:model}) with Byzantine failures.
We present \emph{Gorilla Sandglass} (or simply \emph{Gorilla})~(\S\ref{sec:protocol}), a consensus protocol that guarantees deterministic safety and termination with probability~1 in this standard model, which we dub \emph{GM} (for \emph{Gorilla Model}).
Gorilla relies on a form of PoW: \emph{Verifiable Delay Functions} (\emph{VDFs})~\cite{VDF}.
We consider an ideal VDF~\cite{sprints} that proves a process waited for a certain amount of time and cannot be amortized.
The key difference between a VDF and Nakamoto's PoW is that multiple processes can calculate multiple VDFs concurrently, but cannot, by coordinating, reduce the time to calculate a single VDF.
The crux of the protocol is simple. 
The protocol proceeds in steps. 
In each step, all (correct) nodes collect VDF solutions from their peers and build new VDFs based on those.
Intuitively, correct processes, which are the majority, accrue solutions faster than Byzantine nodes, and progress through the asynchronous rounds of the protocol faster.
Eventually, the round inhabited by correct nodes is so far ahead of that   occupied by Byzantine nodes that, no longer subject to Byzantine influence,  correct nodes can safely decide.

Gorilla Sandglass adopts the general approach of Sandglass~\cite{Sandglass}, in the sense that puzzle results are accrued, with each puzzle built on its predecessors. 
In Sandglass participants are benign and they send, in each step, a message built on previously received messages.
In Gorilla, however, the Byzantine adversary is not limited to acting on step boundaries or communicating at particular times.
Surprisingly, Gorilla's correctness can be reduced to the correctness of a variation of Sandglass.
We perform this reduction in two steps~(\S\ref{sec:correctness}). 

We first show that, for every execution of Gorilla in GM, there is a matching execution where the Byzantine processes adhere to step boundaries, in a model we call \emph{GM+}. 
In the mapped execution, Byzantine processes only start calculating their VDF at the beginning of a step and only send messages at the end of a step. 
GM+ is a purely theoretical device, as it allows operations that cannot be implemented by actual cryptographic primitives. In particular, it allows Byzantine processes to start calculating a VDF in a step~$s$ building on any VDF computed by other  Byzantine nodes that will be completed {\em by the end} of~$s$, rather than by the start $s$, as allowed by GM (and actually feasible in reality). Nonetheless, GM+ serves as a crucial stepping stone towards proving Gorilla's correctness.

Next, we show that, given an execution in GM+ that violates correctness, there exists a corresponding execution of Sandglass in a model we call SM+.  
The SM+ model is similar to that of Sandglass: in both, processes are benign and propagation time is bounded for messages among correct processes and unbounded for  messages to and from so-called \emph{defective nodes}. 
But unlike Sandglass, in SM+ a message from a defective node can reference another message generated by another defective node during the same step (similar to how GM+ allows Byzantine nodes to calculate a VDF that builds on VDFs calculated by other Byzantine nodes in the same step).  

Together, this pair of reduction steps establishes that if an execution of Gorilla in~GM violates correctness with positive probability, then so does an execution of Sandglass in~SM+. To conclude Gorilla's proof of correctness, all that is left to show is that Sandglass retains deterministic safety and termination with probability~1 in the SM+ model: fortunately, the correctness proof of Sandglass~\cite{Sandglass} works almost without change~(\S\ref{sec:sm}) in SM+.
Thus, a violation of correctness in Gorilla results in a contradiction, and therefore, Gorilla is correct.

Gorilla demonstrates that it is {\em possible} to achieve deterministic safety and liveness with probability~1 in a permissionless Byzantine model. Yet, possible does not mean {\em practical}: Gorilla is not, since, like the Sandglass protocol that inspires it, it requires an exponential number of rounds to terminate. By answering the fundamental question of possibility, Gorilla ups the ante: is there a practical solution to deterministically safe  permissionless consensus?

    \section{Related Work}
\label{sec:Related}

Lewis-Pye et al.~\cite{Lewis-Pye} have proven that deterministic consensus is impossible in the permissionless setting. Therefore, for at least one of safety and liveness probabilistic guarantees are inevitable. Gorilla concedes little: it manages to keep safety deterministic, and guarantees liveness with probability~1. 

Several protocols~\cite{ouroboros-genesis,prism,snow-white,parallel-chains,algorand,ouroboros,ebb-flow} have embraced Bitcoin's permissionless participation and probabilistic safety.
All rely for correctness on probabilistic mechanisms, which leave open the possibility that  Byzantine nodes may overturn safety or liveness guarantees with positive probability.
Gorilla avoids this peril by basing correctness on the process of accruing a deterministic number of messages.

Few proposals achieve deterministic safety in a permissionless setting~\cite{malkhi,momose, Sandglass}. Momose et al.~\cite{momose} introduce the concept of {\em eventually stable participation}, akin to partial synchrony; it requires that, after an unknown global stabilization time, for each T-wide time interval $[t, t+T]$, at least half of the 
nodes ever awake during the interval are correct and do not leave. Gorilla guarantees progress
without assuming stability in participation.

Pu et al.~\cite{Sandglass} propose Sandglass, which achieves deterministic safety but only in a benign setting. Gorilla extends Sandglass to tolerate Byzantine failures.

Malkhi et al.~\cite{malkhi} let nodes join and leave at any time, as in Gorilla. Unlike Gorilla, however, they must rely on authenticated channels to tolerate fluctuations in the number of adversaries. Further, Byzantine nodes must be fewer than a third of active nodes, while in Gorilla they must be fewer than one half.

Several works have modeled permissionless participation~\cite{wait-free,resource-pools,sleepy}.

Pass et al.~\cite{sleepy} introduce the {\em sleepy participation model}, in which honest nodes are either awake or asleep.
Awake nodes participate in the protocol, while asleep nodes neither participate nor
relay messages.
Byzantine nodes are always awake, but the scheduler can adaptively turn an honest node Byzantine, as long as Byzantine nodes remain a minority of awake nodes. Gorilla 
similarly assumes that correct and Byzantine nodes can join and leave at any time, as long as a majority of active nodes are correct. Unlike the sleepy model, however, Gorilla requires no public key infrastructure, and, unlike sleepy consensus, guarantees deterministic safety. 

Unlike Gorilla, Lewis-Pye et al.~\cite{resource-pools} do not  offer a consensus protocol, but rather focus on introducing {\em resource pools}, an abstraction that aims to capture resources used to establish identity in permissionless systems, {\em e.g.}, computational power through PoW and
fiscal power through Proof of Stake (PoS). 

Aspnes et al.~\cite{wait-free} explore consensus in an asynchronous benign model where an unbounded number of nodes can join and leave, but where at least one node is required to live forever, or until termination. Gorilla instead assumes a synchronous model, tolerates Byzantine failures, and allows any node to join and leave, as long as a majority of active nodes is correct.

Verifiable Delay Functions (VDFs)~\cite{VDF} have been leveraged as a resource against Byzantine adversaries in various works~\cite{posat, POSH, VDP, fairledger}, specifically to defend PoS systems from attacks where participants can go back in time and mine blocks. Gorilla leverages VDFs to rate-limit the ability of Byzantine nodes to create valid messages.

    \section{Model} 
\label{sec:model} 

	The system is comprised of an infinite set of nodes~$\{p_1, p_2, \dots\}$. 
    Time progresses in discrete ticks~$0,1, 2, 3, \dots$ In each tick, a subset of the nodes is \emph{active}; the rest are \emph{inactive}. 
   The upper bound on active nodes in any \slot, necessary to the safety of Nakamoto's permissionless consensus~\cite{pass17Analysis}, is~${\mathcal N}$, and there is at least one active node in every \slot.
    Starting from tick 0, every~$K$ ticks are grouped into a step: each step~$i$ consists of ticks~$iK, iK+1, \dots, iK+K-1$.
    	
    A Verifiable Delay Function (VDF) is a function whose calculation requires completing a given number of sequential steps. Thus, evaluating a VDF requires the evaluator to spend a certain amount of time in the process. Specifically, we require the evaluation of a single VDF to take~$K$ ticks. 
    We refer to the intermediate random values that this evaluation produces at the end of each of the $K$ ticks as the {\em units of the VDF evaluation} (or, more succinctly, the \emph{units of the VDF}). We denote the $i$-th unit of evaluating the VDF of some input $\gamma$ by $\textit{vdf}_\gamma^{\,i}$; we denote the final result ({\em i.e.,} $\textit{vdf}_\gamma^{\,K}$) by $\textit{vdf}_\gamma$, or, when there is no ambiguity, by {\em vdf}.
    
    We model the calculation of VDFs with the help of an \emph{oracle} $\Omega$. Nodes use $\Omega$ both to iteratively obtain the units of a VDF and to verify whether a given value is the {\em vdf} of a given input. In particular, 
    $\Omega$ provides the following API:

 \begin{description}
		\item[Get($\gamma, \textit{vdf}_\gamma^{\,i}$):] returns $\textit{vdf}_\gamma^{\,(i+1)}$. 
		By convention, invoking Get($\gamma, \bot$) returns $\textit{vdf}_\gamma^{\,1}$. The oracle remembers how it responded to a Get query -- so that, even though the units of a VDF are random values, identical queries produce identical responses. $\Omega$ accepts at most one call to Get() in any tick from each node.

		\item[Verify($\textit{vdf},\gamma$):]
		returns True iff~$\textit{vdf} = \textit{vdf}_\gamma^{\,K}$.
		$\Omega$ accepts any number of calls to Verify() in any tick from any node.
	\end{description}
	
	If Get($\gamma,\bot$) is called at tick~$t$ and step~$s$, we say the VDF calculation for~$\gamma$ \emph{starts} at tick~$t$ and step~$s$.
	Similarly, the VDF calculation for~$\gamma$ \emph{finishes} at tick~$t$ and step~$s$ if Get($\gamma,\textit{vdf}_\gamma^{\,K-1}$) is called at tick~$t$ and step~$s$.

In each \slot, an active node receives a non-negative number of messages, updates its variables -- potentially including calls to the oracle -- and then communicates with others using a synchronous broadcast network. 
The network allows each active node to \emph{broadcast} and \emph{receive} unauthenticated messages.
 Node~$p_i$ invokes ~$\cmdBroadcast_i(m)$ to broadcast a message~$m$, and receives broadcast messages from other nodes (and itself) by invoking~$\cmdReceive_i$.
The network neither generates nor duplicates messages and ensures that if a node receives a message~$m$ in \slot~$t$, then~$m$ is broadcast in \slot~$(t-1)$.
The network propagation time is negligible compared to a \slot, {\em i.e.}, to the time necessary to calculate a unit of a VDF.
By executing the command~$\cmdReceive_i$, a newly joining node $p_i$ receives all messages broadcast by correct nodes prior to its activation.
Nodes whose network connections with other nodes are asynchronous can be modeled as Byzantine, as Byzantine nodes can deliberately or unintentionally delay messages sent from or to them. Therefore, Gorilla also tolerates asynchrony, as long as the nodes that communicate asynchronously are a minority.

 Correct nodes do not deviate from their specification and constitute a majority of active nodes at each tick.
Correct nodes always join at the beginning of a step and leave when a step ends. 
Hence, a correct node is active from the first to the last tick of a step.
The remaining nodes are {\em Byzantine} and can suffer from arbitrary failures. 
Byzantine nodes can join and leave at any \slot.

All nodes are initialized with a value~$v_i \in \{\vala, \valb\}$ upon joining the system.
An active node~$p_i$ decides by calling  $\cmdDecide_i(v)$ for some value~$v$. 
A protocol solves the consensus problem if it guarantees the following properties~\cite{partial}:
	
    \begin{definition}[Agreement]
		\label{def:agreement}
		If  a correct node decides a value~$v$, then no correct node decides a value other than $v$.
	\end{definition}
	
	\begin{definition}[Validity]
		If all nodes that ever join the system have initial value~$v$ and there are no Byzantine nodes, then no correct node decides $v' \neq v$. 
	\end{definition}
	
	\begin{definition}[Termination]
		Every correct node that remains active eventually decides.
	\end{definition}

\section{Gorilla}
\label{sec:protocol}

Gorilla borrows its general structure from Sandglass (see Algorithm~\ref{protocol:gorilla})~\cite{Sandglass}. 
Executions proceed in asynchronous rounds (even though, unlike Sandglass, Gorilla assumes a standard synchronous model of communication between all nodes).  Upon receiving a threshold of valid messages for the current round, nodes progress to the next round; if all the messages received by a correct node propose the same value~$v$ for sufficiently many consecutive rounds, the node decides~$v$. The number of active nodes is bounded by~$\mathcal{N}$ but otherwise unknown. Within this bound, it can fluctuate arbitrarily, but both safety and liveness depend on the correctness of a majority of nodes.

The key aspects of the protocol can be summarized as follows:

\begin{description}
\item[Ticks, steps and VDF]
Each valid message must contain a \textit{vdf}.
A correct node takes a full step, \emph{i.e.},~$K$ consecutive ticks, to individually calculate a \textit{vdf}, and at the end of the step sends a valid message that contains the \textit{vdf}. Byzantine nodes may instead share among themselves the work required to finish the~$K$ units of a VDF calculation; even so, it still takes~$K$ distinct ticks for Byzantine nodes to compute a  \textit{vdf}. Requiring valid messages to carry a  \textit{vdf} limits Byzantine nodes to sending messages at the same rate as correct nodes; this ensures that, on average across all steps, the correct majority sends at least one more valid message than the minority of nodes that are Byzantine.

\item[Choosing a threshold]   
A node proceeds to round~$r$ if it receives at least $\varR = \lceil \frac{{\mathcal N}^2}{2} \rceil$ messages for round~$r-1$.
Even though setting such a threshold does not prevent Byzantine nodes from advancing from round to round, it nonetheless gives the correct nodes an edge in the pace of such progress, since they constitute a majority.

\item[Exchanging messages] In each step of the protocol, a node in any round $r$ -- based on the messages it has received so far -- searches for the largest round~$r_{{\mathit max}} \ge r$ for which it has accrued \varR messages. It then broadcasts a message for the next round.
The message includes the node's current proposed value~$v$, the \textit{vdf}, and four other attributes discussed below: the message's \textit{coffer}, a nonce, as well as $v$'s \textit{priority} and \textit{unanimity counter}.

\item[Keeping history] Nodes can join the system at any time.
To help a joining node catch up, every message broadcast by a node $p$ in round $r$ includes a {\em message coffer} that contains: $(i)$ messages from round~$r-1$ received by $p$ to advance to round $r$; $(ii)$ recursively, messages included in those messages' coffers; and $(iii)$ messages received by $p$ for round $r$.

\item[Nonce] By making it possible to distinguish between messages that are generated from the same coffer, nonces allow correct nodes to broadcast multiple valid messages during a round while, at the same time, preventing Byzantine nodes from reusing the same \textit{vdf} to send multiple valid messages based on a given message coffer.


\item [Priority and unanimity counter] If a node~$p$ only receives the value~$v$ from a majority for a sufficient number of consecutive rounds, it decides~$v$.
To guarantee the safety of this decision,~$p$ assigns a {\em priority} to the value~$v$ that it proposes. This priority is incremented once~$v$ is unanimously proposed for a long stretch of consecutive rounds.
To record the length of this stretch, each node computes it upon entering a round~$r$, and includes it as the {\em unanimity counter} in the messages it sends for round~$r$.
If a node collects more than one value in a round~$r$, it chooses the one with the highest priority, and proposes it for round~$r+1$. In case of a tie, it uses~\textit{vdf} as a source of randomness to choose one of the values randomly.
Since~\textit{vdf} is a random number calculated based on the message coffer and a nonce (lines~\ref{gorilla:vdf1:start}-\ref{gorilla:vdf1:end}), a Byzantine node is unable to deliberately pick an input to VDF to deterministically get the desired value.

\item[Message internal consistency and validity]
    A message~$m$ is {\em internally consistent} if the attributes carried by~$m$ can be generated by following Gorilla correctly based on the message coffer carried in~$m$. We denote the~$\textit{vdf}$ in~$m$ by~$\textit{vdf}_m$.

    A message~$m$ is \textit{valid} (and thus  isValid($m$) returns true), if ($i$) $\textit{vdf}_m$ can be verified by the message coffer and the nonce of~$m$; ($ii$)~$m$ is internally consistent; and  ($iii$) for any message~$m'$ in~$m$'s coffer,~$m'$ is also valid.
    Otherwise,~$m$ is invalid.

\end{description}

In addition to demonstrating variable initialization, Algorithm~\ref{protocol:gorilla} presents the algorithm each node~$p_i$ runs at each step.
Each node~$p_i$ starts every step by adding all valid messages, in addition to the messages in their coffers, to the set~$Rec_i$ (lines~\ref{gorilla:forunionReceivedMessages}-\ref{gorilla:unionReceivedMessages}).

Iterating over~$Rec_i$, node~$p_i$ computes the largest round~$r_{max}$ for which it has received at least \varR messages, and updates its current round to~$r_{max} + 1$ (line~\ref{gorilla:enterNewRound}) if the condition in line~\ref{gorilla:roundNumberMax} holds.
 Once in a new round,~$p_i$ does the following:
 ($i$) resets its message coffer~$M$ and adds to it the messages it has received from the previous round -- alongside the messages in {\em their} coffers (lines~\ref{gorilla:resetM}-\ref{gorilla:unionCollectedMessagesEnd}); ($ii$) picks a nonce and calculates a \textit{vdf} based on its coffer and the nonce (lines~\ref{gorilla:vdf1:start}-\ref{gorilla:vdf1:end}); 
($iii$) chooses its proposal value (lines~\ref{gorilla:multiset} -\ref{gorilla:valueUpdateEnd}); it chooses the proposal with the highest priority among the previous round messages in its coffer; in case of a tie, it chooses a random number utilizing the randomness in~$vdf$; ($iv$) determines the priority and the unanimity counter for the messages it will broadcast in the current round (lines~\ref{gorilla:checkValues}-\ref{gorilla:setPriority}); and finally  ($v$) the node decides~$v$ if~$v$'s priority is high enough (lines~\ref{gorilla:decideStart}-\ref{gorilla:decideEnd}).
If~$p_i$ does not enter a new round, it starts to create a message nonetheless: it adds to the message's coffer all messages received for the current round (line~\ref{gorilla:unionSameRoundMsg}), and calculates a~\textit{vdf} with the new message coffer and a different nonce as the input (lines~\ref{gorilla:vdf2:start}-\ref{gorilla:vdf2:end}), so that the message is unique.
Regardless of whether it enters a round or not, $p_i$ ends every step by broadcasting the message it has created (line~\ref{gorilla:broadcast}).

\begin{algorithm}[!htb]
  	\caption{\GSand: Code for node $p_i$. The orange text highlights where Gorilla departs from Sandglass.}
        \label{protocol:gorilla}

  	\begin{algorithmic}[1]
  		\Procedure{Init}{$input_i$} \label{gorilla:Init}
  		\State{~$v_i\gets input_i$;~$priority_i\gets 0$;~$\varpriorityCounter_i \gets 0$;~$r_i=1$;~$M_i = \emptyset$;~$Rec_i = \emptyset$;}
  		\label{gorilla:initialsetup}
  		\EndProcedure
  		
  		\Procedure{step}{} \label{gorilla:step}
  		
  		\ForAll{$m=(\cdot,\cdot,\cdot,\cdot,\cdot,M)$ received by~$p_i$ }\label{gorilla:forunionReceivedMessages}
	  	\color{orange}
            \If{isValid(m)} \label{gorilla:isValid}
            \color{black}
            \State $Rec_i \gets Rec_i \cup \{m\} \cup M $\label{gorilla:unionReceivedMessages}
	  		\EndIf
            
  		\EndFor
  		
  		\If{$\max_{|Rec_i(r)| \ge \varR}(r) \ge r_i$ \label{gorilla:roundNumberMax}}

  		\State $r_i =  \max_{|Rec_i(r)| \ge \varR}(r) + 1$ \label{gorilla:enterNewRound}

  		\State $M_i = \emptyset$ \label{gorilla:resetM}
  		\ForAll{$m=(\cdot, r_i-1,\cdot,\cdot,\cdot,M) \in Rec_i(r_i-1)$ } \label{gorilla:unionCollectedMessagesStart}
  		\State $M_i \gets M_i \cup \{m\}\cup M$\label{gorilla:unionCollectedMessagesEnd}
  		\EndFor		

            \State $M_i \gets M_i \cup Rec_i(r_i)$
            \color{orange}
            \State{$\textit{vdf} \gets \bot$;  $nonce \gets$ a new arbitrary value}
            \label{gorilla:vdf1:start}
            \For{$j:1..k$}
            \State $\textit{vdf} \gets Get((M_i, nonce),vdf)$
            \EndFor\label{gorilla:vdf1:end}
            \color{black}
  		
  		\State Let~$C$ be the multi-set of messages in $M_i(r_i-1)$ with the largest priority. 
  		\label{gorilla:multiset}
  		
  		\If{all messages in $C$ have the same value~$v$} \label{gorilla:valueUpdateStart}
  		\State $ v_i \gets v$
  		\Else
            \color{orange}
                \State $v_i \gets \textit{vdf} \mod  2$ \label{gorilla:vdfValueUpdate}
      
        \label{gorilla:valueUpdateEnd}
            \color{black}
      	\EndIf
  		
  		\If{all messages in $M_i(r_i-1)$ have the same value~$v_i$}
  		\label{gorilla:checkValues}
  		\State{$\varpriorityCounter_i \gets 1 + \min\{\varpriorityCounter | (\cdot,r_i-1,v_i,\cdot,\varpriorityCounter, \cdot) \in M_i(r_i-1)\}$} 
  		\label{gorilla:setPriorityCounter}
  		\Else
  		\State{$\varpriorityCounter_i \gets 0$} \label{gorilla:resetPC}
  		\EndIf
  		
  		\State $priority_i \gets \max (0,\left\lfloor \frac{\varpriorityCounter_i}{\varR} \right\rfloor -5)$ \label{gorilla:setPriority}
  		
  		\If{$\varpriority_i \ge 6\varR+4$} \label{gorilla:decideStart}
  		\State{Decide$_i$($v_i$)} \label{gorilla:decidePoint}	
  		\EndIf	\label{gorilla:decideEnd}	
        
        \Else 	\State $M_i \gets M_i \cup Rec_i(r_i)$
  		\label{gorilla:unionSameRoundMsg}
            \color{orange}
       	\State{$\textit{vdf} \gets \bot$;  $nonce \gets$ a new arbitrary value}\label{gorilla:vdf2:start}
       	\For{$j:1..k$}
       	\State $vdf = Get((M_i, nonce),vdf)$
       	\EndFor\label{gorilla:vdf2:end}
            \color{black}

  		\EndIf
  	\State broadcast~$(r_i,v_i,priority_i, \varpriorityCounter_i, M_i, nonce, \textit{vdf})$ \label{gorilla:broadcast}
  		\EndProcedure
  		
  	\end{algorithmic}
  	
  \end{algorithm}

\subsection{Comparing Sandglass and Gorilla}
    \label{sec:compareGS}

    Gorilla retains the structure of Sandglass, adding the requirement that valid messages must include a \textit{vdf} and a nonce. The differences between the protocols are highlighted in orange in Algorithm~\ref{protocol:gorilla}:
    ($i$) \textit{vdf} is calculated for each message sent (lines~\ref{gorilla:vdf1:start}-\ref{gorilla:vdf1:end},\ref{gorilla:vdf2:start}-\ref{gorilla:vdf2:end}), ($ii$) received messages are checked to see if they are valid (line~\ref{gorilla:isValid}); ($iii$) \textit{vdf} is used as the source of randomness (line~\ref{gorilla:vdfValueUpdate}) where the protocol requires choosing a value randomly. 

    These additions are critical to handling Byzantine faults. Both Gorilla and Sandglass rely on correct (respectively, good) nodes sending the majority of unique messages during an execution. In Sandglass, where defective nodes are benign, this property simply follows from requiring correct nodes to be a majority in each step; not so in Gorilla, where faulty nodes can be Byzantine. Requiring valid message in Gorilla to carry a \textit{vdf} preserves correctness by effectively rate-limiting Byzantine nodes' ability to create valid messages.

    Given their differences in both failure model and timing assumptions, it is perhaps surprising that so little needs to change when moving from Sandglass to Gorilla.  After all, Sandglass assumes a model where failures are benign and a hybrid synchronous model of communication~\cite{cryptoeprint:2022/796}; Gorilla instead assumes a Byzantine failure model, and a synchronous network model (\S\ref{sec:model}). Note, however, that although Sandglass assumes benign failures, its hybrid communication model implicitly accounts for Byzantine nodes strategically choosing the timing for receiving and sending messages to correct nodes: Gorilla can then simply inherit from Sandglass the mechanisms for tolerating such behaviors.
    
	\section{Correctness}\label{sec:correctness}

Despite the similarlity between the Gorilla and Sandglass protocols, proving Gorilla's correctness directly is challenging.
Unlike Sandglass, Byzantine nodes can act between step boundaries, interleave VDF computations instead of producing one VDF (and hence one message) at the time, etc. 
To overcome this complexity, our approach is to leverage as much as possible Sandglass's proof of correctness. 

Our battle plan was to first map executions of Gorilla to executions of Sandglass. 
Then we intended to proceed by contradiction: assume that a correctness guarantee is violated in Gorilla, and map this violation to Sandglass; since correctness violations are not possible in Sandglass~\cite{Sandglass}, we could then conclude that neither they can be in Gorilla.

The best laid plans often go awry, and, as we discuss below, ours was no exception---but we were able to nonetheless retain the conceptual simplicity of our initial approach. 

		\subsection{The Main Story, and How it Fails}\label{sec:wall}
\sloppy
The mapping from Gorilla to Sandglass must satisfy certain {\em well-formedness} and {\em equivalence} conditions.
The former specify how to map a Gorilla execution into one that satisfies the Sandglass model~(SM) and follows the Sandglass protocol;
the latter allow us to map violations from Gorilla to Sandglass, {\em i.e.}, they preserve certain properties of the behavior of \textit{correct} nodes in Gorilla and reinterpret them as the behavior of \textit{good} nodes in Sandglass.

{\em Well-formedness} requires mapping correct nodes to good nodes, and Byzantine nodes to defective nodes, while respecting model constraints ({\em e.g.}, at each step defective nodes should be fewer than good nodes).
The first half of this mapping is easy: except for calculating a VDF, correct nodes in GM are not doing anything different than good nodes in SM.
Thus, mapping a step in~GM to a step in~SM yields a straightforward connection between correct and good nodes.
The second half, however, is trickier.
Defective nodes in~SM can suffer from benign faults like omission and crashing, but these fall short of fully capturing Byzantine behavior in~GM.
In particular, Byzantine nodes, even when sending valid messages, can violate the timing constraints that Gorilla places on a node's actions, \textit{e.g.}, by splitting the calculation of a single VDF into multiple steps.
Thus, before a Gorilla execution can be mapped to a Sandglass execution, Byzantine nodes' actions must be brought to conform to step boundaries and not spill across steps.
After tidying things up this way, it must become possible to map the faulty actions of the Byzantine nodes to a combination of crashes, omissions, and network delays, \emph{i.e.}, to the faults and anomalies that~SM allows.

\textit{Equivalence} in turn requires that, when mapping executions from Gorilla to Sandglass,  a correct node and its corresponding good node send and receive in every step messages that allow them to update their proposed value, round number, priority, and unanimity counter in the same way. Since messages play the same role in both protocols, this is sufficient for good nodes in Sandglass to decide identically to the corresponding correct nodes in Gorilla.

Our plan to realize this logical mapping involved splitting it into two concrete, intermediate mappings: a first mapping from an initial Gorilla execution to an intermediate Gorilla execution in which Byzantine actions conform to step boundaries; and a second mapping from that intermediate execution to a Sandglass execution.
We require all of our well-formedness and equivalence conditions to hold throughout these mappings: ($i$) model constraints must be always respected, ($ii$) correct nodes in the intermediate execution send and receive the equivalent (indeed, the same!)  messages as their counterparts in the initial execution, at the same steps, and ($iii$) good nodes in the final execution send and receive equivalent messages as their correct counterparts in the intermediate execution, at the same steps.

\begin{figure}
\centering
\begin{subfigure}{0.5\textwidth}
  \centering
  \captionsetup{justification=centering}
  \includegraphics[width=1\linewidth]{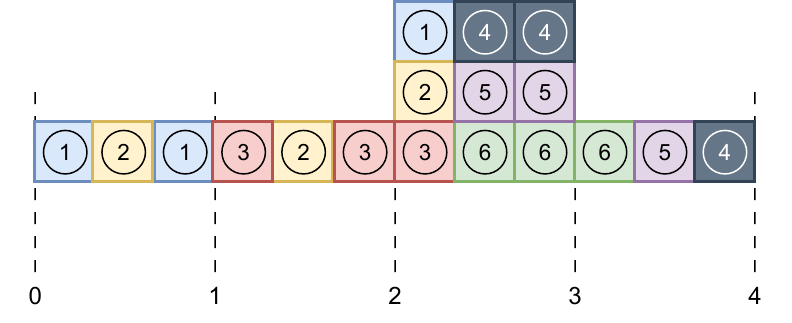}
  \caption{The counterexample.}
  \label{fig:wall}
\end{subfigure}%
\begin{subfigure}{0.5\textwidth}
  \centering
  \captionsetup{justification=centering}
  \includegraphics[width=1\linewidth]{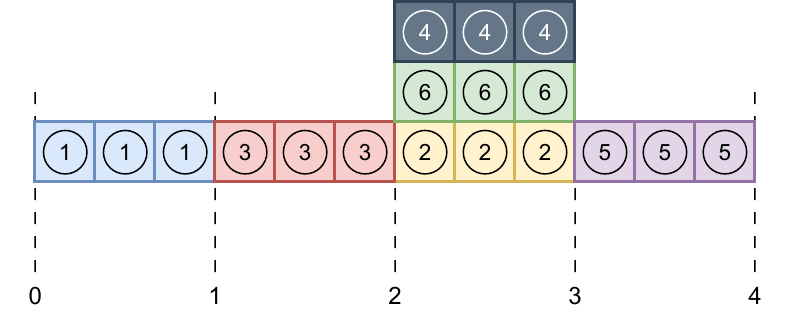}
  \caption{The solution enabled by peeking.}
  \label{fig:peek}
\end{subfigure}
\caption{An execution that cannot be reorganized in~GM (a), and how peeking solves the problem in~GM+ (b).}
\label{fig:fig}
\end{figure}

Unfortunately, {\em well-formedness and equivalence cannot be satisfied by the first mapping}.
To see why, consider Figure~\ref{fig:wall}.
Here, each square represents a VDF unit calculated by a Byzantine node for a specific input, denoted by a unique color.
Numbered circles represent the corresponding messages, {\em e.g.}, the VDF units containing \msg{1} are associated with message \msg{1}.
Each VDF calculation takes three ticks, and a step comprises three ticks.
The numbered dashed lines indicate the steps, {\em i.e.}, the three ticks between lines~$i$ and~$i+1$ belong to step~$i$.
Assume that, to maintain a majority of correct nodes in the system, the maximum allowable number of Byzantine nodes in the four steps shown in the figure are, respectively, 1, 1, 3, and 1.
Moreover, assume that messages \msg{4}, \msg{5}, and \msg{6} all include in their coffers messages \msg{1}, \msg{2}, and \msg{3}.
Finally, assume that messages \msg{4}, \msg{5}, and \msg{6} are sent to correct nodes at the start of Step 4. 
Since the actions of Byzantine nodes in Figure~\ref{fig:wall} do not conform to step boundaries, the first mapping should be able to organize them in a way that ensures that ($i$) 
correct nodes receive messages \msg{4}, \msg{5}, and \msg{6} at the beginning of Step 4, and ($ii$) each of these messages in turn includes messages \msg{1}, \msg{2}, and \msg{3}. Thus, the calculation of the VDFs for messages \msg{1}, \msg{2}, and \msg{3}  must be completed before those for  \msg{4}, \msg{5}, and \msg{6} can start.
Now, since steps 0 and 1 include only one Byzantine node, they can only accommodate one VDF, \emph{i.e.}, only one VDF can be calculated in each of steps 0 and 1. Without loss of generality, let those VDFs be \msg{1} and \msg{3}, respectively. VDF \msg{2} must still complete before messages \msg{4}, \msg{5}, and \msg{6}: thus, it has to be placed in Step 2. Note that, although Step 2 could accommodate two more Byzantine VDFs at Step 2, they cannot be placed there, since the completion of VDF \msg{2} must precede the start of the calculation of VDFs \msg{4}, \msg{5}, and \msg{6}: the earliest step where they can start is Step 3. However, it is impossible to accommodate all three there, since in Step 3 there is a single Byzantine node.

Our first attempt at mapping executions from Gorilla to Sandglass has thus failed.  Fortunately, though, it is possible to retain the strategy that underlies it and overcome the above counterexample without weakening our well-formedness and equivalence conditions. Instead, we proceed to weaken 
the model in which we operate, by giving Byzantine nodes extra power.

\subsection{A New Beginning}

\newcommand{\textForProofbeforeFirstDef}{
The first step in our two-step process for mapping a Gorilla execution $\eta_G$ into a Sanglass execution $\eta_S$ is to reorganize the actions taken by Byzantine nodes in~$\eta_G$: we want to map~$\eta_G$ to an execution where Byzantine nodes join the system and receive valid messages at the beginning of a step (by the first tick) and broadcast valid messages and leave the system at the step's end (at its $K$-th tick).
Since, as explained in Section~\ref{sec:wall}, satisfying all of these requirements is not possible, we extend~GM to a new model.


We need some way to calculate a VDF on an input that includes the final result of VDF calculations that are still in progress. To achieve this, we extend the oracle's API to allow Byzantine nodes to  \emph{peek} at those future outcomes. By issuing the oracle's {\em peek} query, Byzantine nodes active in any step~$s$ can learn the result of a VDF computed by Byzantine nodes finishing in step~$s$ even before its calculation has ended.

We thus introduce GM+, a model that extends GM by having a new oracle,~$\Omega^+$, that supports one additional method:

\begin{description}
	\item[Peek($\gamma$):]  immediately returns~$\textit{vdf}_\gamma$.
\end{description}

In any tick, a 
Byzantine node in~\GMPLUS
can call Peek() multiple times, with different inputs. 
However, Byzantine nodes can only call Peek subject to two
conditions: 
\begin{itemize}
	

\item A Byzantine node can peek in step~$s$ at $\textit{vdf}_\gamma$ only if Byzantine nodes commit to finish the VDF calculation for input~$\gamma$ within~$s$; and
\item a Byzantine node does not peek at $\textit{vdf}_\gamma$, where $\gamma = (M,nonce)$, if~$M$ in turn contains some VDF result~$v$ obtained by peeking, and the calculation of~$v$ has yet to finish in this tick. 
\end{itemize}
Note that these restrictions only limit the {\em additional} powers that \GMPLUS grants the adversary: in GM+, Byzantine nodes remain strictly stronger than in GM.

With this new model, we first map an execution of Gorilla in~GM to an execution of Gorilla in GM+, in which Byzantine behavior is reorganized with the addition of peeking.
Hence follows the first lemma of our scaffolding: the existence of the first mapping.
} 
\textForProofbeforeFirstDef

\begin{restatable}{definition}{defeqGGP}\label{def:eq-G-GP}
	 Consider an execution $\eta_G$ in~GM and an execution $\eta_G^+$ in~GM+.
     We say~$\eta_G^+$ is a reorg of~$\eta_G$ iff the following conditions are satisfied:

    \begin{description}
        \item[\textsc{Reorg}-\mbox{$1$}] For every correct node $p$ in $\eta_G$, there exists a correct node $p^+$ in $\eta_G^+$,  such that $p$ and $p^+$ ($i$)~join and leave the system at the same ticks in the same steps and ($ii$)~receive and send the same messages at the same ticks in the same steps.
        			
        \item[\textsc{Reorg}--\mbox{$2$}] Each Byzantine node in~$\eta_G^+$ ($i$)~joins at the first tick of a step and leaves after the last tick of that step; ($ii$) receives messages at the first tick of a step and sends messages at the last tick of that step; and (iii)~sends and receives only valid messages.
        
        \item[\textsc{Reorg}-\mbox{$3$}] If in~$\eta_G$ a Byzantine node sends a valid message~$m$ at a tick in step~$s$, then in~$\eta_G^+$ a Byzantine node sends~$m$ at a tick in some step~$s' \le s$. 	
    \end{description}

\end{restatable}

\begin{restatable}{lemma}{lemmareorg} \label{lemma:reorg}
There exists a mapping \textsc{Reorg} that maps an execution~$\eta_G$ in~GM to an execution~$\eta_G^+$ in~\GMPLUS, denoted~$\eta_G^+ = \textsc{Reorg}(\eta_G)$, such that~$\eta_G^+$ is a reorg of~$\eta_G$.
\end{restatable}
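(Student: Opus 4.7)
The plan is to construct $\textsc{Reorg}$ explicitly and verify the three clauses of Definition~\ref{def:eq-G-GP}. On the correct side the construction is merely a copy: for every correct node $p$ in $\eta_G$, instantiate a correct $p^+$ in $\eta_G^+$ that joins, leaves, receives, and broadcasts at identical ticks, giving \textsc{Reorg}-$1$ once I argue that the multiset of messages delivered to $p^+$ at each tick can be made to match that delivered to $p$. The entire burden therefore lies in rebuilding the Byzantine activity so that it is step-aligned (\textsc{Reorg}-$2$) while honoring the temporal bound of \textsc{Reorg}-$3$.

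For every valid Byzantine message $m$ broadcast in $\eta_G$ at some tick in step $s_m$, I would assign a target step $s'_m \le s_m$ and dedicate a fresh Byzantine node in $\eta_G^+$ that joins at the first tick of $s'_m$, issues $K$ sequential \textsc{Get} calls during $s'_m$ to produce $\textit{vdf}_m$, broadcasts $m$ at the last tick of $s'_m$, and then leaves. Invalid Byzantine messages of $\eta_G$ are simply dropped: no corresponding activity is introduced in $\eta_G^+$. The assignment $m \mapsto s'_m$ is built by iterating over valid Byzantine messages in the order of their $\eta_G$ broadcasts. Because every message's coffer only references messages broadcast strictly earlier in $\eta_G$, the coffer dependency relation is a DAG and the prior assignments $s'_{m'}$ are already known when $m$ is handled; I would pick $s'_m$ to be the earliest step that is at least $\max_{m' \in \textit{coffer}(m)} s'_{m'}$ and still has spare Byzantine capacity (inherited step by step from $\eta_G$). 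Peeking is precisely what makes this scheme work when some $m' \in \textit{coffer}(m)$ also lands in $s'_m$: the Byzantine node handling $m$ cannot afford to run a second full VDF, but it may call $\textsc{Peek}(\gamma_{m'})$ during $s'_m$ to splice $\textit{vdf}_{m'}$ into the coffer. The no-cycle side condition on $\textsc{Peek}$ is met by issuing the same-step peeks in a topological ordering of the DAG, and the commit-to-finish side condition is met because each same-step VDF is planned to complete via its dedicated node's $K$ \textsc{Get} calls inside $s'_m$. This is precisely the slack that would unblock the configuration of Figure~\ref{fig:wall}.

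Verification then proceeds clause by clause. \textsc{Reorg}-$2$ is built into how each Byzantine node is scheduled, and validity is preserved because every $\textit{vdf}$ in every broadcast message is recomputed faithfully through $\Omega^+$. \textsc{Reorg}-$3$ is the inequality $s'_m \le s_m$ baked into the construction. \textsc{Reorg}-$1$ requires one additional observation about delivery: when $s'_m < s_m$, $m$ is broadcast earlier in $\eta_G^+$ than in $\eta_G$, but since the only network guarantee that bites a Byzantine sender is the receiver-side 1-tick rule, the adversarial scheduling freedom already present for Byzantine traffic in $\eta_G$ suffices to arrange a delayed delivery to each correct recipient at the original tick, so every correct $p^+$ ends up receiving exactly the multiset of messages that $p$ received at each tick. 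The step I expect to be the main obstacle is the capacity bookkeeping that justifies the choice of $s'_m$: one has to prove by induction on steps that the aggregate Byzantine VDF work demanded by all messages assigned to $\{0,\ldots,s\}$ in $\eta_G^+$ never exceeds what the Byzantine nodes active in the same prefix of $\eta_G$ could collectively have performed ($K$ \textsc{Get} calls per node-tick, one VDF per $K$ consecutive \textsc{Get} calls), with peeking absorbing the intra-step tangles that caused the naive attempt to fail in Section~\ref{sec:wall}.
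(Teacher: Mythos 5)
Your overall architecture matches the paper's: copy correct nodes verbatim, reassign each valid Byzantine VDF to a step-aligned ``slot'' occurring no later than its original step, use Peek to resolve same-step coffer dependencies, and then check the three \textsc{Reorg} conditions. However, there are two genuine gaps, both located exactly where the lemma is hard.

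First, the inequality $s'_m \le s_m$ (which is \textsc{Reorg}-3 and is also what makes \textsc{Reorg}-1's delayed-delivery argument possible) is asserted and deferred rather than proven, and it is the crux of the entire lemma. The paper proves it (its Claim~2) by a counting argument: it identifies the last step $j$ before the problematic step $i$ at which there was ``slack'' (a free slot, or a slot given to a VDF that is not the earliest-finishing remaining one), observes that every VDF assigned to steps $j{+}1,\dots,i$ must have \emph{all} $K$ of its Get() calls inside that same window of $\eta_G$, and derives a contradiction with the per-tick Get() budget of the Byzantine nodes in $\eta_G$. Crucially, this argument relies on the paper's assignment rule -- at each step, schedule the candidate VDFs whose last unit finishes \emph{earliest in $\eta_G$} -- i.e., the assignment is driven by the actual Get() timings, not by coffer dependencies. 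Your rule ($s'_m$ is the earliest step $\ge \max_{m'\in\mathit{coffer}(m)} s'_{m'}$ with spare capacity) is a different greedy, and it is not clear the counting argument survives the substitution; you would at minimum need to relate $\max_{m'} s'_{m'}$ back to Get() timings in $\eta_G$ and redo the slack analysis.

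Second, your treatment of the no-recursive-peek condition is incorrect. The oracle's restriction is not a cycle condition: it forbids peeking at $\mathit{vdf}_\gamma$ whenever $\gamma$'s coffer itself contains a result obtained by peeking whose computation has not yet finished -- i.e., it forbids same-step dependency chains of depth two, regardless of the order in which the peeks are issued. Topological ordering therefore does not help. Under your assignment rule, a chain $m_1 \in \mathit{coffer}(m_2)$, $m_2 \in \mathit{coffer}(m_3)$ can legitimately receive $s'_{m_1} = s'_{m_2} = s'_{m_3}$, and the resulting execution would violate GM+. The paper rules this out with its Claim~4: because $m_2$'s VDF occupies $K$ ticks of $\eta_G$ strictly between the end of $m_1$'s VDF and the start of $m_3$'s, the timing-based assignment necessarily places $m_1$ and $m_3$ in distinct steps. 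You would need either to switch to the paper's timing-based assignment or to add (and justify) an explicit constraint $s'_{m_3} > s'_{m_1}$ for every such two-link chain.
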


\newcommand{\textBetweenlemmareorgAnddefMAP}{
While peeking solves the challenge with reorganizing Byzantine behavior, it complicates our second mapping. The ability to peek granted to Byzantine nodes in GM+ has no equivalent in Sandglass -- it simply cannot be reduced to the effects of network delays or to the  behavior of defective nodes. Therefore, we weaken~SM so that defective nodes can benefit from a capability equivalent to peeking.

We do so by introducing SM+, a model that is identical to~SM, except for the following change: defective nodes at step~$s$ can receive any message~$m$ sent by a defective node no later than~$s$ -- as opposed to~$(s-1)$ in~SM~--~as long as~$m$ does not contain in its coffer a message that is sent at~$s$. Note that allowing defective nodes to receive in a given step a message $m$ sent by defective nodes within that very step maps to allowing Byzantine nodes to peek at a message whose~$vdf$ will be finished by Byzantine nodes within the same step; and the constraint that~$m$ shouldn't contain in its coffer other messages sent in the same step, maps to the constraint that Byzantine nodes cannot peek at messages whose coffer also contains a peek result from the same step.

One might rightfully ask: But the plan to leverage the correctness of Sandglass in~SM? Indeed, but fortunately, {\em Sandglass still guarantees deterministic agreement and termination with probability~1 under the SM+ model} (\S\ref{sec:appendix-SM-proof}). Thus, it is suitable to map a Gorilla execution in~GM+ to a Sandglass execution in~SM+, and orient our proof by contradiction with respect to the correctness of Sandglass in~SM+.

Formally, we specify our second mapping as follows. We map messages by simply translating the data structure:
} 
\textBetweenlemmareorgAnddefMAP
\begin{restatable}{definition}{defMAP}\label{def:MAP}
Given a message $m$ in the Gorilla protocol, the mapping $\textsc{Mapm}$ produces a message in the Sandglass protocol as follows
\begin{enumerate}
\item Omit the~\textit{vdf} and the nonce from $m$.
\item Let~$p_i$ be the node that sends~$m$. Include $p_i$ as a field in $m$.
\item If $m$ is the $j$-th message sent by $p_i$, add a field $\textit{uid} = j$ to $m$.
\item Repeat the steps above for all of the messages in $m$'s coffer.
\end{enumerate}
Denote the result by $\hat{m} = \textsc{Mapm}(m)$. 
We say~$m$ and~$\hat{m}$ are \emph{equivalent}.
Furthermore, with a slight abuse of notation, we apply $\textsc{Mapm}$ to a set of messages as well, \emph{i.e.}, if $\mathcal{M}$ is a set of messages, and we map each message $m\in \mathcal{M}$, we obtain the message set $\textsc{Mapm}(\mathcal{M})$.
\end{restatable}

Thus, we can define the execution mapping:
\begin{restatable}{definition}{defeqLVS}\label{def:eq-L-VS}
	 Consider an execution $\eta_G^+$ in~GM+ and an execution $\eta_S^+$ in~SM+.
     We say~$\eta_S^+$ is an interpretation of~$\eta_G^+$ iff the following conditions are satisfied:
	\begin{enumerate}
		\item \label{def:equi:1} The nodes in $\eta_G^+$ are in a one-to-one correspondence with the nodes in $\eta_S^+$.
		For every node $p$ in $\eta_G^+$, we denote the corresponding node in $\eta_S^+$ with $\hat{p}$.
		\item \label{def:equi:2} Nodes~$p$ and $\hat{p}$ join and leave at the same steps in $\eta_G^+$ and $\eta_S^+$, respectively.
		Furthermore, their initial values are the same.
          \item \label{def:equi:3} If $p$ is a Byzantine node, then $\hat{p}$ is defective in~SM+; otherwise, $\hat{p}$ is a good node in~SM+. 
        \item \label{def:equi:4} Node~$\hat{p}$ sends~$\hat{m}$ at step~$s$ in~$\eta_S^+$ iff~$p$ generates a message~$m$ in~$\eta_G^+$ at step~$s$. 
		Note that in~$\eta_G^+$, correct nodes send their messages to all as soon as they are generated, while Byzantine nodes may only send their messages to a subset of nodes once their messages are generated.
		\item \label{def:equi:5} 
        Node~$\hat{p}$ receives~$\hat{m}$ at step~$s$ in~$\eta_S^+$ iff~$p$ receives~$m$ at step~$s$ in~$\eta_G^+$.		
	\end{enumerate}
\end{restatable}

\begin{restatable}{lemma}{lemmainterpret}\label{lemma:interpret}
    Consider any execution~$\eta_G$ in~GM, and an execution~$\eta_G^+$ in~GM+ is a reorg of~$\eta_G$.
    There exists a mapping \textsc{Interpret} that maps~$\eta_G^+$ to an execution~$\eta_S^+$ in~SM+, denoted as~$\eta_S^+ = \textsc{Interpret}(\eta_G^+)$, such that~$\eta_S^+$ is an interpretation of~$\eta_G^+$.
\end{restatable}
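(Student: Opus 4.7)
The plan is to build \textsc{Interpret} as the identity correspondence on nodes, extended to messages via \textsc{Mapm}. For each node $p$ in $\eta_G^+$ I take $\hat p$ in $\eta_S^+$ to share $p$'s identifier, initial value, and join/leave steps; I declare $\hat p$ good if $p$ is correct and defective if $p$ is Byzantine. Each message $m$ generated by $p$ at step $s$ becomes a message $\hat m = \textsc{Mapm}(m)$ generated by $\hat p$ at step $s$, and $\hat p'$ receives $\hat m$ at step $s'$ iff $p'$ receives $m$ at step $s'$. Conditions~(\ref{def:equi:1})--(\ref{def:equi:3}) of Definition~\ref{def:eq-L-VS} hold by construction, while conditions~(\ref{def:equi:4})--(\ref{def:equi:5}) follow because \textsc{Mapm} only strips \textit{vdf} and the nonce and records sender/uid, so generation and reception events line up step by step.

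The substantive work is certifying that $\eta_S^+$ really is a valid SM+ execution. For the good side, I argue that each $\hat p$ executes Sandglass faithfully. The three orange deviations in Algorithm~\ref{protocol:gorilla} all manipulate \textit{vdf} or the nonce, which \textsc{Mapm} discards, with the sole exception of line~\ref{gorilla:vdfValueUpdate}, where $\textit{vdf}\bmod 2$ is used as a random tie-break; I treat that bit as Sandglass's own coin flip, so $\hat p$'s tie-break matches $p$'s. All remaining lines---coffer assembly, round advancement, \varpriorityCounter and \varpriority maintenance, and the decision check---are textually identical across the two protocols, so $\hat p$'s state trajectory mirrors $p$'s at every step.

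For the defective side I lean on Reorg: \textsc{Reorg}-\mbox{$2$} ensures each Byzantine node has a one-step lifetime, receives only valid messages at the step's first tick, and sends only valid messages at the step's last tick, all of which translate directly to SM+-permissible behavior of $\hat p$ once \textit{vdf} and nonce are stripped. Deliveries between two correct nodes inherit the one-step synchronous bound required by SM of good-to-good traffic, and deliveries between a correct and a Byzantine node map to good/defective traffic, for which SM+ imposes no extra constraint.

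The main obstacle, and where the SM+ relaxation becomes indispensable, is the delivery of messages exchanged between Byzantine nodes within the same step. Suppose $q$ generates a message $m$ at step $s$ whose coffer contains a message $m'$ also generated at step $s$ by another Byzantine node $q'$. In GM+, $q$ can only assemble such an $m$ by calling Peek within step $s$ on inputs derived from $\textit{vdf}_{m'}$, and the two peek preconditions imply (a)~$m'$'s VDF calculation ends no later than step $s$, and (b)~no peek in the transitive construction of $m$ uses a result whose calculation is still unfinished. Under \textsc{Mapm}, (a) becomes the SM+ allowance that $\hat m'$ be delivered to $\hat q$ within step $s$, while (b)---established by a short induction on the nesting depth of peeks inside step $s$---becomes the SM+ requirement that $\hat m'$'s coffer contain no message sent at step $s$. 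With this in hand, every delivery in $\eta_S^+$ respects SM+'s rules, and the construction is complete.
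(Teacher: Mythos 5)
Your proposal is correct and follows essentially the same route as the paper: identity correspondence on nodes, \textsc{Mapm} on messages, the $\textit{vdf}\bmod 2$ tie-break reinterpreted as Sandglass's coin flip, and the two peek preconditions translated into SM+'s same-step delivery rule for defective-to-defective messages. The paper merely packages the same ingredients as an explicit induction on steps (with a case analysis on whether the Byzantine sender used a peek result, and on whether $\varpriorityCounter_m=0$ to justify that a benign defective node could legitimately emit $\hat m$), details your sketch leaves implicit but correctly attributes to message validity and \textsc{Reorg}-$2$.
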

Finally, for our proof by contradiction to work, we have to show that Sandglass is correct in SM+.
The proof is deferred to \S\ref{sec:sm:proof}.
\begin{restatable}{theorem}{theoremSM}\label{theorem:SM+}
Sandglass satisfies agreement and validity deterministically and termination with probability~1 in SM+.
\end{restatable}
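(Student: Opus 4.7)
The plan is to lift the original Sandglass correctness proof to SM+ by locating every place where the proof relies on the SM timing assumption that a defective-to-defective message received at step $s$ was generated no later than step $s-1$, and arguing that each such use still goes through under the weaker SM+ assumption. The central observation I would rely on is the acyclicity constraint built into SM+: a defective message $m$ received at step $s$ may have been generated at step $s$, but then its coffer contains only messages generated strictly before $s$. Consequently, the set of defective messages in existence at the end of step $s$ still admits a well-founded order (refining the coffer-ancestor order), and, crucially, the per-step budget of new defective messages is unchanged, since peeking does not let a defective node manufacture additional messages---it only lets those messages reference one another sooner.

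First I would revisit the safety arguments of Sandglass. These are typically inductions on round number showing that once good nodes enter a round with a strongly supported value, no contradicting coalition of messages can be assembled afterwards. The SM timing assumption enters only to bound what coffers can contain by a given step; under SM+, the same bound is preserved if we replace ``generated by step $s-1$'' with ``generated by step $s$, with all strict ancestors generated by step $s-1$,'' which is exactly what the acyclicity constraint guarantees. I would check clause by clause that each counting argument---in particular the ones bounding how many valid messages for round $r$ defective nodes can exhibit by step $s$---depends only on the total number of defective activations up to and including $s$, not on when those messages were delivered relative to one another.

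Second I would verify that the liveness argument transfers. Sandglass's termination-with-probability-1 proof is essentially a random-walk analysis showing that good nodes, being a majority at every step, accumulate new messages strictly faster than the defective population, so the unanimity counter eventually reaches the decision threshold with probability one. Since in SM+ defective nodes still produce at most one new message per step per node, and the coin flips inside good nodes' local computation are independent of the peek mechanism, the drift and concentration estimates in the Sandglass proof apply verbatim. For validity, only good nodes matter, and the restriction of SM+ to good nodes coincides with the restriction of SM to good nodes, so nothing changes.

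The main obstacle will be auditing the Sandglass proof exhaustively to make sure no step implicitly uses strict-before delivery among defective nodes beyond what the acyclicity constraint can recover; the most delicate case will be the inductive base of the safety lemma, where one must rule out that defective nodes synthesize, within a single step, a long chain of cross-referencing messages that would have required several steps in SM. I would handle that case by observing that any such chain has depth at most the number of defective nodes active in that step (by the acyclicity constraint combined with one-message-per-node-per-step), and then reducing the scenario to an SM execution in which those same messages are spread across a matching prefix of earlier steps, which the original Sandglass proof already handles.
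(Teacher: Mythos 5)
Your proposal takes essentially the same route as the paper: the paper localizes the needed changes to exactly two results (Lemmas~5 and~11 of the Sandglass proof, the only places where the ``generated by step $s-1$'' assumption for defective-to-defective delivery is used), and in both cases the counting argument already bounds defective progress by the \emph{total} number of messages defective nodes generate up to and including the current step, so widening the delivery window by one step---which is all SM+ does---leaves the bounds intact; this is precisely your central observation that the per-step defective message budget is unchanged and that the counts are insensitive to delivery timing. One caveat: your fallback for the ``delicate case,'' namely re-spreading same-step cross-referencing messages over a matching prefix of earlier steps to recover an SM execution, would not work in general (earlier steps need not have enough defective slots to absorb them---this is exactly the obstacle of Figure~\ref{fig:wall}), but you never actually need that fallback, since the acyclicity constraint of SM+ caps same-step referencing at depth one and your primary counting argument already covers it.
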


\subsubsection{Safety}\label{sec:correctness-safety}
\newcommand{\textAtStartOfSafety}{
We prove that Gorilla satisfies Validity and Agreement.
The proofs follow the same pattern: assume a violation exists in some execution $\eta_G$ of Gorilla running in GM; map that execution to $\eta_G^+ = \textsc{Reorg}(\eta_G)$ in GM+; then,
map $\eta_G^+$ again to $\eta_S^+ = \textsc{Interpret}(\eta_G^+)$ in SM+; and, finally, rely on the fact that these mappings ensure that correct nodes in $\eta_G$ and good nodes in $\eta_S^+$ reach the same decisions in the same steps to derive a contradiction.
} 
\textAtStartOfSafety
This approach is made rigorous in following lemmas, proved in \S\ref{sec:appendix-safety}.
\begin{restatable}{lemma}{lemmadecisionreorg}\label{lemma:decision-reorg}
    Consider an arbitrary Gorilla execution~$\eta_G$, and~$\eta_G^+ = \textsc{Reorg}(\eta_G)$.
    If a correct node $p$ decides a value $v$ at step $s$ in $\eta_G$, then~$p$'s corresponding node~$p^+$ decides~$v$ at step $s$ in $\eta_G^+$.
\end{restatable}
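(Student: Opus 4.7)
The plan is to prove this by a straightforward induction on the step index, leveraging \textsc{Reorg}-1 as the main tool. Observe that Definition~\ref{def:eq-G-GP}(\textsc{Reorg}-1) already guarantees that $p$ and $p^+$ join and leave at the same ticks and send and receive identical messages at identical ticks; what remains to show is that a correct node's internal state (in particular, whether $\textit{priority}_i$ reaches $6\varR+4$ at step $s$) is a deterministic function of these externally observable events, so $p^+$ must decide $v$ at step $s$ exactly when $p$ does.

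Concretely, I would induct on the step $t$ in which $p$ and $p^+$ are active, and prove that at the end of step $t$ the variables $v_i,\textit{priority}_i,\textit{uCounter}_i,r_i,M_i,\textit{Rec}_i$ computed by $p$ coincide with those computed by $p^+$. The base case follows because Init is deterministic in $\textit{input}_i$, and $p$ and $p^+$ join at the same step with the same initial value by \textsc{Reorg}-1(i). For the inductive step, note that the update to $\textit{Rec}_i$ depends only on (a) the messages received this step---identical on both sides by \textsc{Reorg}-1(ii)---and (b) the oracle's Verify responses on the coffered messages; since Verify is a pure predicate on $(\textit{vdf},\gamma)$ and $\Omega^+$ differs from $\Omega$ only by the added Peek primitive (which correct nodes never invoke), Verify returns the same Boolean in both executions.

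Once $\textit{Rec}_i$ matches, the remainder of \textsc{step}---computing $r_{\max}$, resetting $M_i$, invoking Get to obtain $\textit{vdf}$, selecting the multiset $C$, the tie-breaking update $v_i\gets \textit{vdf}\bmod 2$, the priority and unanimity counter updates, and the decision test---is a deterministic function of $\textit{Rec}_i$, the previous local state, and the chosen nonce. The nonces and Get responses agree because \textsc{Reorg}-1(ii) requires that $p$ and $p^+$ broadcast \emph{the same} messages in every step (so the nonce fields coincide), and the oracle memoizes identical Get queries. Consequently, when $p$'s decision guard fires at step $s$ with value $v$, so does $p^+$'s.

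The main obstacle, modest but worth flagging, is the apparent circularity between ``$p^+$ picks the same nonce as $p$'' and ``$p^+$ broadcasts the same message as $p$''. The resolution is that \textsc{Reorg}-1(ii) is a \emph{hypothesis} supplied by Lemma~\ref{lemma:reorg}---not something the induction must derive---so we may use the identity of the broadcast messages to pin down the nonces, and then use the induction to check that the resulting internal evolution is actually consistent with those messages being broadcast. Once this consistency is verified step by step, the decision test $\textit{priority}_i\ge 6\varR+4$ fires at step $s$ on both sides with identical $v_i = v$, yielding the lemma.
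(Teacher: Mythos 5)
Your proposal is correct and rests on the same foundation as the paper's proof: Lemma~\ref{lemma:reorg} (\textsc{Reorg}-1) guarantees that $p^+$ sends and receives exactly the same messages at the same steps as $p$, and the decision is then forced because the protocol's decision logic is determined by that observable behavior. The paper takes a shortcut instead of your step-by-step state induction---it reads the decision directly off the message $m$ that $p$ broadcasts right after deciding, noting that $s$ is the first step at which $\varR$ messages for round $r-1$ are collected and that $m$ carries $\varpriority \ge 6\varR+4$---but this is the same argument in compressed form, and your more explicit treatment of determinism (including the Verify/Get memoization and the nonce consistency) is a valid, if longer, way to justify the same conclusion.
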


\begin{restatable}{lemma}{lemmadecisioninterpret}\label{lemma:decision-interpret}
Consider any execution~$\eta_G$ in~GM. 
If an execution~$\eta_S^+$ in SM+ is an interpretation of an execution $\eta_G^+ = \textsc{Reorg}(\eta_G)$ in GM+, then the following statements hold:
\begin{enumerate}

	\item \label{lemma:equivalence:statement1} 
If a correct node $p$ decides a value $v$ at step $s$ in $\eta_G^+$, then the corresponding~$\hat{p}$, decides~$v$ at step $s$ in $\eta_S^+$.

\item \label{lemma:equivalence:statement2} 
Consider the first message~$m = (r,v,priority, \varpriorityCounter, M, nonce, \textit{vdf})$ that~$p$ generates for round~$r$. 
Let the step when~$m$ is generated be~$s$. 
If~$\varpriorityCounter$ is~0, then~$\hat{p}$ randomly chooses value~$v$ as the proposal value at step~$s$ in~$\eta_S^+$.
\end{enumerate}
\end{restatable}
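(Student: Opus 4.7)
The plan is to prove both statements by simultaneous strong induction on the step number $s$, maintaining the invariant $I(s)$: for every correct node $p$ in $\eta_G^+$ active at step $s$ and its corresponding good node $\hat{p}$ in $\eta_S^+$, the two nodes hold identical $r_i$, $v_i$, $priority_i$, $\varpriorityCounter_i$, and the coffer $M_i$ (respectively, the received buffer $Rec_i$) of $\hat{p}$ equals $\textsc{Mapm}$ applied to that of $p$. The base case is immediate: condition~$2$ of Definition~\ref{def:eq-L-VS} gives identical join steps and initial values, and \textsc{Init} fixes the remaining fields to the same constants.

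For the inductive step, assume $I(s-1)$. By condition~$5$ of Definition~\ref{def:eq-L-VS}, the set of messages $p$ receives at step $s$ is in one-to-one correspondence, under $\textsc{Mapm}$, with what $\hat{p}$ receives; and Reorg-$2$ of Definition~\ref{def:eq-G-GP} guarantees that every Gorilla message in flight is valid, so the filter on line~\ref{gorilla:isValid} discards nothing. Hence the $Rec_i$ sets, the round update (line~\ref{gorilla:enterNewRound}), and the fresh coffers all agree. The value, priority, and unanimity-counter updates (lines~\ref{gorilla:multiset}--\ref{gorilla:setPriority}) coincide with Sandglass except for Gorilla's tie-break $v_i = \textit{vdf} \bmod 2$ on line~\ref{gorilla:vdfValueUpdate}; here I invoke Statement~2 at step $s$ to get the same bit from $\hat{p}$'s random choice, so $I(s)$ propagates. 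Statement~1 then follows because the decision test on lines~\ref{gorilla:decideStart}--\ref{gorilla:decideEnd} depends only on $\varpriority_i$, which $I(s)$ keeps in sync. For Statement~2, when the first message $m$ of round $r$ has $\varpriorityCounter = 0$, line~\ref{gorilla:resetPC} implies that the messages in $M_i(r_i-1)$ are not unanimous on $v_i$, so, under $I(s-1)$, Sandglass's matching logic also triggers its tie-breaking random choice; by the construction of $\textsc{Interpret}$, which couples $\hat{p}$'s random bit to the low-order bit of the corresponding $\textit{vdf}$ whenever Gorilla consults it, the outcome equals $v$.

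The main obstacle is making the coupling of Gorilla's VDF-derived randomness with Sandglass's abstract coin both well-defined and distribution-preserving. Well-definedness relies on the fact that in GM+ the $\textit{vdf}$ referenced in the tie-break is committed by the tick of use, thanks to the peek semantics together with Reorg-$3$, which ensure that every VDF visible to a correct node is fully determined at the moment it is read. Distribution preservation comes from $\Omega$'s modeling of VDF units as uniform random values, so that the coupled Sandglass coins retain the marginal distribution of genuine fair coins and the induced probability measure on $\eta_S^+$ remains consistent with the SM+ semantics of random choice. Once this coupling is in place, the remaining argument is mechanical, since Gorilla and Sandglass agree line by line outside the three orange-highlighted additions.
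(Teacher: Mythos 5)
Your proof is essentially sound, but it takes a noticeably heavier route than the paper's. The paper's proof is a short, direct argument: by conditions~\ref{def:equi:4} and~\ref{def:equi:5} of Definition~\ref{def:eq-L-VS}, $\hat{p}$ receives $\textsc{Mapm}$-equivalents of exactly the messages $p$ receives and sends $\hat{m}=\textsc{Mapm}(m)$ at exactly the step $p$ generates $m$; since $\textsc{Mapm}$ preserves the round, value, priority and unanimity counter, the step $s$ at which $p$ first accrues $\varR$ round-$(r-1)$ messages with $\varpriority \ge 6\varR+4$ is also the step at which $\hat{p}$ does, giving Statement~1, and for Statement~2 the fact that $\hat{p}$ \emph{sends} a first round-$r$ message carrying value $v$ with $\varpriorityCounter=0$ forces, by the Sandglass code, that its coin came up $v$. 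You instead rebuild an internal state-equivalence invariant by induction over steps; this is correct but largely re-derives the content of Lemma~\ref{lemma:interpret} (whose proof is precisely such an induction), whereas the present lemma is meant to be read off from the interpretation's defining conditions, which you are already entitled to assume.

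Two points deserve care in your version. First, you resolve the coin coupling "by the construction of $\textsc{Interpret}$, which couples $\hat{p}$'s random bit to the low-order bit of the corresponding $\textit{vdf}$." The lemma's hypothesis is only that $\eta_S^+$ \emph{is an interpretation} per Definition~\ref{def:eq-L-VS}, not that it is the particular execution built in Lemma~\ref{lemma:interpret}; the coupling you invoke is not part of that definition. The gap is closable without it: condition~\ref{def:equi:4} already fixes the value field of the message $\hat{p}$ sends to equal that of $m$, and since Sandglass's sent value is whatever line~\ref{sandglass:valueUpdateEnd} produced, the coin outcome is determined to be $v$ a posteriori --- this also dissolves the apparent circularity of invoking Statement~2 inside the induction that is supposed to prove it. Second, your step from $\varpriorityCounter=0$ (non-unanimity of $M_i(r_i-1)$, line~\ref{gorilla:resetPC}) to "the tie-breaking random choice is triggered" is not literally an implication, since the tie-break at line~\ref{gorilla:vdfValueUpdate} is governed by non-unanimity of the highest-priority multiset $C$ at line~\ref{gorilla:multiset}, not of all of $M_i(r_i-1)$; the paper's own proof elides the same distinction, so this is not a defect specific to your argument, but it is worth stating the condition you actually need.
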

$ $\\
We can now state and prove the safety guarantees.

\begin{restatable}{theorem}{thmgorillaAgr}
Gorilla satisfies agreement in~GM.
\end{restatable}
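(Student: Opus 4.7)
The plan is to prove agreement by contradiction, chaining the \textsc{Reorg} and \textsc{Interpret} mappings together with the correctness of Sandglass in SM+. Suppose for contradiction that some execution $\eta_G$ of Gorilla in GM violates agreement: there exist correct nodes $p$ and $q$ that decide values $v$ and $v'$ with $v \ne v'$, at steps $s$ and $s'$ respectively.

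First, I would apply Lemma~\ref{lemma:reorg} to obtain $\eta_G^+ = \textsc{Reorg}(\eta_G)$ in GM+. By the first clause of Definition~\ref{def:eq-G-GP}, the correct nodes $p$ and $q$ have correct counterparts $p^+$ and $q^+$ in $\eta_G^+$ that join and leave at the same steps and receive and send the same messages at the same ticks. Lemma~\ref{lemma:decision-reorg}, applied to $p$ and then to $q$, therefore gives that $p^+$ decides $v$ at step $s$ and $q^+$ decides $v'$ at step $s'$ in $\eta_G^+$.

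Next, I would apply Lemma~\ref{lemma:interpret} to produce an interpretation $\eta_S^+ = \textsc{Interpret}(\eta_G^+)$ in SM+. By item~\ref{def:equi:3} of Definition~\ref{def:eq-L-VS}, the counterparts $\hat{p}^+$ and $\hat{q}^+$ of $p^+$ and $q^+$ are good nodes in $\eta_S^+$. Applying Statement~\ref{lemma:equivalence:statement1} of Lemma~\ref{lemma:decision-interpret} to each, I conclude that $\hat{p}^+$ decides $v$ at step $s$ and $\hat{q}^+$ decides $v'$ at step $s'$ in $\eta_S^+$. But this contradicts Theorem~\ref{theorem:SM+}, which asserts that Sandglass deterministically satisfies agreement in SM+. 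Hence no such violating $\eta_G$ exists, and Gorilla satisfies agreement in GM.

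The logical structure of the proof is a direct composition: all the technical effort has been pushed into the construction of \textsc{Reorg}, the construction of \textsc{Interpret}, and the SM+-correctness of Sandglass. The one thing I would want to double-check is that the correct-node decision is preserved intact along the entire chain of mappings, since any mismatch in the step at which $\hat{p}^+$ or $\hat{q}^+$ is said to decide would break the contradiction; this is precisely the gap that Lemma~\ref{lemma:decision-reorg} and Statement~\ref{lemma:equivalence:statement1} of Lemma~\ref{lemma:decision-interpret} were designed to close, so I expect no real obstacle at this final assembly step.
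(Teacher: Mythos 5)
Your proposal is correct and follows essentially the same route as the paper: assume a violating execution, push the two conflicting decisions through \textsc{Reorg} (via Lemma~\ref{lemma:decision-reorg}) and \textsc{Interpret} (via Lemma~\ref{lemma:decision-interpret}), and contradict the agreement of Sandglass in SM+ (Theorem~\ref{theorem:SM+}). No gaps; the final assembly step works exactly as you anticipate.
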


\begin{restatable}{proof}{proofthmgorillaAgr}
By contradiction, assume that there exists a Gorilla execution~$\eta_G$ in~GM that violates agreement.
This means that there exist two correct nodes $p_1$ and $p_2$, two steps~$s_1$ and~$s_2$, and two values~$v_1\neq v_2$ such that~$p_1$ decides~$v_1$ at~$s_1$ and~$p_2$ decides~$v_2$ at~$s_2$.
Consider~$\eta_G^+ = \textsc{Reorg}(\eta_G)$.
According to Lemma~\ref{lemma:decision-reorg},~$p_1^+$ decides~$v_1$ at~$s_1$ and~$p_2^+$ decides~$v_2$ at~$s_2$, in~$\eta_G^+$.
Now, consider~$\eta_S^+ = \textsc{Interpret}(\eta_G^+)$.
According to Lemma~\ref{lemma:decision-interpret},~$\hat{p}_1^+$ decides~$v_1$ at~$s_1$ and~$\hat{p}_2^+$ decides~$v_2$ at~$s_2$, in~$\eta_S^+$.
However, this contradicts the fact Sandglass satisfies agreement in SM+ (Theorem~\ref{theorem:SM+}).
Therefore, Gorilla satisfies agreement in~GM.
\end{restatable}

\begin{restatable}{theorem}{thmgorillaVad}
Gorilla satisfies validity in~GM.
\end{restatable}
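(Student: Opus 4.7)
The plan is to reuse verbatim the two-step reduction from the agreement proof, so the validity proof will be essentially structurally identical. I would start by assuming for contradiction that some execution $\eta_G$ of Gorilla in~GM violates validity: all nodes joining $\eta_G$ have the same initial value~$v$, there are no Byzantine nodes in~$\eta_G$, and yet some correct node~$p$ calls $\cmdDecide_p(v')$ with $v' \neq v$ at some step~$s$.

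Next, I would apply \textsc{Reorg} to obtain $\eta_G^+ = \textsc{Reorg}(\eta_G)$ in~GM+. Because $\eta_G$ contains no Byzantine nodes, conditions \textsc{Reorg}-$2$ and \textsc{Reorg}-$3$ are vacuously satisfied, and by \textsc{Reorg}-$1$ the correct nodes of $\eta_G^+$ stand in one-to-one correspondence with those of $\eta_G$, joining and leaving at the same steps and sending/receiving the same messages. In particular, $\eta_G^+$ has no Byzantine nodes and each node still has initial value~$v$. Lemma~\ref{lemma:decision-reorg} then gives that $p^+$ decides~$v'$ at step~$s$ in $\eta_G^+$. Then I would apply \textsc{Interpret} to obtain $\eta_S^+ = \textsc{Interpret}(\eta_G^+)$ in~SM+. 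By items~\ref{def:equi:2} and~\ref{def:equi:3} of Definition~\ref{def:eq-L-VS}, every node $\hat{p}$ in $\eta_S^+$ is good (since all nodes in $\eta_G^+$ are correct) and has the same initial value as its counterpart in $\eta_G^+$, so all nodes in $\eta_S^+$ are good and have initial value~$v$. By Lemma~\ref{lemma:decision-interpret}, $\hat{p}^+$ decides~$v'$ at step~$s$ in $\eta_S^+$, contradicting Sandglass's validity in~SM+ (Theorem~\ref{theorem:SM+}).

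I do not expect a serious obstacle here: since the hypothesis of validity specifically rules out Byzantine nodes, the potentially delicate part of the mapping (turning Byzantine behavior into step-aligned peeking and then into defective behavior) becomes vacuous, and only the easy ``correct-to-good'' half of each mapping does any work. The only thing worth double-checking is that the definitions of \textsc{Reorg} and \textsc{Interpret} do not spuriously introduce Byzantine/defective nodes when none are present in the source execution; this follows directly from the one-to-one correspondences asserted in \textsc{Reorg}-$1$ and in items~\ref{def:equi:1}--\ref{def:equi:3} of Definition~\ref{def:eq-L-VS}.
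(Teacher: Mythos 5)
Your overall strategy is exactly the paper's: assume a violation, push it through the two mappings, and contradict Sandglass's validity in SM+. The one place where your argument has a soft spot is the claim that $\eta_G^+ = \textsc{Reorg}(\eta_G)$ contains no Byzantine nodes, which you justify by appeal to \textsc{Reorg}-$1$. That condition only asserts a correspondence for \emph{correct} nodes; nothing in Definition~\ref{def:eq-G-GP} forbids a reorg from containing Byzantine nodes that have no counterpart in $\eta_G$, and in fact the concrete mapping constructed in the proof of Lemma~\ref{lemma:reorg} unconditionally adds $(c-1)$ Byzantine nodes at every step with $c$ correct nodes, even when $\eta_G$ has none. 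Those nodes are silent (the set $\mathcal{M}_B$ is empty), but they are formally present, so \textsc{Interpret} would turn them into defective nodes in $\eta_S^+$ -- and then the hypothesis of Sandglass's validity (which, mirroring Definition~2, excludes defective nodes) is no longer satisfied, so the contradiction you want does not follow from the theorem as invoked.

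The fix is the one the paper uses: since GM+ extends GM and $\eta_G$ has no Byzantine behavior to reorganize, $\eta_G$ itself, viewed as a GM+ execution, trivially satisfies \textsc{Reorg}-$1$ through \textsc{Reorg}-$3$, so you may take $\eta_G^+ = \eta_G$ rather than the output of the constructed \textsc{Reorg} mapping. With that choice $\eta_G^+$ genuinely has no Byzantine nodes, Definition~\ref{def:eq-L-VS} then guarantees $\eta_S^+$ has no defective nodes and that initial values are preserved, and the rest of your argument (Lemmas~\ref{lemma:decision-reorg} and~\ref{lemma:decision-interpret}, then Theorem~\ref{theorem:SM+}) goes through verbatim. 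So the gap is real but entirely local and easily repaired; everything else matches the paper's proof.
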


\begin{restatable}{proof}{proofthmgorillaVad}
By contradiction, assume that there exists a Gorilla execution~$\eta_G$, such that ($i$) all nodes that ever join the system have initial value~$v$; ($ii$) there are no Byzantine nodes; and ($iii$) a correct node~$p$ decides~$v' \neq v$.

Since~GM+ is an extension of~GM, $\eta_G$ conforms to~GM+.
According to Definition~\ref{def:eq-G-GP},~$\eta_G^+ = \eta_G$ in~GM+ is trivially a reorg of~$\eta_G$.
Consider~$\eta_S^+ = \textsc{Interpret}(\eta_G^+)$.

By the construction of the \textsc{Interpret} mapping (in Lemma~\ref{lemma:interpret}), good nodes in~$\eta_S^+$ have the same initial values as their corresponding correct nodes in~$\eta_G$.
Furthermore, since there are no Byzantine nodes in~$\eta_G^+$, there are no defective nodes in~$\eta_S^+$ by Definition~\ref{def:eq-L-VS}.
Therefore, by Validity of Sandglass in SM+ (Theorem~\ref{theorem:SM+}), no good node decides~$v' \neq v$.
However, by Lemma~\ref{lemma:decision-reorg} and Lemma~\ref{lemma:decision-interpret},~$p$ decides~$v' \neq v$, which leads to a contradiction.
Therefore, Gorilla satisfies validity in~GM.
\end{restatable}


\subsection{Liveness}
\newcommand{\textAtStartOfLiveness}{
Similar to the safety proof, the liveness proof proceeds by contradiction: it starts with a liveness violation in Gorilla, and maps it to a liveness violation in Sandglass.

Formalizing the notion of violating termination with probability~1 requires specifying the probability distribution used to characterize the probability of termination.
To do so, we first have to fix all sources of non-determinism~\cite{aspnes-randomized, foundations-probabilistic-programming,kaminski}.
For our purposes, non-determinism in~GM and~GM+ stems from correct nodes, Byzantine nodes and their behavior; in~SM+, it stems from good nodes, defective nodes and the scheduler.

For correct, good, and defective nodes, non-determinism arises from the joining/leaving schedule and the initial value of each joining node.
For Byzantine nodes in GM and GM+, fixing non-determinism means fixing their action \emph{strategy} according to the current history of an execution.
Similarly, fixing the scheduler's non-determinism means specifying the timing of message deliveries and the occurrence of benign failures, based on the current history.
We, therefore, define non-determinism formally in terms of  an environment and a strategy.

To this end, we introduce the notion of a \emph{message history}, and define what it means for a set of messages exchanged in a given step to be \emph{compatible} with the  message history that precedes them.
} 
\textAtStartOfLiveness

\begin{restatable}{definition}{defMesHis}
    For any given execution in~GM and~GM+ (resp.,~SM+), and any step~$s$, the \emph{message history up to~$s$},~$\mathcal{MH}_s$, is the set of~$\langle m, p, s'\rangle$ triples such that~$p$ is a correct node (resp., good node) and~$p$ receives~$m$ at~$s'\leq s$.
\end{restatable}
\begin{restatable}{definition}{defCon}
    We say a set~$\mathcal{MP}_{s+1}$ of~$\langle m, p, s+1\rangle$ triples is \emph{compatible} with a message history up to~$s$,~$\mathcal{MH}_s$, if there exists an execution such that for any~$\langle m, p, s+1\rangle \in \mathcal{MP}_{s+1}$, the correct node (resp., good node)~$p$ receives~$m$ at step~$(s+1)$.
\end{restatable}

\begin{restatable}{definition}{defEnvG}
An environment~$\mathcal{E}$ in~GM and~GM+ (resp.,~SM+) is a fixed joining/leaving schedule and fixed initial value schedule for correct nodes (resp., good and defective nodes).
\end{restatable}

\begin{restatable}{definition}{defStrategyB}
	Given an environment $\mathcal{E}$, a strategy $\Theta_{\mathcal{E}}$ for the Byzantine nodes (resp., scheduler) in~GM and~GM+ (resp.,~SM+) is a function that takes the message history~$\mathcal{MH}_s$ up to a given step~$s$ as the input, and outputs a set $\mathcal{MP}_{s+1}$ that is compatible with~$\mathcal{MH}_s$.
\end{restatable}

\newcommand{\textRandomizedStrategyAli}{
Before proceeding, there is one additional point to address.
The most general way of eliminating non-determinism is to introduce randomness through a fixed probability distribution over the available options. However, the following lemma, proved in  \S\ref{sec:appendix-liveness}, establishes that Byzantine nodes do not benefit from employing such a randomized strategy.
} 
\textRandomizedStrategyAli
\begin{restatable}{lemma}{lemmastrategyRD}
For any environment~$\mathcal{E}$, if there exists a \emph{randomized} Byzantine strategy for Gorilla that achieves
a positive non-termination probability, then there exists a \emph{deterministic} Byzantine
strategy for Gorilla that achieves a positive non-termination probability.
\end{restatable}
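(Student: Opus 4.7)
The plan is to express the randomized Byzantine strategy as a mixture of deterministic ones and then apply a simple averaging argument: if the expectation of the non-termination probability over the mixture is strictly positive, at least one deterministic strategy in the mixture must achieve strictly positive non-termination probability. This is essentially a Kuhn-style argument on behavior strategies, specialized to our two-source randomness setting (the Byzantine coins and the VDF oracle).

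Concretely, fix the environment $\mathcal{E}$ and suppose $\Theta^R_\mathcal{E}$ is a randomized Byzantine strategy with $\Pr[\mathcal{NT}] = p > 0$, where $\mathcal{NT}$ denotes the event that termination is violated. I would model the coins consumed by $\Theta^R_\mathcal{E}$ by a single infinite string $\omega \in \{0,1\}^\infty$ of independent fair bits under the product measure $\mu$. For any fixed $\omega$, the randomized strategy collapses into a deterministic strategy $\Theta^D_{\mathcal{E},\omega}$: on any input history $\mathcal{MH}_s$, it reads from $\omega$ the bits that $\Theta^R_\mathcal{E}$ would have drawn and deterministically emits the resulting compatible set $\mathcal{MP}_{s+1}$. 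This is a deterministic Byzantine strategy in the sense of the strategy definition above.

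The remaining randomness driving the execution---the responses of the VDF oracle---is independent of $\omega$. Letting $\pi(\omega) := \Pr[\mathcal{NT} \mid \Theta^D_{\mathcal{E},\omega}]$, where this probability is taken over the oracle alone, the tower property gives $p = \int \pi(\omega)\, d\mu(\omega) > 0$. Since $\pi \ge 0$, the set $\{\omega : \pi(\omega) > 0\}$ has positive $\mu$-measure and is in particular nonempty, so there is some $\omega^\ast$ with $\Pr[\mathcal{NT} \mid \Theta^D_{\mathcal{E},\omega^\ast}] > 0$. The deterministic strategy $\Theta^D_{\mathcal{E},\omega^\ast}$ is then the witness required by the lemma.

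The principal obstacle I expect is the measurability bookkeeping required to invoke the tower property: one must check that $\mathcal{NT}$ is measurable and that $\omega \mapsto \pi(\omega)$ is measurable. Both follow from the fact that any execution unfolds over countably many steps, each step's outcome depends on only finitely many oracle queries and finitely many bits of $\omega$, and $\mathcal{NT}$ is a countable union over steps $s$ of the event ``some correct node is still undecided after step $s$.'' These are cylinder-style events in the joint probability space, so standard monotone-class or $\pi$-$\lambda$ arguments suffice. The reasoning is conceptually routine, but it is the one part of the proof that will need more than a line.
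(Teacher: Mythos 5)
Your proposal is correct, but it takes a genuinely different route from the paper. The paper derandomizes \emph{locally and inductively}: it considers the first randomized decision point, writes $P(NT)=\sum_i P(A_i)P(NT\mid A_i)$ over the countable set of available actions, fixes some $A_i$ with $P(NT\mid A_i)>0$, and then repeats this greedy selection at every subsequent decision point along every prefix $\phi$ with $P(NT\mid\phi)>0$, ``carving'' a deterministic strategy out of $\Theta$. You instead derandomize \emph{globally}: you pre-sample all of the adversary's coins as one string $\omega$, view the behavioral strategy as a Kuhn-style mixture of pure strategies $\Theta^D_{\mathcal{E},\omega}$, and apply the tower property $p=\int\pi(\omega)\,d\mu(\omega)>0$ to extract a single $\omega^\ast$ with $\pi(\omega^\ast)>0$. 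Each approach has a cost the other avoids. The paper's argument uses only the law of total probability over a countable action set, but its final step (``repeating this process'') quietly requires justifying that the \emph{limit} deterministic strategy -- defined over infinitely many decision points spread across all coin-toss branches -- still has positive non-termination probability; that passage to the limit is exactly the part your single application of Fubini handles cleanly. Conversely, your argument must pay for (i) the measurability bookkeeping you already flag, and (ii) the implicit assumption that the space of message histories is countable, so that indexing the pre-sampled bits by history keeps $\omega$ in $\{0,1\}^\infty$ and the pure strategy $\Theta^D_{\mathcal{E},\omega}$ remains a well-defined function of $\mathcal{MH}_s$ alone; both are routine here (messages and VDF outputs are finite objects, and the adversary has perfect recall), but they deserve the explicit mention you give them. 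Net: your proof is sound and, if anything, more rigorous about the limiting step than the paper's.
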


\newcommand{\textAfterDefEnv}{

Since the output~$vdf$ of a call to the VDF oracle is a random number, the~$(\textit{vdf}\mod 2)$ operation in line~\ref{gorilla:vdfValueUpdate} of Gorilla is equivalent to tossing an unbiased coin.
Given a strategy~$\Theta_{\mathcal{E}}$,\footnote{When it is clear from the context, we will omit the environment from the subscript of the strategy.} the nodes might observe different coin tosses as the execution proceeds; thus, the strategy specifies the action of the Byzantine nodes for all possible coin toss outcomes. The scheduler's strategy in SM+ is similarly specified for all coin toss outcomes.
Therefore, once a strategy is determined, it admits a \emph{set} of different executions based on the coin toss outcomes; we denote it by~$H_{\Theta}$.
Specifically, a strategy determines an action for each outcome of any coin toss.

Given a strategy~$\Theta$, we can define a probability distribution~$P_{H_{\Theta}}$ over~$H_\Theta$.
For each execution~$\eta\in H_\Theta$, there exists a unique string of zeros and ones, representing the coin tosses observed during~$\eta$.
Denote this bijective correspondence by~$\textsc{Coins}:H_\Theta\rightarrow\{0,1\}^*\cup\{0,1\}^\infty$, and the probability distribution on the coin toss strings in~$\textsc{Coins}(H_\Theta)$ by~$\tilde{P}_{H_{\Theta}}$. 
For every event~$E\subset H_\Theta$, if ~$\textsc{Coins}(E)$ is measurable in~$\textsc{Coins}(H_\Theta)$, then~$\tilde{P}_{H_{\Theta}}(\textsc{Coins}(E))$ is well-defined; thus,~$P_{H_{\Theta}}(E)$ is also well-defined and $P_{H_{\Theta}}(E)=\tilde{P}_{H_{\Theta}}(\textsc{Coins}(E))$.
We denote~$P_{H_{\Theta}}$ as the probability distribution induced over~$H_\Theta$ by its coin tosses.

Equipped with these definitions, we can formally define termination with probability~1.}
\textAfterDefEnv 
\begin{restatable}{definition}{defTermProbOne}
    The Gorilla protocol terminates with probability~1 iff for every environment~$\mathcal{E}$ and every Byzantine strategy~$\Theta$ based on~$\mathcal{E}$, the
    probability of the termination event~$T$ in~$H_{\Theta}$, \emph{i.e.},~$P_{H_{\Theta}}(T)$, is equal to 1.
\end{restatable}
\newcommand{\textAfterdefTermProbOne}{
This definition gives us the recipe for proving by contradiction that Gorilla terminates with probability 1. We first assume there exists a Byzantine strategy~$\Theta$ that achieves a non-zero non-termination probability, and map this strategy through the~$\textsc{Reorg}$ and~$\textsc{Interpret}$ mappings to a scheduler strategy~$\Lambda$ that achieves a non-zero non-termination probability in~SM+.
However,~$\Lambda$ cannot exist, as the Sandglass protocol terminates with probability~1 in~SM+~(Theorem~\ref{theorem:SM+}).
} 
\textAfterdefTermProbOne

\begin{restatable}{lemma}{lemmareorgstrategy}\label{lemma:reorg-strategy}
	If there exists an environment~$\mathcal{E}$ and a Byzantine strategy~$\Theta_{\mathcal{E}}$ in~GM that achieves a positive non-termination probability,
	then there exists an environment~$\mathcal{E}'$ and a Byzantine strategy~$\Psi_{\mathcal{E}'}$ in~GM+ that also achieves a positive non-termination probability.
\end{restatable}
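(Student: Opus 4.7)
The plan is to lift the deterministic execution mapping $\textsc{Reorg}$ of Lemma~\ref{lemma:reorg} to a correspondence between Byzantine strategies in GM and GM+, and then use that correspondence to transport the non-termination event while preserving its measure. I would take $\mathcal{E}' = \mathcal{E}$: since the environment only fixes the joining/leaving schedule and initial values of correct nodes, and \textsc{Reorg}-1 guarantees that correct nodes behave identically in $\eta_G$ and $\eta_G^+ = \textsc{Reorg}(\eta_G)$, reusing $\mathcal{E}$ is consistent.

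Next, I would construct $\Psi_{\mathcal{E}'}$ from $\Theta_{\mathcal{E}}$ by "simulating $\Theta$ and reorganizing on the fly." Concretely, for each coin toss sequence $c$, let $\eta_G(c) \in H_{\Theta}$ be the unique execution it induces under $\Theta$, and let $\eta_G^+(c) = \textsc{Reorg}(\eta_G(c))$. Define $\Psi_{\mathcal{E}'}$ so that, on a message history $\mathcal{MH}_s$ produced by coin tosses consistent with some $c$, it outputs the $\langle m, p, s{+}1 \rangle$ triples dictated by $\eta_G^+(c)$ at step $s{+}1$. By \textsc{Reorg}-1 the correct nodes receive the same messages at the same ticks in $\eta_G$ and $\eta_G^+$, so $\mathcal{MH}_s$ is identical under $\Theta$ and $\Psi$, and this construction is self-consistent. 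The reason the construction lives in GM+ rather than GM is precisely that \textsc{Reorg} may relocate a Byzantine message to a step $s' \le s$ whose payload depends on VDF values that would not be available in GM at $s'$; the extra Peek API of $\Omega^+$ is exactly what makes such actions realizable.

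Having established the correspondence, transferring non-termination is easy. Fix any $c$ for which $\eta_G(c)$ is non-terminating, meaning no correct node ever decides. By \textsc{Reorg}-1 each correct node $p$ and its counterpart $p^+$ send and receive the same messages at the same ticks, hence walk through the same sequence of local states and, in particular, never execute a $\cmdDecide$ in $\eta_G^+(c)$ either. The sequence $c$ therefore also lies in the non-termination event of $H_{\Psi}$. Since the coupling between $H_{\Theta}$ and $H_{\Psi}$ is via the same underlying coin-toss sample space and the inclusion of non-terminating coin sequences preserves probability, we obtain $P_{H_{\Psi}}(\neg T) \ge P_{H_{\Theta}}(\neg T) > 0$.

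The main obstacle I anticipate is establishing that $\Psi$ as defined is actually a valid Byzantine strategy in GM+, i.e., that at each step its output is \emph{compatible} with the current message history and that every Peek call the construction implicitly makes respects the two restrictions of Section~\ref{sec:wall}. The subtle point is that \textsc{Reorg} can pull an action from step $s$ back to an earlier step $s' < s$, so the corresponding message may reference VDF results whose computation only finishes at or after $s'$. I would handle this by induction on step number in $\eta_G^+(c)$, using \textsc{Reorg}-2 and \textsc{Reorg}-3 to show that, whenever $\Psi$ prescribes a VDF whose value is needed early, (i) Byzantine nodes in $\eta_G^+(c)$ are already committed to finishing that VDF within the same step, and (ii) no peeked-at input coffer contains a second pending peek from the same step. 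These are exactly the two GM+ Peek constraints, so every Byzantine action prescribed by $\eta_G^+(c)$ is executable, and $\Psi$ is a well-defined strategy achieving positive non-termination probability.
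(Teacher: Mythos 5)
Your proposal is correct and follows essentially the same route as the paper: set $\mathcal{E}'=\mathcal{E}$, define $\Psi$ by applying \textsc{Reorg} per coin-toss sequence, use \textsc{Reorg}-1 to see that correct nodes' message histories (and hence the strategy's inputs) coincide, and transport the non-termination event through the coin-toss coupling. The only difference is that your final worry about verifying the Peek constraints is already discharged by Lemma~\ref{lemma:reorg} itself, which guarantees that $\textsc{Reorg}(\eta_G)$ is a legitimate GM+ execution, so compatibility is immediate and no separate induction is needed.
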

\begin{restatable}{proof}{prooflemmareorgstrategy}
	Assume there exist an environment~$\mathcal{E}$ and a Byzantine strategy~$\Theta_{\mathcal{E}}$ in~GM that achieves a positive non-termination probability.
        Consider the~$\textsc{Reorg}$ mapping.
        Since, according to Lemma~\ref{lemma:reorg}, the joining/leaving and initial value schedules for correct nodes remain untouched by the~$\textsc{Reorg}$ mapping, we just set~$\mathcal{E}' = \mathcal{E}$.
        In the rest of the proof, we omit the environments for brevity.

        We now show that the strategy~$\Psi$ exists, and is in fact the same as~$\Theta$.
        For brevity, let~$R_\Theta$ denote~$\textsc{Reorg}(H_\Theta)$, and consider any execution~$\eta$ in~$H_\Theta$.
        By Lemma~\ref{lemma:reorg}, correct nodes in~$\eta$ receive the same messages, at the same steps, as the correct nodes in~$\textsc{Reorg}(\eta)$ and, moreover, the coin results in~$\eta$ are exactly the same as the ones in~$\textsc{Reorg}(\eta)$.
        Thus, the message history of correct nodes up to any step~$s$ in~$\eta$ is the same as the message history of correct nodes up to the same step in~$\textsc{Reorg}(\eta)$.
        In addition, because~$\textsc{Reorg}(\eta)$ is a~GM+ execution, compatibility is trivially satisfied.
        Thus, we conclude that Byzantine nodes in~$R_\Theta$ follow the same strategy as in~$\Theta$, conforming to the same coin toss process.
        Let us denote this strategy with~$\Psi$.

        Note that according to Lemma~\ref{lemma:decision-reorg}, whenever a correct node decides at some step~$s$ in~$\eta$, its corresponding correct node in~$\textsc{Reorg}(\eta)$ decides the same value at the same step.
        Therefore, the set of non-terminating executions in~$H_\Theta$ are mapped to the set of non-terminating executions in~$R_\Theta$ in a bijective manner.
        Let us denote these sets as~$NT_H$ and~$NT_R$, respectively.
        Since the same coin toss process induces probability distributions~$P_{H_\Theta}$ and~$P_{R_\Theta}$ on~$H_\Theta$ and~$R_\Theta$, respectively, we conclude that~$P_{H_\Theta}(NT_H) = P_{R_\Theta}(NT_R)$.
        Therefore, since~$P_{H_\Theta}(NT_H) > 0$ by assumption, this concludes our proof, as we have shown the existence of a strategy~$\Psi$ in~GM+ that achieves a positive non-termination probability.
        \qedhere 
\end{restatable}

\smallskip
\noindent 
A similar lemma applies to the second mapping.
We prove it in \S\ref{sec:appendix-liveness}.
\begin{restatable}{lemma}{lemmainterpretstrategy}\label{lemma:interpret-strategy}
	If there exists an environment~$\mathcal{E}$ and a strategy $\Psi$ for Byzantine nodes in~GM+ that achieves a positive non-termination probability,
	then there exists an environment~$\mathcal{E}'$ and a scheduler strategey~$\Lambda_{\mathcal{E}'}$ in~SM+ that also achieves a positive non-termination probability.
\end{restatable}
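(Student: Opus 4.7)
The plan is to mirror the proof of Lemma~\ref{lemma:reorg-strategy}, with \textsc{Interpret} now in the role that \textsc{Reorg} played there. Given $(\mathcal{E}, \Psi)$ in~GM+ with $P_{H_\Psi}(NT_\Psi) > 0$, I will exhibit $(\mathcal{E}', \Lambda_{\mathcal{E}'})$ in~SM+ with positive non-termination probability in three beats: transport $\mathcal{E}$ across the node bijection of Definition~\ref{def:eq-L-VS} to obtain $\mathcal{E}'$; build $\Lambda$ inductively on steps so that every good-node history it induces is the \textsc{Mapm}-image of the correct-node history that $\Psi$ induces on the same coin-toss string; and transfer the induced probability measure along the resulting bijection $H_\Psi \leftrightarrow H_\Lambda$, using Lemma~\ref{lemma:decision-interpret} to preserve the non-termination event.

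For the environment, items~\ref{def:equi:1}--\ref{def:equi:3} of Definition~\ref{def:eq-L-VS} force every correct (resp., Byzantine) node to correspond to a good (resp., defective) node with identical join/leave steps and initial value, so $\mathcal{E}'$ is fully determined by $\mathcal{E}$. For the strategy, note that \textsc{Interpret} neither consumes nor introduces randomness at correct/good nodes: the only random draw in Gorilla is the $vdf \bmod 2$ tie-break in line~\ref{gorilla:vdfValueUpdate}, and item~\ref{def:equi:5} of Definition~\ref{def:eq-L-VS} places the correct node's received messages in \textsc{Mapm}-bijection with those of its good counterpart, so the Sandglass tie-breaker fed those same inputs draws the same bit. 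Fix a coin-toss string; given an SM+ history $\widehat{\mathcal{MH}}_s$ arising under $\Lambda$ so far, the inductive construction pairs it with a unique GM+ history $\mathcal{MH}_s$ under $\Psi$. Feed $\mathcal{MH}_s$ to $\Psi$ to obtain its next-step Byzantine actions $\mathcal{MP}_{s+1}$ and set $\Lambda(\widehat{\mathcal{MH}}_s) := \textsc{Mapm}(\mathcal{MP}_{s+1})$. On histories outside the image of \textsc{Interpret}, which arise under $\Lambda$ with probability zero, define $\Lambda$ by any arbitrary compatible completion.

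The bijection $H_\Psi \leftrightarrow H_\Lambda$ thus constructed preserves the coin-toss string, so the induced measures $P_{H_\Psi}$ and $P_{H_\Lambda}$ coincide under it. By Lemma~\ref{lemma:decision-interpret}, a correct node in $\eta \in H_\Psi$ decides $v$ at step $s$ iff its corresponding good node in $\textsc{Interpret}(\eta) \in H_\Lambda$ decides $v$ at step $s$, so the non-termination event is carried bijectively and $P_{H_\Lambda}(NT_\Lambda) = P_{H_\Psi}(NT_\Psi) > 0$.

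The main obstacle is the per-step compatibility check implicit in the inductive construction: the set $\mathcal{MP}_{s+1}$ that $\Psi$ produces must translate, through \textsc{Mapm}, into an SM+-compatible delivery set, i.e., messages from defective nodes sent no later than step $s+1$ whose coffers contain no same-step messages. This is exactly where the asymmetric liberties of peeking in~GM+ and the coffer-relaxed delivery window in~SM+ must line up prefix-by-prefix, rather than only on whole executions as Lemma~\ref{lemma:interpret} asserts. A prefix-closed strengthening of the \textsc{Interpret} construction, verifying at each step that the translated deliveries respect the SM+ coffer rule, resolves this; once it is in hand, the remaining probability transport is a mechanical replay of the argument used in Lemma~\ref{lemma:reorg-strategy}.
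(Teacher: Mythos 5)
Your construction of $\mathcal{E}'$ and of the scheduler strategy $\Lambda$ matches the paper's (which packages it as Lemma~\ref{lemma:strategy-interpret-appendix}, where per-step compatibility is immediate because $\textsc{Interpret}(\eta)$ is itself an SM+ execution), but your measure-transfer step contains a genuine gap. You claim the bijection $H_\Psi \leftrightarrow H_\Lambda$ preserves the coin-toss string, so that the induced measures coincide and $P_{H_{\Lambda}}(NT_\Lambda) = P_{H_{\Psi}}(NT_\Psi)$. This is not available: by statement~\ref{lemma:equivalence:statement2} of Lemma~\ref{lemma:decision-interpret}, whenever a Byzantine node's first message for a round carries $\varpriorityCounter = 0$, the corresponding \emph{defective} node in SM+ --- which is benign and actually runs Sandglass --- must perform the uniform draw at line~\ref{sandglass:valueUpdateEnd} and have it land on the particular value embedded in the Byzantine message. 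These are \emph{additional} coin tosses present in $\textsc{Interpret}(\eta)$ but absent from $\eta$, so the map does not preserve coin-toss strings; over an infinite execution infinitely many such constrained tosses may accumulate, in which case the image of $NT_\Psi$ is even a null set in $H_\Lambda$. The asserted equality of non-termination probabilities therefore does not follow, and your identification of the ``main obstacle'' as the per-step compatibility check misses where the real difficulty lies.

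The paper's proof is built precisely to route around this. It decomposes $NT_\Psi = \cup_{i} NT_\Psi^{i}$ over the countably many good nodes, extracts an $i^*$ with $P_{H_{\Psi}}(NT_\Psi^{i^*}) > 0$ via the union bound, truncates to the finite-horizon events $NT^{n,i^*}$, and uses the convergence machinery of Lemmas~\ref{lem:Xn} and~\ref{lem:convergence} to find a finite $n^*$ with $P_{H_{\Psi}}(NT_\Psi^{n^*,i^*}) > 0$. At a finite horizon only finitely many extra coin tosses occur, so positivity (not equality) survives the transfer to $P_{H_{\Lambda}}(NT_\Lambda^{n^*,i^*})$, and a monotone limit together with $NT_\Lambda^{i^*} \subset NT_\Lambda$ yields $P_{H_{\Lambda}}(NT_\Lambda) > 0$. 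To repair your argument you would need essentially this truncation-and-limit apparatus; the coin-toss-preserving bijection you posit cannot be the engine of the proof.
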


Based on these lemmas, we are finally ready to prove Gorilla's liveness guarantee.

\begin{restatable}{theorem}{thmtermprobone}
    The Gorilla protocol terminates with probability~1.
\end{restatable}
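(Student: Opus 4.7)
The plan is to chain the two strategy-level mappings (Lemma~\ref{lemma:reorg-strategy} and Lemma~\ref{lemma:interpret-strategy}) with the SM+ correctness of Sandglass (Theorem~\ref{theorem:SM+}) in a direct proof by contradiction, mirroring the structure used for agreement and validity. Concretely, I would suppose that Gorilla does \emph{not} terminate with probability~1. By the definition of termination with probability~1, this supplies an environment~$\mathcal{E}$ and a Byzantine strategy~$\Theta_{\mathcal{E}}$ such that the non-termination event in~$H_{\Theta}$ has positive probability under~$P_{H_\Theta}$. Since Lemma~\ref{lemma:strategyRD} (the randomized-vs-deterministic lemma) lets us restrict attention to deterministic Byzantine strategies without loss of generality, I would take~$\Theta_{\mathcal{E}}$ to be deterministic; this keeps the subsequent applications of the two mapping lemmas clean, as both were stated for Byzantine strategies that act as functions of the message history together with the coin-toss outcomes.

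Next, I would apply Lemma~\ref{lemma:reorg-strategy} to obtain an environment~$\mathcal{E}'$ and a Byzantine strategy~$\Psi_{\mathcal{E}'}$ in~GM+ whose induced distribution also assigns positive mass to non-termination. I would then apply Lemma~\ref{lemma:interpret-strategy} to that strategy to produce an environment~$\mathcal{E}''$ and a scheduler strategy~$\Lambda_{\mathcal{E}''}$ in~SM+ whose induced distribution over Sandglass executions again assigns positive probability to non-termination. The composition is well defined because both lemmas preserve the underlying coin-toss process that generates~$P_{H_\Theta}$: the bijection between coin-toss strings and executions established in the proof of Lemma~\ref{lemma:reorg-strategy} carries over, and an analogous bijection is delivered by Lemma~\ref{lemma:interpret-strategy}, so positive non-termination probability propagates all the way to~SM+.

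Finally, I would invoke Theorem~\ref{theorem:SM+}: Sandglass terminates with probability~1 in~SM+ for \emph{every} environment and \emph{every} scheduler strategy, and in particular for~$\mathcal{E}''$ and~$\Lambda_{\mathcal{E}''}$. This contradicts the positive non-termination probability just produced, so the initial assumption fails and Gorilla terminates with probability~1.

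I expect no real obstacle at this top level: the argument is essentially a syllogism once the preceding machinery is in place. The only subtlety worth checking is measurability of the non-termination event under the pushed-forward distributions after each mapping, but this is already guaranteed by the coin-toss bijections used in the proofs of Lemmas~\ref{lemma:reorg-strategy} and~\ref{lemma:interpret-strategy}, which ensure that measurable subsets of~$H_\Theta$ map to measurable subsets of the corresponding GM+ and SM+ execution spaces with the same probability mass. The genuine difficulty of the liveness argument is concentrated in Lemma~\ref{lemma:interpret-strategy}, not in this concluding theorem.
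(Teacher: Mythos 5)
Your proof is correct and follows essentially the same route as the paper: assume a positive non-termination probability, push the offending Byzantine strategy through Lemma~\ref{lemma:reorg-strategy} and then Lemma~\ref{lemma:interpret-strategy} to obtain a scheduler strategy in~SM+ with positive non-termination probability, and contradict Theorem~\ref{theorem:SM+}. Your additional invocation of the randomized-to-deterministic lemma is harmless but not needed at this top level, since the two mapping lemmas are already stated for arbitrary Byzantine strategies.
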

\begin{restatable}{proof}{proofthmtermprobone}
By contradiction, assume that there exist a GM environment and a Byzantine strategy~$\Theta$ in Gorilla that achieve a positive non-termination probability.
    By Lemma~\ref{lemma:reorg-strategy}, there exist 
    a GM+ environment and
    a strategy~$\Psi$ for the Byzantine nodes in~GM+ that achieve a positive non-termination probability.
    Similarly, by Lemma~\ref{lemma:interpret-strategy}, there exists 
    an SM+ environment and 
    a scheduler strategy~$\Lambda$ in~SM+ that achieve a positive non-termination probability.
    But this is a contradiction, since Sandglass 
    terminates with probability~1 in~SM+ (Theorem~\ref{theorem:SM+}).
    Thus, Byzantine strategy $\Theta$ cannot force a positive non-termination probability; Gorilla 
    terminates with probability~1.
\end{restatable}

    \section{Conclusion} \label{sec:conclusion}

Gorilla Sandglass is the first Byzantine-tolerant consensus protocol to guarantee, in the same synchronous model adopted by Nakamoto, deterministic agreement and termination with probability~1 in a permissionless setting.
To this end, Gorilla leverages VDFs to extend the approach of Sandglass, the first protocol to provide similar safety guarantees in the presence of benign failures. Neither Gorilla nor Sandglass are practical protocols, however: they exchange a very large number of messages and the number of rounds they require to decide is large even under favorable circumstances, and can, in general, be exponential. Is there a {\em practical} permissionless protocol that can achieve deterministic safety and tolerate fewer than a half Byzantine nodes?

\bibliography{bibliography}

\appendix
\section{Sandglass Plus}
\label{sec:sm}

In this section, we first introduce the~SM+ model, which is nearly identical to that of Sandglass~\cite{cryptoeprint:2022/796}, with the exception that it permits defective nodes to receive messages sent by other defective nodes within the same step, subject to certain constraints.

We then prove that Sandglass remains correct in this new model by satisfying Validity, Agreement, and termination with probability 1.
Fortunately, the correctness proof of Sandglass requires only minor modifications to the proofs of two lemmas to be applicable to the~SM+ model.
We will show in Section~\ref{sec:sm:proof} the two lemmas that require change, and refer the readers to the Sandglass paper for the rest of the proof.

\subsection{The SM+ Model}
    SM+ and the Sandglass model (SM) largely make the same set of assumptions. 
    We show the only difference here and refer the readers to the rest of the model in the Sandglass paper~\cite{cryptoeprint:2022/796}:
    SM assumes that \textit{if in step~$t$ a node~$p_i$ receives message~$m$ with~$Receive_i$, then~$m$ was sent in some step~$t' < t$}.
    In~SM+, we weaken this assumption, by allowing defective nodes to non-recursively receive messages sent from defective nodes within the same step. 
    Formally, if in step~$t$ a node~$p_i$ receives message~$m$ from~$p_j$ with~$Receive_i$, then~$m$ was sent in some step~$t' < t$ when at least one of~$p_i$ and~$p_j$ is good, or~$t' \le t$ when both~$p_i$ and~$p_j$ are defective and~$m$ does not contain in its coffer a message that is also sent in~$s$.

\subsection{Sandglass protocol}
We run Sandglass protocol (Algorithm~\ref{protocol:sandglass}) in~SM+.

 \begin{algorithm}[h]
  	\floatname{algorithm}{Protocol}
  	\caption{Sandglass: Code for node $p_i$}
  	
  	\label{protocol:sandglass}
  	
  	\begin{algorithmic}[1]
  		\Procedure{Init}{$input_i$} \label{sandglass:Init}
  		\State{~$v_i\gets input_i$;~$priority_i\gets 0$;~$\varpriorityCounter_i \gets 0$;~$r_i=1$;~$M_i = \emptyset$;~$Rec_i = \emptyset$;~$\emph{uid}_i = 0$}
  		\label{sandglass:initialsetup}
  		\EndProcedure
  		
  		\Procedure{step}{} \label{sandglass:step}
  		
  		\ForAll{$m=(\cdot,\cdot,\cdot,\cdot,\cdot,M)$ received by~$p_i$ }\label{sandglass:forunionReceivedMessages}
  		\State $Rec_i \gets Rec_i \cup \{m\} \cup M $\label{sandglass:unionReceivedMessages}
  		\EndFor
  		
  		\If{$\max_{|Rec_i(r)| \ge \varR}(r) \ge r_i$ \label{sandglass:roundNumberMax}}

  		\State $r_i =  \max_{|Rec_i(r)| \ge \varR}(r) + 1$ \label{sandglass:enterNewRound}

  		\State $M_i = \emptyset$ \label{sandglass:resetM}
  		\ForAll{$m=(\cdot, r_i-1,\cdot,\cdot,\cdot,M) \in Rec_i(r_i-1)$ } \label{sandglass:unionCollectedMessagesStart}
  		\State $M_i \gets M_i \cup \{m\}\cup M$\label{sandglass:unionCollectedMessagesEnd}
  		\EndFor		
  		
  		\State Let~$C$ be the multi-set of messages in $M_i(r_i-1)$ with the largest priority. 
  		\label{sandglass:multiset}
  		
  		\If{all messages in $C$ have the same value~$v$} \label{sandglass:valueUpdateStart}
  		\State $ v_i \gets v$
  		\Else
  		\State $v_i \gets \text{one of} \{a,b\},\text{chosen uniformly at random}$\label{sandglass:valueUpdateEnd}
  		\EndIf
  		
  		\If{all messages in $M_i(r_i-1)$ have the same value~$v_i$}
  		\label{sandglass:checkValues}
  		\State{$\varpriorityCounter_i \gets 1 + \min\{\varpriorityCounter | (\cdot,r_i-1,v_i,\cdot,\varpriorityCounter, \cdot) \in M_i(r_i-1)\}$} 
  		\label{sandglass:setPriorityCounter}
  		\Else
  		\State{$\varpriorityCounter_i \gets 0$} \label{sandglass:resetPC}
  		\EndIf
  		
  		\State $priority_i \gets \max (0,\left\lfloor \frac{\varpriorityCounter_i}{\varR} \right\rfloor -5)$ \label{sandglass:setPriority}
  		
  		\If{$\varpriority_i \ge 6\varR+4$} \label{sandglass:decideStart}
  		\State{Decide$_i$($v_i$)} \label{sandglass:decidePoint}	
  		\EndIf	\label{sandglass:decideEnd}

  		\EndIf

  		\State $\emph{uid}_i \gets \emph{uid}_i+1;$ \label{sandglass:unique:Message}
  		\State $M_i \gets M_i \cup Rec_i(r_i)$
  		\label{sandglass:unionSameRoundMsg}
  		\State broadcast~$(p_i,\emph{uid}_i,r_i,v_i,priority_i, \varpriorityCounter_i, M_i)$ \label{sandglass:broadcast}
  		\EndProcedure
  		
  	\end{algorithmic}
  	
  \end{algorithm}

\subsection{Sandglass is Correct in SM+}\label{sec:appendix-SM-proof}
\label{sec:sm:proof}
The proof of correctness for Sandglass in~SM+ closely resembles the proof for~SM, with the exception of Lemma~5 and Lemma~11 in the Sandglass paper~\cite{cryptoeprint:2022/796}. 
While the statements of these lemmas remain unchanged, their proofs require minor modifications.

It is perhaps surprising that so little of the proof needs changing when moving from SM to SM+. The reason is that the original Sandglass proof assumes that all the messages generated by defective nodes contribute to their progress, regardless of when they are actually received. Specifically, when estimating the maximum possible round that defective nodes can be in, the proof  considers the totality of messages generated by defective nodes during the  execution and divides it by the size of the threshold of messages that must be received to advance to a new round. This best-case scenario for defective nodes already accounts for the additional flexibility that SM+ affords to defective nodes.  
In particular, the original proof already accounts for the possibility, allowed in SM+, that defective nodes receive, in a given step, messages that defective nodes  sent in that same step--even though SM disallows such executions. 

In Figure~\ref{fig:statementGraph}, which illustrates the dependency between the statements in the Sandglass paper~\cite{cryptoeprint:2022/796}, we highlight Lemma~5 and Lemma~11.
The letters~L, O, and C, signify Lemma, Observation, and Corollary, respectively.

\begin{figure}[h] 
  \caption{The structure of the proof through the dependencies among
    its constituent lemmas, corollaries, and observations.
    Preparatory results are in
    black; red and blue denote facts used in the proofs of Agreement and
    Termination, respectively.
    (Taken from Sandglass~\cite{cryptoeprint:2022/796}.)}
  \label{fig:statementGraph}
	\centering
        \includegraphics[width=\textwidth]{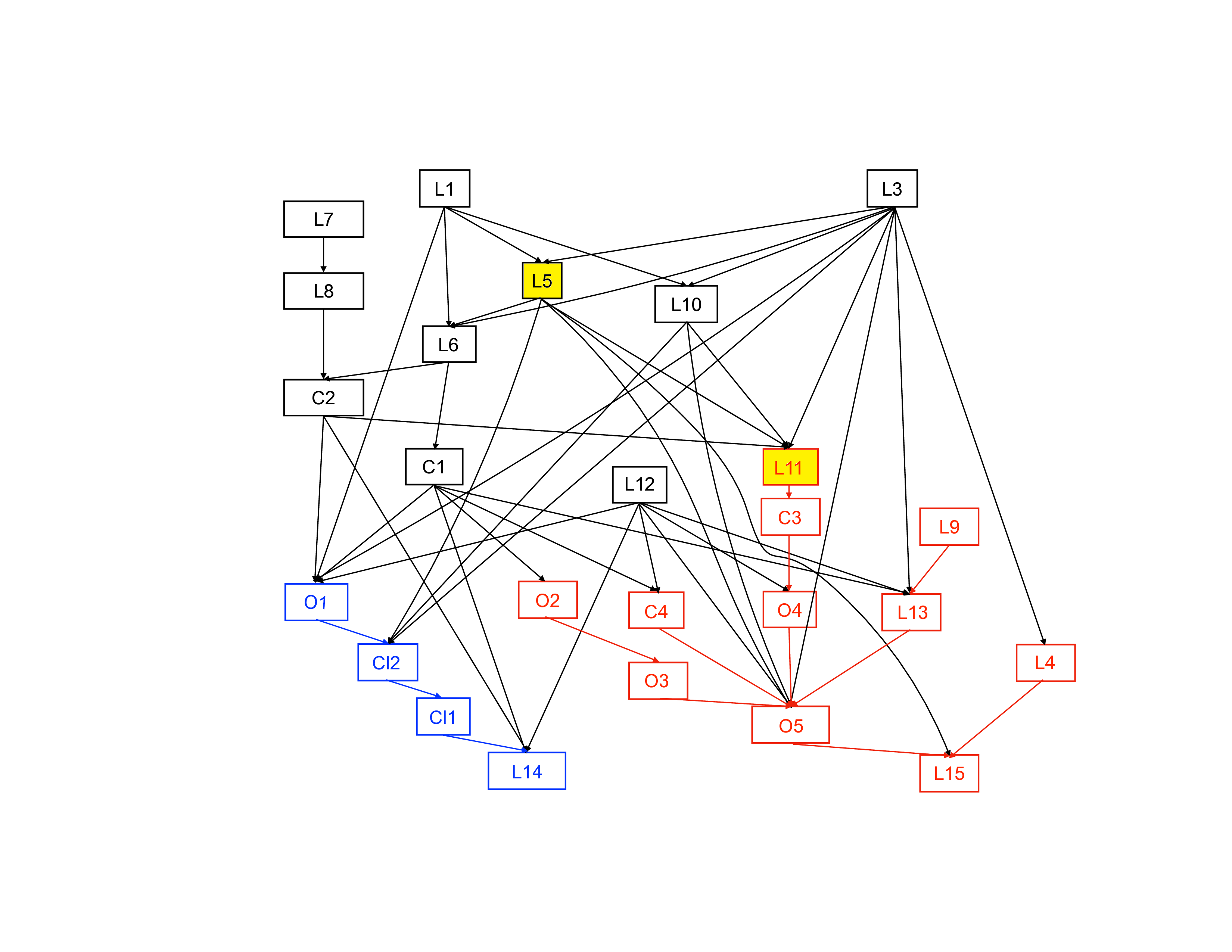}
\end{figure}

\label{sec:lemmas}
\newtheorem*{lemma5}{Lemma 5 in Sandglass~\cite{cryptoeprint:2022/796}}
\begin{lemma5} \label{lemma:untimelyOneRoundAhead}
	At any step~$T$, any defective node is at most one round ahead of any good node.
\end{lemma5}

\begin{proof}
	By contradiction. Assume that there exists an earliest
        step,~$T$, where some defective node~$p$ is more than one round
        ahead of a good node~$p_g$, {\em i.e.,} at~$T$ node~$p$ is in some round~$r$ and node~$p_g$ is in round~$r_{p_g} \le (r-2)$. 
	
	Note that no good node is in round~$(r-1)$ or larger before~$T$; otherwise, by Lemma~3 in Sandglass~\cite{cryptoeprint:2022/796}
    , all good nodes would be in round~$(r-1)$ or larger at~$T$, contradicting~$r_{p_g} \le (r-2)$.
	Therefore, defective node~$p$ received no messages from  good nodes for round~$(r-1)$ by~$T$.
	
	Consider the earliest step~$T' \le T$ where some defective node
        is in round~$(r-1)$.  Since~$T$ is the first step where some
        defective node is more than a round ahead of a good node, all good nodes
        must be in round~$(r-2)$ or larger at~$T'$; but, as we just
        showed, no good node is in round~$(r-1)$ or larger
        before~$T$. Therefore, all good nodes must be in round~$(r-2)$
        from~$T'$ until~$T$.
	
	Consider the~$k$ consecutive steps from~$T'$ to~$T$.  Let
        the number of messages generated by good nodes and defective nodes
        in each step be, respectively,~$g_1,...,g_{k}$
        and~$d_1, ..., d_{k}$.
        Since up to and including step~$T$ node~$p$ has
        received for round~$(r-1)$ only messages from defective nodes, and
        yet~$p$ is in round~$r$ at~$T$, by
        line~\ref{sandglass:roundNumberMax} of
        Sandglass,~$\Sigma_{i=1}^{i=k}d_i-1\ge {\mathcal T}$ and thus, by
        Lemma~1 in Sandglass~\cite{cryptoeprint:2022/796}
        ,~$\Sigma_{i=2}^{i=k-1}g_i \ge {\mathcal T}$.  
        Since by
        assumption every step includes at least one good node ({\em
          i.e.,}~$g_1 > 0$), we have that~$\Sigma_{i=1}^{i=k-1}g_i > {\mathcal T}$.  Recall that
          during these~$(k-1)$ steps all good nodes are in round~$(r-2)$;
          then, all messages~$g_1,...,g_{k-1}$ are for round~$(r-2)$ and
          will all be received by all good nodes by~$T$.
          By
          line~\ref{sandglass:roundNumberMax} and
          line~\ref{sandglass:enterNewRound}, then, \emph{all} good
          nodes (including~$p_g$) must be in round~$(r-1)$ at~$T$.
          This contradicts our assumption and completes the proof.
\end{proof}

\newtheorem*{lemma11}{Lemma 11 in Sandglass~\cite{cryptoeprint:2022/796}}
\begin{lemma11} 
	\label{lemma:noGoodMsgthenFallBehind}	
	Suppose a good node~$p_g$  is in round~$r$ at step~$T$, and a node~$p_d$ is in round~$r_d$ at step~$T' \le T$.
	If~$p_d$ does not collect any messages from good nodes in any round~$(r-i)$, where~$0 \le i <k \varR$, then~$r_d \le (r-(k-1))$.
\end{lemma11}

\begin{proof}
	
	To prove the corollary, we compute the maximum number of messages~$D_{max}$
	that a defective node~$p_d$ can collect during the~$k\varR$
	rounds when it does not collect any message from good nodes. 
	To help us count these messages, for any~$1 \le i \le k \varR$, denote by~$T_{(r-k \varR+i)}$ the earliest step for which all good nodes are at least in round~$(r-k \varR+i)$. 
	
	Recall that, to be collected by~$p_d$ at step~$T'$, a message
	 from a good node must have been generated no later than step~$(T'-1) \le (T-1)$, and a message from a defective node
	must have been generated no later than step~$T' \le T$. Then, we partition
	the execution of the system up to step~$T$ into three time
	intervals, and compute, for each interval, the maximum number of
	messages generated during these intervals that~$p_d$ could have collected for rounds~$(r - k\varR
	+ 2)$ or larger.

	\begin{itemize}
		\item[I1:] Up to step~$(T_{(r-k \varR+1)} -1)$.
		
		By definition of~$T_{(r-k \varR+1)}$, some good node is in some round~$r' < r-k \varR+1$ at step~$(T_{(r-k \varR+1)}-1)$.
		Therefore, neither a defective node nor a good node can be in some
		round~$r'' >r-k \varR+1$ at step~$(T_{(r-k \varR+1)}-1)$,
		respectively because of
		Lemma~5
            and
		Corollary~2 in Sandglass~\cite{cryptoeprint:2022/796}
            . Therefore,
		during this interval, no messages were generated for rounds~$(r-k
		\varR+2)$ or larger.
		
		\item[I2:] From~$T_{(r-k \varR+1)}$ up to~$(T_r-1$).
		
		By assumption,~$p_d$ only collects messages generated by defective
		nodes throughout interval I2. We further split I2
		into~$i$ consecutive subintervals, each going
		from~$T_{(r-k \varR+i)}$ up to~$(T_{(r-k\varR+i+1)}-1)$ for~$1 \le i \le (k \varR-1)$. By
		Lemma~10 in Sandglass~\cite{cryptoeprint:2022/796}
            , in each of these
		sub-intervals defective nodes can generate at most~$(\varR-1)$
		messages. Therefore, the number of messages generated by defective
		nodes during I2 is at most~$(\varR-1)\cdot (k \varR-1)$.
		
		\item[I3:]  From~$T_r$ to~$T$.
		
		Once again, by assumption,~$p_d$ only collects messages generated by defective
		nodes throughout interval I3.
            There are two cases:
		\begin{itemize}
			\item $T$ precedes~$T_r$.
			
			If so, defective nodes trivially generate no messages during I3.
			
			\item $T$ does not precede~$T_r$.
			
			By assumption, some good node~$p_g$ is in round~$r$ at~$T$, where
			it collects all messages generated by good nodes before~$T$;
			further, since~$p_g$ is still in round~$r$, the messages for
			round~$r$ sent by good nodes before~$T$ must be fewer
			than~$\varR$.
                Finally, since ~$p_g$ is in round~$r$ at~$T$, by
			Lemma~3 in Sandglass~\cite{cryptoeprint:2022/796}
                , in all steps preceding~$T$
			no good node can be in round~$(r+1)$ or higher.
                We then conclude that from
			step~$T_r$ and up to~$(T-1)$ good nodes generated at
			most~$(\varR-1)$ messages, all for round $r$. 
			Let
			the number of messages generated by good nodes and defective nodes
			starting from~$T_r$ to~$T$ be, respectively,~$g_1,...,g_{k}$
			and~$d_1, ..., d_{k}$. 
			Then we have~$\Sigma_{i=1}^{k-1} g_i< \varR$.
			By Lemma~1 in Sandglass~\cite{cryptoeprint:2022/796}
                , we then have~$\Sigma_{i=1}^{k} d_i< \varR$, \emph{i.e.}, during I3 defective
			nodes generate fewer than~$(\varR-1)$ messages.
		\end{itemize}
	\end{itemize}
	Adding the messages generated in the three intervals, we find  that~$D_{max}$,
	the maximum number of messages that~$p_d$ could have
	collected up to and including step~$T$ for rounds~$(r-k \varR+2)$ or larger, is
	smaller than~$(\varR-1)\cdot k \varR$; at the same time, since by
	assumption~$p_d$ is in round~$r_d$,~$D_{max}$ must equal at least~$(r_d - (r-k \varR+2)) \cdot \varR$.
	Therefore, we have that~$(r_d - (r-k \varR+2)) \cdot \varR <
	(\varR-1)\cdot k \varR$,
	which implies~$r_d \le r-(k-1)$, proving the corollary.
	
\end{proof}

Finally, we have our intended theorem.

\theoremSM*
Proof for the theorem directly follows from Lemma~2,~14,~15 in Sandglass~\cite{cryptoeprint:2022/796}.
\section{Gorilla Correctness}\label{sec:gorillaproof}

To prove the correctness of Gorilla, we first show a mapping in two steps from a Gorilla execution to an execution of Sandglass in~SM+ (\S\ref{sec:appendix-scaffolding}). Then, given that Sandglass is correct in SM+, we leverage this mapping  to proof safety (\S\ref{sec:appendix-safety}) and liveness (\S\ref{sec:appendix-liveness}) for Gorilla.

\subsection{Scaffolding}\label{sec:appendix-scaffolding}

\textForProofbeforeFirstDef

\defeqGGP*
\lemmareorg*
\begin{proof}
Consider any \GSand execution~$\eta_G$. 
We are going to construct an execution~$\eta_G^+$ in~\GMPLUS that satisfies \textsc{Reorg}-1,2,3.

First, we specify how correct and~Byzantine nodes join and leave in~$\eta_G^+$. 
For each correct node~$p$ in~$\eta_G$, a corresponding correct node~$p^+$ in~$\eta_G^+$ joins and leaves the system at the same steps as~$p$ in~$\eta_G$.
Consider any step~$s$ in~$\eta_G^+$, and let~$c$ be the number of correct nodes in step~$s$.
We make $(c-1)$~Byzantine nodes join at the beginning of step~$s$ and leave at the end of step~$s$.
	
 Let the set of valid messages sent by Byzantine nodes in~$\eta_G$ be~$\mathcal{M}_B$.
	Note that this set of valid messages sent by Byzantine nodes can be larger than the set of valid messages correct nodes received from Byzantine nodes due to Byzantine omissions.
	
	Our proof proceeds in two steps. We overview them and then explain them in detail:
	\begin{description}
		\item[Step 1] 
		For any~$m \in \mathcal{M}_B$, we assign a unique \textit{shell}, $(s,b)$, identified by a step~$s$ and a Byzantine node~$b$ in~$\eta_G^+$, for the $K$ Get() calls of VDF calculation for~$m$.
		Note that any node can only make one Get() call in a tick, and it takes~$K$ Get() calls to get $\textit{vdf}_m$.
		
		We prove four claims about the shells, which are useful later to prove the same messages can be generated in~$\eta_G$ and in~$\eta_G^+$.
		
		\item[Step 2] We prove by induction that correct nodes will receive and send the same messages in~$\eta_G^+$ as in~$\eta_G$, and the same valid messages are sent by~Byzantine nodes at the same step or earlier. 
        Then, it immediately follows that \textsc{Reorg}-1,2,3 are satisfied.
	\end{description}
	
	\noindent
	\textbf{\large Step 1}
	
	We run Algorithm~\ref{algo:reorg} to assign shells for the VDF calculation of messages in~$\mathcal{M}_B$.
        The algorithm operates as follows: we maintain two variables within a loop,~$s$ and CandidateVDF, where~$s$ denotes the step number, initially set to~$-1$, and CandidateVDF represents a set of VDFs, starting as an empty set.
            During each loop iteration,~$s$ is incremented by~$1$. 
            The algorithm adds VDFs whose first units are calculated in step~$s$ of~$\eta_G$ to the CandidateVDF set.
            While there exists an available shell in step~$s$, the algorithm assigns this shell to a VDF,~$vdf$, from CandidateVDF. The selected~$vdf$'s last unit should be calculated the earliest in $\eta_G$ and the algorithm then removes it from CandidateVDF.
            This process continues until no free shells remain in step~$s$. Subsequently, the algorithm moves to the next iteration of the loop and repeats these steps.
	
	\begin{algorithm}[t]
		\floatname{algorithm}{Algorithm}
		\caption{Algorithm for reorganizing VDF units}
		\label{algo:reorg}
		
		\begin{algorithmic}[1]
			\Procedure{Reorg}{} 
			
			\State $s \gets -1$
			\State CandidateVDFs$\gets \emptyset$
			
			\Loop
			\State $s \gets s+1; B_s \gets$ the set of Byzantine nodes at step~$s$ \label{algo:reorg:step}
			\State \textit{CandidateVDFs}~$\gets$ \textit{CandidateVDFs}~$\cup$ \{\textit{vdf} | \textit{vdf}'s first unit is calculated in step~$s$ in~$\eta_G$\}
			\label{algo:reorg:CandidateVDFs}
			
			\While{there's a free shell~$(s,p \in  B_s)$}
			\State \textit{vdf}$\gets$ a VDF result in \textit{CandidateVDFs} whose last unit is calculated the earliest in~$\eta_G$ 
			\label{algo:reorg:vdf}
			\State Assign~$(s,p)$ to~\textit{vdf}
			\label{algo:reorg:shell}
			\State \textit{CandidateVDFs}~$\gets$ \textit{CandidateVDFs}$\setminus$\{\textit{vdf}\}
			\label{algo:reorg:removevdf}
			\EndWhile
			
			\EndLoop
			
			\EndProcedure
			
		\end{algorithmic}
		
	\end{algorithm}
	
	Now we prove the following claims are true about the assignment:
	
	\begin{claim}
		\label{clm:startlate}
		For any~$m \in \mathcal{M}_B$, if the first Get() call for~$\textit{vdf}_m$ is in step~$i$, then the shell assigned to~$\textit{vdf}_m$ in~$\eta_G^+$ is in step~$i$ or later.
	\end{claim}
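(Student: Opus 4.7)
The plan is to read off the claim directly from the structure of Algorithm~\ref{algo:reorg}, using a straightforward contradiction argument. The key observation is that a given VDF result $\textit{vdf}_m$ is added to the set \textit{CandidateVDFs} exactly once, during the iteration of the outer loop in which the local variable $s$ equals the step $i$ at which the first unit of $\textit{vdf}_m$ is computed in $\eta_G$ (line~\ref{algo:reorg:CandidateVDFs}). Consequently, during every earlier iteration $s' < i$ of the outer loop, $\textit{vdf}_m \notin \textit{CandidateVDFs}$, so the inner while-loop cannot possibly pick $\textit{vdf}_m$ at line~\ref{algo:reorg:vdf}.

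Formally, suppose for contradiction that the shell $(s', p)$ assigned to $\textit{vdf}_m$ at line~\ref{algo:reorg:shell} satisfies $s' < i$. By inspection of the algorithm, the assignment on line~\ref{algo:reorg:shell} occurs only during the outer-loop iteration in which the step variable equals $s'$, and that assignment requires $\textit{vdf}_m$ to have been selected from \textit{CandidateVDFs} on line~\ref{algo:reorg:vdf} during the same iteration. But at iteration $s' < i$, the only VDFs in \textit{CandidateVDFs} are those whose first unit in $\eta_G$ is computed at some step $s'' \le s' < i$, so $\textit{vdf}_m$ is not among them. This contradicts the selection of $\textit{vdf}_m$ at line~\ref{algo:reorg:vdf}, and hence any shell eventually assigned to $\textit{vdf}_m$ must have step index at least $i$.

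There is essentially no hard part here: the claim is a pure bookkeeping property of the assignment procedure, and no reasoning about Byzantine bounds, VDF capacity, or the structure of $\eta_G$ beyond the single attribute ``step in which the first unit is computed'' is needed. The subtler companion facts, namely that the algorithm actually succeeds in assigning shells to every $m \in \mathcal{M}_B$ and that the resulting schedule is consistent with peeking, are reserved for the subsequent claims of Step~1 and do not need to be invoked here.
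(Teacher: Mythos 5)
Your proposal is correct and follows essentially the same argument as the paper: since $\textit{vdf}_m$ only enters \textit{CandidateVDFs} at the outer-loop iteration where $s=i$, it cannot be selected at line~\ref{algo:reorg:vdf} (and hence assigned a shell at line~\ref{algo:reorg:shell}) during any earlier iteration. The paper states this in one sentence; your version merely phrases it as an explicit contradiction.
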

	
	\begin{proof}
		Since~$\textit{vdf}_m$ is not added to the \textit{CandidateVDFs} set at line~\ref{algo:reorg:CandidateVDFs} of Algorithm~\ref{algo:reorg} until~$s$ is increased to~$i$ (in line~\ref{algo:reorg:step}), the step of the shell that~$\textit{vdf}_m$ can be assigned to is at least~$i$ (line~\ref{algo:reorg:shell}).
	\end{proof}
 
		Before stepping into Claim~\ref{clm:endearly}, we show a useful observation following from Algorithm~\ref{algo:reorg}.
		\begin{observation}
			\label{obv:freeshell}
			Consider any step~$s$ in~$\eta_G^+$.
			We note two possible scenarios of shells at step~$s$, such that if either of these scenarios happens, the calculation of VDFs assigned to shells later than~$s$ in~$\eta_G^+$ must start in a step later than~$s$ in~$\eta_G$: 
			\subparagraph{A free shell exists at step~$s$ in~$\eta_G^+$.}  When the loop for~$s$ finishes, if there is a free shell at~$(s,p)$, then the CandidateVDFs set is empty at the end of the iteration for~$s$. That is, any \textit{vdf} assigned to a shell in a later step starts its calculation at a step~$s' > s$ in~$\eta_G$.
				
				\subparagraph{A shell at step~$s$ is assigned to a \textit{vdf} that is not the earliest to finish in~$\eta_G^+$}, among all the \textit{vdf}s that are not assigned to a shell yet. Consider the scenario in which a~\textit{vdf} is assigned to a shell at~$s$ at line~\ref{algo:reorg:shell} (it is the earliest to finish among all VDFs in CandidateVDFs), but it is not the earliest to finish among all the remaining VDFs. 
				Then the calculation of the remaining VDFs, including those VDFs whose calculation is finished earlier than~$vdf$, must start later than~$s$ in~$\eta_G$, because they are not in the CandidateVDFs set yet. 
			
		\end{observation}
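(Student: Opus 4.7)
The plan is to handle each of the two scenarios separately, relying on three elementary invariants of Algorithm~\ref{algo:reorg}: (i)~by line~\ref{algo:reorg:CandidateVDFs}, a VDF enters CandidateVDFs exactly during the iteration whose outer loop variable equals the step at which its first unit is computed in~$\eta_G$; (ii)~by line~\ref{algo:reorg:removevdf}, a VDF leaves CandidateVDFs exactly when it is assigned to a shell; and (iii)~the inner while loop exits precisely when either every shell at step~$s$ has been assigned or CandidateVDFs is empty. I would first state these three invariants explicitly, after which the observation essentially reduces to bookkeeping.

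For the first scenario, I would argue as follows. By (iii) together with the hypothesis that a free shell remains at step~$s$, CandidateVDFs must be empty at the end of iteration~$s$. Now fix any VDF~$w$ that is eventually assigned to a shell at some later step~$s'' > s$ in~$\eta_G^+$. By (ii), $w$ was still absent from CandidateVDFs at the end of iteration~$s$; by (i), the step at which $w$ is later inserted into CandidateVDFs coincides with the step at which its first unit is computed in~$\eta_G$. Combining these, that step must be some~$s' > s$, so the calculation of~$w$ indeed starts later than~$s$ in~$\eta_G$, as claimed.

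For the second scenario, let~$\textit{vdf}$ be the VDF assigned at line~\ref{algo:reorg:shell} during some iteration of the while loop at step~$s$, and let~$u$ be any still-unassigned VDF whose last unit is computed earlier in~$\eta_G$ than that of~$\textit{vdf}$. I would show that~$u$ cannot currently lie in CandidateVDFs: if it did, the earliest-finishing tie-break on line~\ref{algo:reorg:vdf} would have picked~$u$ rather than~$\textit{vdf}$, contradicting the algorithm's choice. Since~$u$ remains unassigned, by~(ii) it has never been removed; hence by~(i) its first unit must be computed at some step~$s' > s$ in~$\eta_G$, which is exactly the desired conclusion.

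The main point requiring care, and where I would spend the most attention, is pinning down precisely which VDFs the observation's second clause quantifies over. I read it as covering every still-unassigned VDF that finishes earlier in~$\eta_G$ than the chosen~$\textit{vdf}$ (the phrase \emph{including those VDFs whose calculation is finished earlier than~$\textit{vdf}$} signals exactly this witness family), and the argument above applies uniformly to each such~$u$. No additional machinery should be needed beyond the three invariants; in particular, I would resist any temptation to conflate this with a statement about \emph{all} remaining VDFs, since a remaining VDF that finishes \emph{after}~$\textit{vdf}$ may well already sit in CandidateVDFs with its first unit computed by step~$s$ in~$\eta_G$.
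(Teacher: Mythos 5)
Your proof is correct and follows essentially the same route as the paper, which in fact offers no separate proof of this observation: the justification is embedded in the statement itself, and it is exactly your argument that the relevant VDFs cannot yet be in CandidateVDFs at iteration~$s$ and hence, by the insertion rule at line~\ref{algo:reorg:CandidateVDFs}, must begin their calculation at a step later than~$s$ in~$\eta_G$. Your reading of the second scenario's scope is also the right one: the paper's own justification (``because they are not in the CandidateVDFs set yet'') only supports the conclusion for unassigned VDFs absent from CandidateVDFs --- in particular the earlier-finishing witnesses singled out by the tie-break at line~\ref{algo:reorg:vdf} --- and you are correct that the blanket phrasing of the preamble would fail for a remaining VDF that already sits in CandidateVDFs with its first unit computed by step~$s$.
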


    We are ready to prove Claim~\ref{clm:endearly}.
	
	\begin{claim}
		\label{clm:endearly}
		For any~$m \in \mathcal{M}_B$, if the last Get() call for~$\textit{vdf}_m$ is in step~$i$, then the shell assigned to~$\textit{vdf}_m$ in~$\eta_G^+$ is in step~$i$ or earlier.
	\end{claim}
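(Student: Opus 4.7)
The plan is to recognize that Algorithm~\ref{algo:reorg} is exactly Earliest-Deadline-First (EDF) scheduling of unit-size jobs with release times on parallel machines of time-varying capacity, and then verify the matching Hall feasibility condition. I would cast each VDF $v \in \mathcal{M}_B$ as a unit job with release time $\text{start}(v)$ (its first-Get step in $\eta_G$) and deadline $\text{finish}(v)$ (its last-Get step in $\eta_G$), and treat the shells at step $j$ in $\eta_G^+$ as $c_j - 1$ identical machines available at time $j$. Line~\ref{algo:reorg:vdf} selects from the candidates the one with the earliest deadline, matching EDF.

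I would first verify Hall's condition: for every interval $[a, b]$, the number of VDFs with $\text{start}(v) \ge a$ and $\text{finish}(v) \le b$ is at most $\sum_{j=a}^{b}(c_j - 1)$. Each such VDF is computed entirely within $[a, b]$ in $\eta_G$ and consumes $K$ Byzantine Get() calls there; the aggregate Byzantine capacity over $[a, b]$ is $K\sum_{j=a}^{b}|B_j|$, and the correct-majority assumption of GM gives $|B_j| \le c_j - 1$ at every step, so dividing by $K$ yields the bound. Invoking then the classical optimality of EDF for unit jobs with release times on time-varying parallel machines, the algorithm produces a schedule placing every VDF within $[\text{start}(v), \text{finish}(v)]$; in particular the shell of $\textit{vdf}_m$ lies at some step $\le \text{finish}(\textit{vdf}_m) = i$, which is the claim.

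The main obstacle is rendering the EDF-optimality step self-contained in the paper's direct style using only Observation~\ref{obv:freeshell}. To do this I would argue by contradiction: suppose $v := \textit{vdf}_m$ with $\text{finish}(v) = i$ lands at shell $i' > i$, set $\sigma := \text{start}(v)$, and let $s^*$ be the largest step strictly less than $\sigma$ with a free shell in $\eta_G^+$, with $s^* := -1$ if none. Observation~\ref{obv:freeshell}(A) applied at $s^*$ places every VDF with first-Get step $\le s^*$ at a shell in $[0, s^*]$, so every shell in $(s^*, i]$ is filled by a VDF starting after $s^*$; and the EDF rule applied at iterations $\sigma, \ldots, i$, where $v$ is always a candidate with deadline $i$, forces every VDF selected there to have finish $\le i$. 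Counting $v$ together with the VDFs in shells $[\sigma, i]$ yields at least $1 + \sum_{j=\sigma}^{i}(c_j - 1)$ VDFs completed within $(s^*, i]$ in $\eta_G$, while the Byzantine capacity over that window admits only $\sum_{j=s^* + 1}^{i}(c_j - 1)$. The remaining slack comes from shells in $(s^*, \sigma - 1]$ possibly filled by VDFs whose finish exceeds $i$; I would eliminate it by invoking Observation~\ref{obv:freeshell}(B) at the smallest such shell: whenever EDF places a VDF with finish $> i$ there, every finish-$\le i$ VDF with start no larger than that shell must already sit at an earlier shell, and iterating this constraint charges every shell in $(s^*, \sigma - 1]$ to a distinct VDF completing within $(s^*, i]$, restoring the tight count and producing the desired contradiction.
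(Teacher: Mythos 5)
Your overall strategy is in substance the paper's own: the paper's argument is precisely an EDF-feasibility/exchange argument run against a barrier step, with the per-step capacity bound $K\sum_j(c_j-1)$ playing the role of your Hall condition. Your treatment of the shells in $[\sigma,i]$ and your capacity verification are correct. The gap is in your last step, where you handle shells in $(s^*,\sigma-1]$ occupied by VDFs whose finish exceeds $i$. Such a shell cannot be ``charged to a distinct VDF completing within $(s^*,i]$'': Observation~\ref{obv:freeshell}(B) applied at that shell only tells you that every finish-$\le i$ VDF released by then is \emph{already} placed, i.e., it points either at VDFs you have already counted (those sitting at shells in $(s^*,\sigma-1]$ with finish $\le i$) or at VDFs placed at shells $\le s^*$, which by Claim~\ref{clm:startlate} start at or before $s^*$ and hence are not computed inside the window at all. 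It produces no new VDF to pay for the bad shell. Concretely, if $b\ge 2$ shells in $(s^*,\sigma-1]$ hold VDFs finishing after $i$, your tally of VDFs fully computed in $(s^*,i]$ is only $1+\sum_{j=s^*+1}^{i}(c_j-1)-b$, and each of those $b$ stragglers may contribute as little as one Get() call to the window, so the consumed capacity need not exceed the available $K\sum_{j=s^*+1}^{i}(c_j-1)$ and no contradiction follows.

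The repair is to move the left end of the window rather than to charge the bad shells: let $t$ be the \emph{largest} step in $(s^*,\sigma-1]$ holding a VDF with finish $>i$. Such an assignment is exactly scenario (B) of Observation~\ref{obv:freeshell}, since $v$ is still unassigned at iteration $t$ and finishes earlier than the VDF assigned there; the observation then guarantees that every still-unassigned VDF finishing by step $i$ --- in particular $v$ and everything you count in the shells of $(t,i]$ --- starts after $t$ in $\eta_G$. The count $1+\sum_{j=t+1}^{i}(c_j-1)$ against capacity $\sum_{j=t+1}^{i}(c_j-1)$ then goes through in the window $(t,i]$ with no leftover slack. This is exactly what the paper does by taking its barrier $j$ to be the largest step at which \emph{either} scenario of Observation~\ref{obv:freeshell} occurs, rather than only the free-shell scenario restricted to steps below $\sigma$.
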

	
	\begin{proof}		
		 We prove this by contradiction.
        Consider the first step~$i$ in~$\eta_G$, such that the last Get() call of a VDF,~$\textit{vdf}^{\ *}$, is in step~$i$, but~$\textit{vdf}^{\ *}$ is not assigned to a shell in~$\eta_G^+$ by the end of $i$-th iteration of the loop.
		
		Consider the largest step~$j \le i$ such that, one of the scenarios in Observation~\ref{obv:freeshell} happens. 
		If no such~$j$ exists, we take~$j$ to be~-1.
		By Observation~\ref{obv:freeshell}, for all the VDFs that are already assigned to the shells in steps~$[j+1,i]$, their first Get() calls are after step~$j$.
		Furthermore, since they are already assigned to shells, their last unit is no later than~$\textit{vdf}^{\ *}$'s, \emph{i.e.}, their last Get() calls are in or before step~$i$ .
		We call this set of \textit{vdf}s,~$VDF_{occupy}$.
		Then, we have all of the Get() calls of~$VDF_{occupy}$ are in steps~$[j+1,i]$ in~$\eta_G$.
		
		Note that there are no free shells in~$[j+1,i]$ (otherwise,~$j$ would be larger).
		Let the number of~Byzantine nodes in any step~$s$ be~$b_s$.
		Then, the size of~$VDF_{occupy}$ is~$\Sigma_{s=j+1}^i b_s$.
		Therefore the number of Get() calls for~$VDF_{occupy}$ in steps~$[j+1,i]$ is~$K \cdot\Sigma_{s=j+1}^i b_s$ in~$\eta_G$.
		
		Note that in~$\eta_G$, the number of Byzantine nodes at any step~$s$ is at most~$b_s$. 
        Therefore, the total number of Get() calls that can be made in steps~$[j+1,i]$ in~$\eta_G$ is at most~$K \cdot\Sigma_{s=j+1}^i b_s$, and one of them is the last Get() call of~$\textit{vdf}^{\ *}$.
		Therefore, there are not enough ticks available to make all Get() calls for~$VDF_{occupy}$ in~$\eta_G$.
		A contradiction.
	\end{proof}
	
	\begin{claim}
		\label{clm:relativeByzantine}
		Consider any two VDFs,~$\textit{vdf}_1$ and~$\textit{vdf}_2$, reserved respectively at steps~$s_1$ and~$s_2$ in~$\eta_G^+$.
		If the last Get() call of~$\textit{vdf}_1$ is before the first Get() call of~$\textit{vdf}_2$ in~$\eta_G$, then~$s_1 \le s_2$.
	\end{claim}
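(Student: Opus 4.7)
The strategy is to bracket each of $s_1$ and $s_2$ using the two preceding claims and then chain the resulting inequalities. Both Claim~\ref{clm:startlate} and Claim~\ref{clm:endearly} directly constrain the step of a reserved shell in terms of the step in $\eta_G$ at which the first (respectively last) Get() call of a VDF takes place, so the present claim should follow as a corollary without any further reasoning about the internal workings of Algorithm~\ref{algo:reorg}.

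First, let $t_1$ be the step of $\eta_G$ containing the last Get() call for $\textit{vdf}_1$, and let $t_2$ be the step of $\eta_G$ containing the first Get() call for $\textit{vdf}_2$. Because ticks are grouped into consecutive windows of $K$ and the step-of-tick function is non-decreasing, the tick-level hypothesis of the claim (last tick of $\textit{vdf}_1$ strictly precedes first tick of $\textit{vdf}_2$) immediately gives $t_1 \leq t_2$; note the inequality need not be strict, as both ticks might still lie inside the same step of $\eta_G$.

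Now apply Claim~\ref{clm:endearly} to $\textit{vdf}_1$ to obtain $s_1 \leq t_1$, and Claim~\ref{clm:startlate} to $\textit{vdf}_2$ to obtain $s_2 \geq t_2$. Chaining these bounds with $t_1 \leq t_2$ yields
\begin{equation*}
s_1 \;\leq\; t_1 \;\leq\; t_2 \;\leq\; s_2,
\end{equation*}
which is exactly the conclusion of the claim.

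I do not anticipate any substantive obstacle here, because the two preceding claims already carry the combinatorial burden of the argument (the pigeonhole-style counting on shells in Claim~\ref{clm:endearly} and the purely definitional observation in Claim~\ref{clm:startlate}). The only point worth being careful about is the edge case $t_1 = t_2$, where the last Get() of $\textit{vdf}_1$ and the first Get() of $\textit{vdf}_2$ fall inside a common step of $\eta_G$ but at distinct ticks; the chain of inequalities above still yields $s_1 \leq s_2$, so no separate argument is needed.
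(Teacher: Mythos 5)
Your proof is correct, but it takes a different route from the paper's. The paper proves this claim by reasoning directly about the order in which Algorithm~\ref{algo:reorg} processes the two VDFs: since $\textit{vdf}_1$ finishes before $\textit{vdf}_2$ even starts, $\textit{vdf}_1$ enters \textit{CandidateVDFs} no later than $\textit{vdf}_2$ (line~\ref{algo:reorg:CandidateVDFs}) and, finishing earlier, is always preferred by the selection rule of line~\ref{algo:reorg:vdf}; hence it is assigned first, and shells are handed out in non-decreasing step order (line~\ref{algo:reorg:step}). You instead treat Claims~\ref{clm:startlate} and~\ref{clm:endearly} as black boxes and chain $s_1 \le t_1 \le t_2 \le s_2$, where $t_1$ is the step of $\textit{vdf}_1$'s last Get() and $t_2$ the step of $\textit{vdf}_2$'s first Get(). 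This is sound: both VDFs are reserved at shells, so they correspond to messages in $\mathcal{M}_B$ and the two claims apply, and the step-of-tick map is non-decreasing so the tick-level hypothesis gives $t_1\le t_2$. Your bracketing argument is essentially the same device the paper itself deploys later for Claim~\ref{clm:norecursive}, so it fits the paper's toolkit; what it buys is independence from the internal tie-breaking of the assignment loop, whereas the paper's direct argument about assignment order is what naturally generalizes when one needs finer information than the step bounds alone (as the strict inequality in Claim~\ref{clm:norecursive} illustrates). Your handling of the $t_1=t_2$ edge case is appropriate, and there is no circularity since neither of the two earlier claims depends on this one.
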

	
	\begin{proof}
		By line~\ref{algo:reorg:CandidateVDFs},~$\textit{vdf}_2$ must be added to the CandidateVDFs set after~$\textit{vdf}_1$. 
		By line~\ref{algo:reorg:vdf},~$\textit{vdf}_1$ must be assigned to its shell before~$\textit{vdf}_2$.
		Note that, by line~\ref{algo:reorg:step}, shells are assigned in non-decreasing step order; therefore,~$s_1 \le s_2$.
	\end{proof}

	\begin{claim}
		\label{clm:norecursive}
		Consider any three \textit{vdf}s,~$\textit{vdf}_1$,~$\textit{vdf}_2$, and~$\textit{vdf}_3$, reserved respectively at steps~$s_1$,~$s_2$ and~$s_3$ in~$\eta_G^+$.
		If, in~$\eta_G$, the last Get() call of~$\textit{vdf}_1$ is before the first Get() call of~$\textit{vdf}_2$, and the last Get() call of~$\textit{vdf}_2$ is before the first Get() call of~$\textit{vdf}_3$, then~$s_1 < s_3$.
	\end{claim}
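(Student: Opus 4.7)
The plan is to extract a quantitative gap between the last Get() call of $\textit{vdf}_1$ and the first Get() call of $\textit{vdf}_3$ in $\eta_G$, using the fact that the intervening $\textit{vdf}_2$ requires $K$ sequential oracle calls, and then to translate this tick-level gap into a step-level gap via Claims~\ref{clm:startlate} and~\ref{clm:endearly}.

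Concretely, let $\tau_1$ be the tick of the last Get() call for $\textit{vdf}_1$, let $\tau_2$ and $\tau_3$ be the ticks of the first and last Get() calls for $\textit{vdf}_2$, and let $\tau_4$ be the tick of the first Get() call for $\textit{vdf}_3$, all in $\eta_G$. The hypothesis gives $\tau_1 < \tau_2$ and $\tau_3 < \tau_4$. Since evaluating $\textit{vdf}_2$ requires $K$ Get() calls on strictly increasing ticks (each unit depends on the preceding one, so calls are sequential even if different Byzantine nodes share the work), we have $\tau_3 \ge \tau_2 + K - 1$. Chaining the three inequalities yields $\tau_4 \ge \tau_1 + K + 1 > \tau_1 + K$, so $\tau_1$ and $\tau_4$ must lie in distinct steps of $\eta_G$; writing $a$ for the step of $\tau_1$ and $b$ for the step of $\tau_4$, we have $a < b$.

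Now I apply the two earlier claims. By Claim~\ref{clm:endearly}, since the last Get() for $\textit{vdf}_1$ is in step $a$ of $\eta_G$, we have $s_1 \le a$. By Claim~\ref{clm:startlate}, since the first Get() for $\textit{vdf}_3$ is in step $b$ of $\eta_G$, we have $s_3 \ge b$. Combining, $s_1 \le a < b \le s_3$, which gives $s_1 < s_3$, as required.

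The only mildly delicate step is justifying $\tau_3 \ge \tau_2 + K - 1$: this relies on the sequentiality of the VDF computation, which is guaranteed by the oracle's specification (each Get returns the next unit, so the $K$ calls for $\textit{vdf}_2$ occur on $K$ strictly increasing ticks, regardless of which Byzantine nodes issue them). Everything else is a straightforward combination of Claims~\ref{clm:startlate} and~\ref{clm:endearly} with the $K$-ticks-per-step convention; I expect no further obstacles.
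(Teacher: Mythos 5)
Your proof is correct and follows essentially the same route as the paper's: both arguments use the $K$ sequential Get() ticks required for $\textit{vdf}_2$ to establish a gap of more than $K$ ticks between the last Get() of $\textit{vdf}_1$ and the first Get() of $\textit{vdf}_3$, and then convert this to $s_1 < s_3$ via Claims~\ref{clm:endearly} and~\ref{clm:startlate}. The only cosmetic difference is where the tick variables are anchored (you bound the steps directly from $\tau_1$ and $\tau_4$, while the paper routes through the first tick of $\textit{vdf}_2$), which changes nothing substantive.
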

	
	\begin{proof}
		Let the tick of the first Get() call of~$\textit{vdf}_2$, and~$\textit{vdf}_3$ in~$\eta_G$ be~$t_2^f$, and~$t_3^f$.
		
		Note that the last Get() call of~$\textit{vdf}_1$ is before~$t_2^f$ in~$\eta_G$. 
            By Claim~\ref{clm:endearly}, ~$\textit{vdf}_1$ must be assigned to step ~$\lfloor(t_2^f-1)/K\rfloor$ or earlier.
		By Claim~\ref{clm:startlate},~$\textit{vdf}_3$ must be assigned to~$\lfloor t_3^f/K\rfloor$ or later.
		Since $ t_2^f  \le t_3^f - K$, \emph{i.e.},  $( t_2^f-1)  < t_3^f - K$, we have ~$s_1 \le \lfloor(t_2^f-1)/K\rfloor < \lfloor t_3^f/K\rfloor \le s_3$, \emph{i.e.},~$s_1 < s_3$.

	\end{proof}

	Now in~$\eta_G^+$, all the~Byzantine nodes join and leave at the boundaries and stay for a single step; and for each valid message~$m$ sent by a Byzantine node in~$\eta_G$, we have assigned a unique shell~$(s,b)$ for some~$s$ and~$b^*$ to it.
	Then, if~$b$ can receive the messages contained in~$m$'s message coffer,~$M_m$, then~$b$ will make~$K$ Get() calls in step~$s$ for input~$(M_m, nonce_m)$, and therefore~$b$ will be able to send~$m$ in~$\eta_G^+$.
 
	We construct~$\eta_G^+$ so that if a Byzantine node is able to send a (valid) message~$m$, it sends~$m$ to all Byzantine nodes in the next step.
	Every Byzantine node forwards all the messages it has received in a step to all the Byzantine nodes in the next step.
	Furthermore, if~$m$ is received at tick~$t'$ by a correct node~$c$ in~$\eta_G$, one Byzantine node who has~$m$ at tick~$(t'-1)$ will send~$m$ to~$c$ at tick~$(t'-1)$ in~$\eta_G^+$.
	
	We will show in Step~2 that for each valid message~$m$ sent by a Byzantine node in~$\eta_G$ and the shell~$(s,b)$ assigned to it,~$b$ can indeed receive the messages contained in~$m$'s message coffer, and can therefore send~$m$ at~$s$.
	Furthermore, any correct node~$c^+$ at step~$s$ in~$\eta_G^+$ can receive the same set of messages as its corresponding node~$c$ in~$\eta_G$ at step~$s$, and therefore send the same message at~$s$.
	\newline

	\noindent
	\textbf{Step 2}
	
	We will prove by induction that we can construct~$\eta_G^+$ such that any message received and sent in~$\eta_G$ by correct nodes is received and sent in~$\eta_G^+$ at the same step, and any valid message sent in~$\eta_G$ by a Byzantine node is sent in~$\eta_G^+$ at the same or earlier step by a Byzantine node and is received at the same step as in~$\eta_G$ by correct nodes.
	
	Recall that in \GMPLUS, Byzantine nodes can peek VDF results of other Byzantine nodes that complete at the same step.
	
	\begin{description}
		\item[Base case]
		In step~0, we show it is possible to ($i$) make all correct nodes in~$\eta_G^+$ send the same messages sent by their corresponding correct nodes in~$\eta_G$ at step~0, and ($ii$) for any shell~$(0,p_{sh})$ that is reserved for message~$m_{sh}$, make~$p_{sh}$ send~$m_{sh}$.
		
		First, consider the valid messages that are sent by correct nodes in step~0 in~$\eta_G$.
		In step~0, there were not enough ticks to generate a VDF. Therefore, all correct nodes receive no (valid) message in~$\eta_G$, \textit{i.e.}, the message coffer of any valid message sent by a correct node in step~0 in~$\eta_G$ is empty.
		Therefore, by making all the correct nodes have the same initial values as in~$\eta_G$ and pick the same nonces, any (valid) message sent in~$\eta_G$ can also be sent in~$\eta_G^+$ in step~0.
		
		Second, consider any message~$m_{sh}$ whose shell is in step~0 in~$\eta_G^+$.
		Consider the message coffer~$M_{sh}$ of~$m_{sh}$.
		Note that in~$\eta_G$, for any~$m \in M_{sh}$,~$m$'s last Get() call must be before~$m_{sh}$'s first Get() call.
		Therefore, by Claim~\ref{clm:relativeByzantine},~$m$'s shell must also be in step~0.
		Since step~0 is the earliest step, by Claim~\ref{clm:norecursive},~$m$ does not contain any messages in its message coffer, otherwise messages in~$m$'s message coffer would have been assigned to a step earlier than step~0.
		Then,~$p^+$ can include the messages in~$M_{sh}$ in its message coffer in~$\eta_G^+$ by peeking the~$\textit{vdf}_m$, therefore,~$p^+$ can send~$m_{sh}$ in step~0.
		Again, by picking the same nonce,~$m_{sh}$ can be sent in~$\eta_G^+$ in step~0.
		
		We make the nodes send these messages in~$\eta_G^+$ in the following way:
		\begin{itemize}
			\item Correct nodes send their messages to all the nodes.
			\item Byzantine nodes send their messages to all the Byzantine nodes.
			\item Consider a correct node~$p_c$ that receives a message~$m$ sent by a Byzantine node in~$\eta_G$ at step~1.
			Note that by Claim~\ref{clm:endearly},~$m$ must be assigned to a shell in step~0.
			Let that shell be~$(0,b_m)$.
			$b_m$ sends~$m$ to~$p_c$ at step~0 in~$\eta_G^+$.
		\end{itemize}  
		
		\item[Induction hypothesis] 
		Up until step~$k$, any message sent in~$\eta_G$ by a correct node is sent in~$\eta_G^+$ at the same step.
		Byzantine nodes can send all the messages whose shells are at step~$K$.
		Messages received by correct nodes up until step~$(k+1)$ are the same as in~$\eta_G$.
		Byzantine nodes receive all the messages from correct nodes and Byzantine nodes.
	
		\item[Induction step] 
		Consider messages sent at step~$(k+1)$ in~$\eta_G$.
		
		
		First, we prove correct nodes can send the same messages in step~$(k+1)$ in~$\eta_G^+$ as in~$\eta_G$.
		By the induction hypothesis and due to synchrony, correct nodes receive the same set of messages from Byzantine nodes and correct nodes in step~$(k+1)$ in~$\eta_G^+$ and in~$\eta_G$.
		By making all the correct nodes select the same nonces in~$\eta_G^+$ as in~$\eta_G$, correct nodes are going to send the same messages in~$\eta_G^+$ as in~$\eta_G$.
		
		Second, we prove that Byzantine nodes can send all the messages whose shells are at step~$(k+1)$ in~$\eta_G^+$. 
		Consider any message~$m_{sh}$ whose shell is at step~$(k+1)$ and any~$m_M$ in~$m_{sh}$'s message coffer in~$\eta_G$.
        There are two possibilities for~$m_{M}$.
		\begin{description}
			\item[$m_M$ is sent by a Byzantine node]
			Then the last Get() call for the VDF of~$m_M$ is before the first Get() call for the VDF of~$m_{sh}$ in~$\eta_G$.
			Then, by Claim~\ref{clm:relativeByzantine}, the shell reserved for~$m_M$ is in a step~$s_M$, where $s_M \le (k+1)$.
			Note that in~$\eta_G^+$, Byzantine nodes in a step can peek at the VDF results for messages from Byzantine nodes that  finish within the same step; and, by Claim~\ref{clm:norecursive},~$m_M$ does not contain any message whose shell is also in step~$(k+1)$ or later.
			It follows that~$m_M$ can be included in the message coffer of~$m_{sh}$ in~$\eta_G^+$.
			
			\item[$m_M$ is sent by a correct node]
			By Claim~\ref{clm:startlate},~$m_{sh}$ is calculated no later than step~$(k+1)$ in~$\eta_G$.
			Therefore,~$m_M$ must be sent in a step no later than step~$k$ in~$\eta_G$.
			By the induction hypothesis,~$m_M$ is sent in~$\eta_G^+$.
			Therefore,~$m_M$ can be included in~$m_{sh}$'s coffer in~$\eta_G^+$.
		\end{description}
		Therefore, by picking the same nonces, Byzantine nodes can also send the same messages as in~$\eta_G$ at step~$(k+1)$ in~$\eta_G^+$ whose shells are at step~$(k+1)$.
		\newline

		We make the nodes send messages in~$\eta_G^+$ in the following way:
		\begin{itemize}
			\item Correct nodes send their messages to all the nodes.
			\item Byzantine nodes send  their messages to all the Byzantine nodes.
                \item Byzantine nodes forward messages they received to all the Byzantine nodes.
			\item Consider a correct node~$p_g$ that receives a message~$m$ sent by a Byzantine node in~$\eta_G$ at step~$(k+2)$.
			By Claim~\ref{clm:endearly},~$m$ must be assigned to a shell that is no later than~$(k+1)$.
			We just showed above that Byzantine nodes can send all the messages whose shells are at step~$(k+1)$.
			Combining with induction hypothesis, some Byzantine node $b_m$ must have received or generated~$m$. 
			We make~$b_m$ sends~$m$ to~$p_g$ at step~$(k+1)$ in~$\eta_G^+$.
		\end{itemize}  
%
%
%
%
		
	\end{description}
Now we have proven that correct nodes receive and send the same messages in~$\eta_G$ as in~$\eta_G^+$.
Note that we also proved that Byzantine nodes can send all the messages at the step where their shells are. 
By Claim~\ref{clm:endearly}, it directly follows that for any valid message~$m$ sent by a Byzantine node in~$\eta_G$, a Byzantine node sends the same messages in the same step or earlier in~$\eta_G^+$ (\textsc{Reorg}-3). 
 
In summary, we have constructed~$\eta_G^+$ in~\GMPLUS that satisfies \textsc{Reorg}-1,2,3.

\end{proof}

\textBetweenlemmareorgAnddefMAP

\defMAP*
Thus, we can define the execution mapping:
\defeqLVS*
\lemmainterpret*
\begin{proof}
	For execution~$\eta_G^+$, we construct the interpretation of~$\eta_G^+$,~$\eta_S^+$, in~SM+.
    First, for every $p$ in $\eta_G^+$, we add a corresponding $\hat{p}$ to $\eta_S^+$ such that:
    \begin{itemize}
        \item  $\hat{p}$ joins and leaves at the same steps that $p$ joins and leaves, respectively.
	\item $\hat{p}$ has the same initial value as $p$.
        \item If $p$ is a Byzantine node, then $\hat{p}$ is a defective node; otherwise, $\hat{p}$ is a good node.
    \end{itemize}  

    The number of Byzantine nodes in a step of $\eta_G^+$ is smaller than the number of correct nodes at each step and the number of defective and good nodes are equal to those of Byzantine and correct nodes, respectively; 
    therefore, the number of defective nodes in~$\eta_S^+$ is fewer than that of good nodes at each step.
    Thus, Condition~\ref{def:equi:1},~\ref{def:equi:2} and~\ref{def:equi:3} are satisfied.
	
    We now construct the messages sent in~$\eta_S^+$ such that~$\eta_S^+$ is an interpretation of~$\eta_G^+$. 
    Specifically, we require Condition~\ref{def:equi:4} and Condition~\ref{def:equi:5} in Definition~\ref{def:eq-L-VS} to be satisfied for the messages sent.
    We prove this by induction on steps.
    Note that messages are constructed inductively alongside the induction.
    
    \begin{description}
        \item[Induction Base] 
        Consider any node~$p$ in~$\eta_G^+$ at the first step, and message~$m$ that it generates at the first step.
        We prove the claim holds for the first step, conditioned on whether~$p$ is a correct node or not.
        \begin{description}
            \item[$p$ is a correct node]  
        Note that~$p$ and~$\hat{p}$ have the same initial value~$v_p$ by construction.
        We now prove that~$\hat{p}$ will send~$\hat{m} = \textsc{Mapm}(m)$ to all nodes at the first step in~$\eta_S^+$.

        Since there are not enough ticks to generate a VDF,~$p$ does not receive any message in the first step, therefore,~$m$ does not contain any message in its message coffer.
        Therefore,~$m = (p,r=1,v=v_p,\varpriority =0,\varpriorityCounter=0,M=\emptyset,\cdot, \cdot)$.
        Note that~$\hat{p}$ cannot receive any message in the first step, either. 
        The message~$\hat{p}$ sends at the first step, by Sandglass, is~$(\hat{p},uid=1,r=1,v=v_p,\varpriority =0,\varpriorityCounter=0,M=\emptyset)$, which is equal to~$\hat{m}$.
        Note that in~$\eta_S^+$, since good nodes are synchronously connected, all the good nodes in the second step will receive~$\hat{m}$.
        Note that it's possible for a defective node not to receive~$\hat{m}$, by performing an omission failure, or to receive~$\hat{m}$ in any step, being asynchronously connected to other nodes.

        \item[$p$ is a Byzantine node]
        Note that in~SM+, defective nodes at step~$s$ can receive any message~$m$ sent at~$s$, as long as~$m$ does not contain in its coffer a message that is also sent at~$s$. 
        Again, we will prove~$\hat{m}$ can and will be sent at the first step in~$\eta_S^+$.

        We first consider any message~$m$ generated by a Byzantine node~$p$, whose message coffer is empty.
        With the same argument for a correct node~$p$ in the first step of~$\eta_G^+$,~$\hat{p}$ will send~$\hat{m}$ in the first step of~$\eta_S^+$.
        Again, note that it's possible for any node not to receive~$\hat{m}$, when~$\hat{p}$ performs an omission failure, or to receive~$\hat{m}$ in any step, since~$\hat{p}$ is asynchronously connected to other nodes.

        Second, let's consider a message~$m$ generated by a Byzantine node~$p$, whose message coffer is not empty.
        Note that the number of messages that can be included in~$m$'s coffer is at most the number of Byzantine nodes in the first step, which is smaller than~\varR.
        We have~$m = (p,r=1,v,\varpriority =0,\varpriorityCounter=0,M,\cdot, \cdot)$.
        Consider any message~$m_{pk} \in M$,~$m_{pk}$ must be sent by a Byzantine node and the \textit{vdf} in~$m_{pk}$ is a Peek() result.
        By specification of Peek(),~$m_{pk}$ must be generated in the first step, and not contain any message in its message coffer.
        We have proven above that~$\hat{m}_{pk}$ will be sent in the first step of~$\eta_S^+$. 
        Note that message coffer of~$\hat{m}_{pk}$ is also empty.
        Therefore, we make the scheduler deliver~$\hat{m}_{pk}$ to the defective node~$\hat{p}$ and therefore~$\hat{p}$ will include~$\hat{m}_{pk}$ in its message coffer.
        In summary, for every~$m_{pk} \in M$,~$\hat{p}$ will receive~$\hat{m}_{pk}$, and~$|M| < \varR$.
        Therefore,~$\hat{p}$ will send~$ (p,r=1,v,\varpriority =0,\varpriorityCounter=0,\textsc{Mapm}(M))$, which is equal to~$\hat{m}$.

        \end{description}
        
        \item[Induction Hypothesis] 
        Node~$\hat{p}$ receives~$\hat{m}$ at step~$s' \le s$ in~$\eta_S^+$, iff~$p$ receives~$m$ at step~$s'$ in~$\eta_G^+$ (Condition~\ref{def:equi:4}).
        Node~$\hat{p}$ sends~$\hat{m}$ at step~$s' \le s$ in~$\eta_S^+$, iff~$p$ generates a message~$m$ in~$\eta_G^+$ at step~$s'$ (Condition~\ref{def:equi:5}). 
        
        \item[Induction Step] Now we prove that the claim holds for step $s+1$, conditioned on whether~$p$ is a correct node or not.

        \begin{description}
        \item[$p$ is a correct node] 
        We prove Condition~\ref{def:equi:4} and Condition~\ref{def:equi:5} separately.
        \begin{description}
            \item[Condition~\ref{def:equi:4}]
            First, we will prove that if~$p$ receives~$m$ at step~$(s+1)$ in~$\eta_G^+$, then~$\hat{p}$ receives~$\hat{m}$ at step~$(s+1)$ in~$\eta_S^+$.
        
        If~$m$ is generated by a correct node~$p_c$,~$m$ must be sent at step~$s$ in~$\eta_G^+$.
        By the induction hypothesis,~$\hat{m}$ must be sent by a good node~$\hat{p_c}$ at step~$s$ in~$\eta_S^+$.
        Note that in~SM+, the network between good nodes is synchronous.
        Therefore,~$\hat{p}$ receives~$\hat{m}$ at~$(s+1)$ in~$\eta_S^+$.

        If~$m$ is generated by a Byzantine node~$p$,~$m$ could have been generated at any step~$s' \le s$ and finally sent to~$p$ at~$s$ in~$\eta_G^+$.
        By the induction hypothesis,~$\hat{m}$ must be sent by a defective node~$\hat{p}$ at step~$s'$ in~$\eta_S^+$.
        Note that in~SM+, the network between good nodes and defective nodes is asynchronous.
        Therefore, we can make the scheduler deliver~$\hat{m}$ to~$\hat{p}$ at~$(s+1)$ in~$\eta_S^+$.

        \item[Condition~\ref{def:equi:5}] 
         Now we prove that if~$p$ generates a message~$m$ in~$\eta_G^+$ at step~$(s+1)$, then~$\hat{p}$ can send~$\hat{m}$ at step~$(s+1)$ in~$\eta_S^+$.
        Consider the set of messages~$Rec_p$ received by~$p$ at~$s$.
        By the induction hypothesis and the proof for Condition~\ref{def:equi:4},~$\hat{p}$ received a set of messages~$\textsc{Mapm}(Rec_p)$ at step~$(s+1)$.
        
        Note that Gorilla (lines \ref{gorilla:resetM}-\ref{gorilla:unionCollectedMessagesEnd},\ref{gorilla:unionSameRoundMsg}) and Sandglass (lines \ref{sandglass:resetM}-\ref{sandglass:unionCollectedMessagesEnd},\ref{sandglass:unionSameRoundMsg}) construct message coffers in the same way based on the messages received.
        Let the message coffer maintained by~$p$ be~$M_p$.
        Then, the message coffer~$\hat{p}$ has maintained up to~$(s+1)$ is~$\textsc{Mapm}(M_p)$.
        
        It follows that~$\hat{p}$'s set~$C_{\hat{p}}$ at line~\ref{sandglass:multiset} is~$\textsc{Mapm}(C_p)$.
        If the proposal value~$v_p$ of~$p$ is chosen based on the \textit{vdf} at line~\ref{gorilla:vdfValueUpdate}, then~$\hat{p}$ also chooses proposal value~$v_{\hat{p}}$ based on a random selection at line~\ref{sandglass:valueUpdateEnd}.
        Then in~$\eta_S^+$,~$\hat{p}$ chooses the value~$v_p$.
        Note that any~$m_M \in M_p$ and~$\textsc{Mapm}(m_M)$ have the same round number, proposal value, \varpriority and \varpriorityCounter.
        The variables \varpriority and \varpriorityCounter are updated the same way in Gorilla (lines~\ref{gorilla:setPriorityCounter}-\ref{gorilla:setPriority}) and in Sandglass (lines~\ref{sandglass:setPriorityCounter}-\ref{sandglass:setPriority}).
        Therefore,~$\hat{p}$ sends message~$\textsc{Mapm}(m)$ at~$s$, which has~$\hat{p}$ as the process, the same round number, proposal value, \varpriority and \varpriorityCounter as~$m$, and~$\textsc{Mapm}(M_p)$ as the message coffer.
        
        \end{description}

        \item[$p$ is a Byzantine node]

        We further separate this case into two sub-cases:
        \begin{description}
            \item[$p$ does not receive a peek result]
            First, we prove that if~$p$ receives~$m$ at step~$(s+1)$ in~$\eta_G^+$, then~$\hat{p}$ receives~$\hat{m}$ at step~$(s+1)$ in~$\eta_S^+$.
            Since~$m$ is not a Peek() result,~$m$ must be sent at step~$s$ or earlier in~$\eta_G^+$.
            By the induction hypothesis,~$\hat{m}$ must be sent at step~$s$ or earlier in~$\eta_S^+$.
            Note that in~SM+ connections to defective nodes are asynchronous.
            Therefore, we can make the scheduler deliver~$\hat{m}$ to~$\hat{p}$ at~$(s+1)$ in~$\eta_S^+$.

            Now we prove that if~$p$ generates a message~$m$ in~$\eta_G^+$ at step~${(s+1)}$, then~$\hat{p}$ can send~$\hat{m}$ at step~$(s+1)$ in~$\eta_S^+$.

            Consider~$m= (r_m,v_m,\varpriority_m, \varpriorityCounter_m, M_m, nonce_m, \textit{vdf}_m)$.
            Note that by Lemma~\ref{lemma:reorg}~$p$ joins the system for only one step.
            Therefore, for any~$m_M \in M_m$,~$p$ receives it at step~$(s+1)$.
            Therefore, from what we proved above,~$\hat{p}$ receives~$m_M$ at step~$(s+1)$.
            Therefore~$\hat{p}$ has~$\textsc{Mapm}(M_m)$ as its coffer.
            
            Now we prove that~$\hat{p}$ can have~$v_m$ as proposal value,~$\varpriority_m$ as priority, and~$\varpriorityCounter_m$ as \varpriorityCounter \ following the Sandglass protocol. 
            Note that by Lemma~\ref{lemma:reorg},~$m$ must be valid (isValid(m) equals true).

            \begin{description}
                \item[$\varpriorityCounter_m$ is~0] For all round~$(r_m-1)$ messages in~$M_m$, the largest priority value for proposal values~$a$ and~$b$ are the same.
                Therefore, in $\textsc{Mapm}(M_m)$, the largest priority value for proposal values~$a$ and~$b$ are the same.
                Then, it is possible that~$\hat{p}$ chooses~$v_m$ as its proposal value at line~\ref{sandglass:valueUpdateEnd}.
                \item[$\varpriorityCounter_m$ is not~0]
                All round~$(r_m-1)$ messages in~$M_m$ have the same proposal value~$v_m$.
                Therefore, in $\textsc{Mapm}(M_m)$, all round~$(r_m-1)$ messages also propose~$v_m$.
                By Sandglass protocol (line~\ref{sandglass:valueUpdateStart}),~$\hat{p}$ will set its proposal value to~$v_m$.
            \end{description}          
            Since~$m$ is consistent,~$\hat{p}$ will set \varpriorityCounter \  to $\varpriorityCounter_m$ and \varpriority to $\varpriority_m$.

            Note that~$\hat{p}$ joins the system for only one step, and therefore it sends only one message. 
            We have~$\hat{p}$ will send~$\hat{m} = (\hat{p},uid=1,r_m,v_m,\varpriority_m, \varpriorityCounter_m, \textsc{Mapm}(M_m))$.
            
            \item[$p$ receives some peek results]
            Consider any peek result~$m'$ that~$p$ receives at step~$(s+1)$.
            Due to the constraint on peeking,~$m'$ does not contain any peek result, and~$m'$ is sent by a Byzantine node within step~$(s+1)$.
            By what we proved in the first sub-case,~$\hat{m'}$ must be sent in step~$(s+1)$.
            Since in~SM+, the defective node~$\hat{p}$ can receive~$m'$.

            Then, based on the same argument as the first sub-case, we can show that~$\hat{p}$ can send~$\hat{m}$.
        \end{description}
        \end{description}
    \end{description}
\end{proof}

\subsection{Safety}\label{sec:appendix-safety}
\textAtStartOfSafety

\lemmadecisionreorg*
\begin{proof}
    Consider any execution~$\eta_G$ in~GM and $\eta_G^+ = \textsc{Reorg}(\eta_G)$ in GM+.

    Consider a correct node $p$ that decides a value $v$ at step $s$ in $\eta_G$.
    Consider the message~$m= (r,v, \varpriority, \varpriorityCounter, M, nonce, \textit{vdf})$ that~$p$ sends right after it decides.
    Note that by Gorilla protocol, this is the first step that~$p$ ever collects at least~\varR messages for round~$(r-1)$ and $\varpriority \ge 6\varR +4$.
    By Lemma~\ref{lemma:reorg},~$p^+$ receives and sends the same messages as~$p$ in the same steps.
    Therefore,~$s$ is also the first step that~$p^+$ collects at least~\varR messages for round~$(r-1)$, and~$p^+$ sends~$m$ at step~$s$.
    Therefore,~$p^+$ must also have decided~$v$ at step~$s$.
\end{proof}

\lemmadecisioninterpret*
\begin{proof}
    Consider any execution~$\eta_G$ in~GM. 
	Consider execution~$\eta_G^+ = \textsc{Reorg}(\eta_G)$ in GM+, and~$\eta_S^+$ in~SM+ is an interpretation of~$\eta_G^+$.
	
	\begin{description}
		\item[Proof for Statement (\ref{lemma:equivalence:statement1}):] Consider a correct node $p$ that decides a value $v$ at step $s$ in $\eta_G^+$.
		Consider the message~$m= (r,v,priority, \varpriorityCounter, M, nonce, \textit{vdf})$ that~$p$ sends right after it decides.
		Note that by Gorilla protocol, this is the first step that~$p$ ever collects at least~\varR messages for round~$(r-1)$ and $priority \ge 6\varR +4$.
		Since~$\eta_S^+$ is an interpretation of~$\eta_G^+$, by Definition~\ref{def:eq-L-VS},~$\hat{p}$ receives the equivalent messages received by~$p$ in the same steps.
		Therefore,~$s$ is also the first step that~$\hat{p}$ collects at least~\varR messages for round~$(r-1)$.
		Note that by definition of equivalence,~$\hat{p}$ sends~$\hat{m}= (\hat{p}, \cdot, r,v,priority, \varpriorityCounter, M)$ at~$s$ with~$priority \ge 6\varR +4$.
		Then~$\hat{p}$ must also have decided~$v$ at step~$s$.
		
		\item[Proof for Statement (\ref{lemma:equivalence:statement2}):] 
		Consider the first message~$m = (r,v,priority=0, \varpriorityCounter=0, M, nonce, \textit{vdf})$ that~$p$ generates for round~$r$, and assume~$m$ is generated at step~$s$.
		By the definition of equivalence,~$\hat{p}$ sends~$\hat{m}= (\hat{p}, \cdot, r,v,priority=0, \varpriorityCounter=0, M)$ at~$s$ in~$\eta_S^+$, and~$\hat{m}$ is also the first message~$\hat{p}$ sent for round~$r$.
		Since~\varpriorityCounter is~0,~$v$ is chosen based on a coin toss by Sandglass protocol.
		Therefore,~$\hat{p}$ must have flipped a coin at step~$s$ and the result is~$v$.
	\end{description}
\end{proof}

\thmgorillaAgr*
\proofthmgorillaAgr*
\thmgorillaVad*
\proofthmgorillaVad*

\subsection{Liveness}\label{sec:appendix-liveness}
\textAtStartOfLiveness

\defMesHis*
\defCon*
\defEnvG*
\defStrategyB*

\textRandomizedStrategyAli
\lemmastrategyRD*
\begin{proof}
    Let us fix the environment~$\mathcal{E}$.
    Consider a randomized Byzantine strategy~$\Theta_{\mathcal{E}}$ that achieves a positive non-termination probability.
    We omit~$\mathcal{E}$ from the subscript for brevity.
    Denote the non-termination event in~$H_\Theta$ with $NT$, \emph{i.e.},~$P_{H_{\Theta}}(NT) > 0$.
    For brevity, we drop~$H_\Theta$ from the subscripts for the probabilities.

    We provide an inductive proof.
    Consider the first time that a Byzantine node takes a randomized action. 
    This node can only choose from a countable set of actions: sending a message to a node or calculating a unit of
    VDF.
    Let us call this set of actions~$\mathcal{A} = \{A_1, A_2, \dots\}$.
    We have $P(NT) = \sum_{i}P(A_i)P(NT|A_i)$.
    Now, since $P(NT) > 0$, there should exist an $A_i$ such that $P(NT|A_i) > 0$, \emph{i.e.}, the Byzantine nodes could have
    achieved positive non-termination probability by \emph{deterministically} taking the action $A_i$.
    Similarly, for every further randomized action with execution prefix~$\phi$, and with the action choices~$\mathcal{A}'=\{A_1', A_2',\dots\}$, if~$P(NT|\phi) > 0$, then there should exist an~$A_i'$ such that~$P(NT|\phi,A_i') > 0$.
    Repeating this process, we can carve a deterministic strategy from~$\Theta$, such that non-termination still has a positive probability.
\end{proof}
\textAfterDefEnv

\defTermProbOne*

\textAfterdefTermProbOne

\lemmareorgstrategy*
\prooflemmareorgstrategy*

We can now continue our proof by showing that the \textsc{Interpret} mapping preserves the non-zero non-termination probability,
which will help us prove our desired liveness property, termination with probability~1.
In order to do this, we first introduce a machinery that allows us to prove the perseverance the non-zero non-termination probability
throughout the \textsc{Interpret} mapping.
However, first note that the \textsc{Interpret} mapping changes nothing in the executions, and only maps the
``syntax'' of~GM+ to that of~SM+.
Therefore, the mapping does not introduce non-deterministic decision points for Byzantine nodes.
Furthermore, as shown in Lemma~\ref{lemma:interpret}, actions by Byzantine nodes in any execution in~GM+
are translated to actions by the network in~SM+.
Therefore, we conclude that a Byzantine strategy $\Theta$ in~GM+ is mapped to a network strategy
in~SM+.
Abusing notation, let us show the mapped strategy with $\textsc{Interpret}(\Theta)$.
\begin{definition}\label{def:events}
	Given strategy $\Theta$ for Byzantine nodes (the scheduler message delivery), let~$NT_\Theta$ be the set of executions that do not terminate.
	Moreover, let us define $NT_\Theta^i$ to be the event where the correct (good) node $i$ joins, never leaves and never decides.
	Similarly, for every $n\in\mathbb{N}$ we define $NT_\Theta^{n,i}$ to be the event where the correct (good) node $i$ joins, does not leave, and does not decide within the first $n$ steps of the execution.
\end{definition}
First, note that we can enumerate the correct/good nodes since they are countable.
Second, note that our definition of termination in Section~\ref{sec:model} implies $NT_\Theta = \cup_{i=1}^{\infty}NT_\Theta^i$.
\begin{definition}\label{def:RVs}
	Given a strategy $\Theta$ for Byzantine nodes (the scheduler message delivery) in~GM+ (~SM+),
	and a correct (good) node $i$, we define the random variable $X_\Theta^i$ for each $\eta\in H_{\Theta}$ as follows:
	\begin{displaymath}
		X^i_\Theta(\eta) = \begin{cases}
			1 &\quad \text{If $i$ joins, never leaves, and never decides during $\eta$,}\\
			0 &\quad\text{Otherwise.}
		\end{cases}
	\end{displaymath}
	Furthermore, let us define the random variables $\{X_\Theta^{n, i}\}_{n=1}^{\infty}$ as follows:
	\begin{displaymath}
		X_\Theta^{n, i}(\eta) = \begin{cases}
			1 &\quad \text{If $i$ joins, does not leave, and does not decide within the first $n$ steps during $\eta$,}\\
			0 &\quad\text{Otherwise.}
		\end{cases}
	\end{displaymath}
\end{definition}
\begin{lemma}\label{lem:Xn}
	For every strategy $\Theta$, every correct/good node~$i\in\mathbb{N}$, and every $\eta\in H_\Theta$, we have $\lim_{n\rightarrow\infty}X_\Theta^{n, i}(\eta) = X_\Theta^i(\eta)$, \emph{i.e.}, $X_\Theta^{n, i}$ converges (almost) surely to $X_\Theta^i$.
\end{lemma}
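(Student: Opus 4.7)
The plan is to argue pointwise convergence: for every execution $\eta \in H_\Theta$, I will show that the sequence $\{X_\Theta^{n,i}(\eta)\}_{n=1}^\infty$ is eventually constant and takes the value $X_\Theta^i(\eta)$. Pointwise convergence trivially implies almost-sure convergence (and in fact sure convergence), which is stronger than what the lemma asserts; so the whole argument reduces to a case analysis on the definitions in Definition~\ref{def:RVs}.

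I will split on the value of $X_\Theta^i(\eta)$. In the case $X_\Theta^i(\eta) = 1$, the node $i$ joins at some step $s_i$, never leaves, and never decides in $\eta$. Then for every $n \ge s_i$, the node $i$ has joined within the first $n$ steps of $\eta$, and since it never leaves nor decides, it certainly has not done so within the first $n$ steps; thus $X_\Theta^{n,i}(\eta) = 1$ for all $n \ge s_i$, and the limit is $1$. In the case $X_\Theta^i(\eta) = 0$, at least one of three things happens in $\eta$: $i$ never joins, or $i$ joins and later leaves, or $i$ joins and later decides. If $i$ never joins, then $X_\Theta^{n,i}(\eta) = 0$ for every $n$. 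Otherwise, let $t$ be the step at which $i$ first leaves or decides (whichever occurs earlier); then for every $n \ge t$, within the first $n$ steps $i$ has either left or decided, so $X_\Theta^{n,i}(\eta) = 0$. Either way, the tail of the sequence is $0$, and the limit is $0 = X_\Theta^i(\eta)$.

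Combining the two cases, $\lim_{n\to\infty} X_\Theta^{n,i}(\eta) = X_\Theta^i(\eta)$ holds for every $\eta \in H_\Theta$, which is the desired conclusion. I do not expect a substantive obstacle: the only subtlety is making sure the phrase ``joins, does not leave, and does not decide within the first $n$ steps'' in Definition~\ref{def:RVs} is interpreted consistently with ``joins, never leaves, and never decides'' from Definition~\ref{def:events}. Once one fixes the convention that $X_\Theta^{n,i}(\eta) = 1$ requires $i$ to have joined by step $n$ and not left or decided by then, the case analysis goes through mechanically, and the key conceptual point is simply that ``never'' is the conjunction over all finite horizons of ``not within the first $n$ steps''.
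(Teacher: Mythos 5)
Your proposal is correct and follows essentially the same argument as the paper: a pointwise case analysis on whether node $i$ never decides, leaves, or decides, showing the sequence $X_\Theta^{n,i}(\eta)$ is eventually constant and equal to $X_\Theta^i(\eta)$. The only (harmless) difference is that you also explicitly handle the case where $i$ never joins, which the paper's proof leaves implicit.
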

\begin{proof}
	Consider any execution $\eta\in H_\Theta$.
	If node~$i$ joins, never leaves, and never decides during $\eta$, then for every $n$ we have $X_\Theta^i(\eta) = X_\Theta^{n, i}(\eta) = 1$.
 
        If node~$i$ joins and leaves without deciding, then there exists a step~$n_i\in\mathbb{N}$ in which the node leaves.
        Therefore, for every~$n\geq n_i$, we have~$X_\Theta^i(\eta) = X_\Theta^{n, i}(\eta) = 0$.
	If node~$i$ joins and decides during $\eta$, then there exists an $n_i\in\mathbb{N}$ such that node~$i$ decides at step $n_i$.
	Therefore, for every $n\geq n_i$ we have $X_\Theta^i(\eta) = X_\Theta^{n, i}(\eta) = 0$.
	
\end{proof}
Based on Lemma~\ref{lem:Xn}, and the dominated convergence theorem~\cite{jacod-protter}, we have the following lemma.
\begin{lemma}\label{lem:convergence}
	For every strategy $\Theta$ and every node $i$ we have $\lim_{n\rightarrow\infty}E\{X_\Theta^{n, i}\} = E\{X_\Theta^i\}$.
\end{lemma}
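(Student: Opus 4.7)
The statement is a direct invocation of Lebesgue's dominated convergence theorem, so the plan is essentially to verify its three hypotheses in our setting and then cite it. The probability space we work in is $(H_\Theta, P_{H_\Theta})$, which, as developed earlier in the paper, is obtained by pushing forward the standard measure on coin-toss strings through the bijection $\textsc{Coins}$. Each $X_\Theta^{n,i}$ is the indicator of the event $NT_\Theta^{n,i}$ (and $X_\Theta^i$ is the indicator of $NT_\Theta^i$); these events are measurable because whether node $i$ has joined, not left, and not decided within the first $n$ steps depends only on finitely many coin tosses, so $X_\Theta^{n,i}$ is a simple measurable function, and $X_\Theta^i = \lim_n X_\Theta^{n,i}$ (by Lemma~\ref{lem:Xn}) is measurable as a pointwise limit of measurable functions.

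Next, I would point out the pointwise, almost-sure convergence: Lemma~\ref{lem:Xn} gives $\lim_{n\to\infty} X_\Theta^{n,i}(\eta) = X_\Theta^i(\eta)$ for every $\eta \in H_\Theta$, which is a fortiori almost-sure convergence under $P_{H_\Theta}$. For the domination hypothesis, I would observe that each $X_\Theta^{n,i}$ takes values in $\{0,1\}$ and is therefore dominated by the constant function $\mathbf{1}$, which is integrable on a probability space since $E\{\mathbf{1}\} = P_{H_\Theta}(H_\Theta) = 1 < \infty$.

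With measurability, almost-sure convergence, and an integrable dominating function in hand, the dominated convergence theorem (as stated, e.g., in~\cite{jacod-protter}) yields
\[
\lim_{n\to\infty} E\{X_\Theta^{n,i}\} \;=\; E\Bigl\{\lim_{n\to\infty} X_\Theta^{n,i}\Bigr\} \;=\; E\{X_\Theta^i\},
\]
which is exactly the claim. I do not anticipate any real obstacle: the only subtlety worth spelling out is the measurability of $X_\Theta^i$, which is immediate once one notes it is a pointwise limit of simple functions, and the identification of the events $NT_\Theta^{n,i}$ and $NT_\Theta^i$ as measurable sets in the coin-toss $\sigma$-algebra. The rest is a one-line appeal to DCT.
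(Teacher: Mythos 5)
Your proof is correct and follows essentially the same route as the paper's: both invoke the dominated convergence theorem using the pointwise convergence from Lemma~\ref{lem:Xn} and the fact that each $X_\Theta^{n,i}$ is bounded between $0$ and $1$. Your additional remarks on measurability are a welcome but minor elaboration of what the paper leaves implicit.
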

\begin{proof}
	Based on Lemma~\ref{lem:Xn}, the sequence of random variables $\{X_\Theta^{n, i}\}$ converges pointwise to $X_\Theta^i$.
	Furthermore, it is clear that every $X_\Theta^{n, i}$ is non-negative and bounded from above by~1.
	This satisfies the conditions required for the dominated convergence theorem, thus the theorem tells us~$\lim_{n\rightarrow\infty}E\{X_\Theta^{n, i}\} = E\{X_\Theta^i\}$.
\end{proof}
\begin{lemma}\label{lemma:strategy-interpret-appendix}
    For every strategy~$\Psi$ in~GM+, there exists a strategy~$\Lambda$ in~SM+ such that~$H_\Lambda = \{\textsc{Interpret}(\eta)|\eta\in H_\Psi\}$.
\end{lemma}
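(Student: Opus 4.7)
The plan is to define $\Lambda$ by composing $\Psi$ with the \textsc{Interpret} mapping on message histories, and then verify that this composition is a well-defined scheduler strategy in~SM+ whose induced set of executions coincides with $\textsc{Interpret}(H_\Psi)$.

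First, I observe that the \textsc{Interpret} mapping (Lemma~\ref{lemma:interpret}), restricted to messages received by correct/good nodes, induces a bijection on message histories: given any GM+ message history $\mathcal{MH}_s$ (triples $\langle m, p, s'\rangle$ with $p$ correct), applying \textsc{Mapm} to each $m$ and relabeling $p$ as $\hat{p}$ yields an SM+ message history $\widehat{\mathcal{MH}}_s$, and this assignment is injective because each $m$ in an execution has a unique sender/uid pair under \textsc{Mapm}. Hence, for any SM+ history $\widehat{\mathcal{MH}}_s$ in the image of \textsc{Interpret}, there is a unique GM+ history $\mathcal{MH}_s$ with $\textsc{Mapm}(\mathcal{MH}_s) = \widehat{\mathcal{MH}}_s$. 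Define
\[
\Lambda(\widehat{\mathcal{MH}}_s) \;=\; \textsc{Mapm}\bigl(\Psi(\mathcal{MH}_s)\bigr).
\]
On SM+ histories outside the image of \textsc{Interpret}, $\Lambda$ may be set arbitrarily (e.g., to any fixed compatible output), since such histories will not arise along the executions we care about.

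Next, I would verify that $\Lambda$ is a valid SM+ scheduler strategy, i.e., that $\Lambda(\widehat{\mathcal{MH}}_s)$ is compatible with $\widehat{\mathcal{MH}}_s$ in~SM+. Compatibility of $\Psi(\mathcal{MH}_s)$ in~GM+ gives a concrete GM+ execution extending the history; applying \textsc{Interpret} to that execution (exactly as in the proof of Lemma~\ref{lemma:interpret}) yields an SM+ execution that realizes both $\widehat{\mathcal{MH}}_s$ and $\Lambda(\widehat{\mathcal{MH}}_s)$. Here the SM+-specific relaxation (defective nodes may receive messages from other defective nodes within the same step, subject to the coffer restriction) is precisely what is needed to translate Byzantine peeking into legitimate scheduler behavior, exactly mirroring the way Definition~\ref{def:eq-L-VS} and Lemma~\ref{lemma:interpret} line up GM+ Byzantine actions with SM+ scheduler actions.

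I would then check the set-equality $H_\Lambda = \{\textsc{Interpret}(\eta)\mid \eta\in H_\Psi\}$ by induction on steps. For each $\eta\in H_\Psi$, Lemma~\ref{lemma:interpret} produces a specific $\textsc{Interpret}(\eta)\in H_\Lambda$: the coin-toss bijection from the proof of Lemma~\ref{lemma:interpret} (each $\textit{vdf} \bmod 2$ in GM+ corresponds to the uniform choice at line~\ref{sandglass:valueUpdateEnd} in SM+) lets us match the coin-toss string of $\eta$ with that of $\textsc{Interpret}(\eta)$, and by construction of $\Lambda$ the scheduler's outputs along the matched coin-toss string are exactly the \textsc{Mapm}-images of $\Psi$'s outputs. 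Conversely, any $\eta_S\in H_\Lambda$ is built step by step from $\Lambda$'s outputs on its own history, and unwinding the composition recovers a unique $\eta\in H_\Psi$ with $\textsc{Interpret}(\eta)=\eta_S$.

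The main obstacle I foresee is bookkeeping around well-definedness: I must ensure that the SM+ history really does pin down a unique preimage in the GM+ history, despite \textsc{Mapm} discarding \textit{vdf} and \textit{nonce}. The uid and sender fields injected by \textsc{Mapm} together with the inductively built one-to-one correspondence between messages sent in the two executions resolve this, but I would spell out carefully that two distinct Byzantine messages in GM+ never collapse to the same SM+ message — and, symmetrically, that once the preimage is fixed, $\Psi$'s deterministic output (using Lemma~\ref{lemma:strategyRD} to restrict attention to deterministic strategies if needed) produces a well-defined $\Lambda$.
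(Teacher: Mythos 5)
Your proposal is correct and follows essentially the same route as the paper: the paper also defines $\Lambda$ by transporting $\Psi$ through the \textsc{Interpret}/\textsc{Mapm} correspondence on message histories, notes that compatibility is immediate because $\textsc{Interpret}(\eta)$ is an actual SM+ execution, and concludes $H_\Lambda = \textsc{Interpret}(H_\Psi)$. You simply spell out the well-definedness and injectivity bookkeeping (unique preimage via sender/uid, the two directions of the set equality) that the paper's much terser proof leaves implicit.
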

\begin{proof}
    We show that the strategy~$\Lambda$ exists, and is in fact the same as~$\Psi$.
    For brevity, let~$I_\Psi$ denote~$\textsc{Interpret}(H_\Psi)$, and consider any execution~$\eta$ in~$H_\Psi$.
    According to Lemma~\ref{lemma:interpret}, correct nodes in~$\eta$ receive the same messages, at the same steps, as the good nodes in~$\textsc{Interpret}(\eta)$.
    Thus, the message history up to any step~$s$ in~$\eta$ is the same as the message history up to the same step in~$\textsc{Interpret}(\eta)$.
    In addition, since~$\textsc{Interpret}(\eta)$ is an actual~SM+ execution, compatibility is trivially satisfied.
    Thus, we conclude that the executions in~$I_\Psi$ follow the same strategy as in~$\Psi$.
    Naming this strategy~$\Lambda$ finishes the proof.
    
\end{proof}
\lemmainterpretstrategy*
\begin{proof}
	Let us assume that there exists an environment~$\mathcal{E}$ and a Byzantine strategy $\Theta_{\mathcal{E}}$ in the Gorilla protocol that        leads to a positive non-termination probability.
	According to Lemma~\ref{lemma:reorg-strategy}, there exists a strategy $\Psi$ in~GM+
	that achieves a positive non-termination probability, in environment~$\mathcal{E}$.
        Consider the~$\textsc{Interpret}$ mapping.
        Since, according to Lemma~\ref{lemma:interpret}, the joining/leaving and initial value schedules for correct nodes are bijectively mapped by the~$\textsc{Interpret}$ mapping to the joining/leaving and initial value schedules of the good nodes, respectively, we just set~$\mathcal{E}' = \mathcal{E}$.
        In the rest of the proof, we omit the environments for brevity.

        First, note that according to Lemma~\ref{lemma:decision-interpret}, the~$\textsc{Interpret}$ mapping preserves all of the coin tosses in~$H_\Psi$.
        Moreover, for a given execution~$\eta\in H_\Psi$, the same lemma tells us that~$\textsc{Interpret}(\eta)$ might include \emph{more} coin tosses than those in~$\eta$.
	
	Consider $\Lambda = \textsc{Interpret}(\Psi)$, based on Lemma~\ref{lemma:strategy-interpret-appendix}.	
	Let us also define the events $NT_\Psi, NT_\Psi^i$, $NT_\Psi^{n,i}$, $NT_\Lambda, NT_\Lambda^i$, and $NT_\Lambda^{n,i}$
	based on Definition~\ref{def:events}, for the strategies $\Psi$ and $\Lambda$.
	Since $P_{H_{\Psi}}(NT_\Psi) = P_{H_{\Psi}}(\cup_{i=1}^{\infty}NT_\Psi^i)$, using the union bound we have $P_{H_{\Psi}}(NT_\Psi)\leq \sum_{i=1}^{\infty}P_{H_{\Psi}}(NT_\Psi^i)$.
	Now, $\Psi$ achieves a positive non-termination probability, \emph{i.e.}, $P_{H_{\Psi}}(NT_\Psi) > 0$.
	Therefore, there should exist some $i^*\in\mathbb{N}$ such that $P_{H_{\Psi}}(NT_\Psi^{i^*})>0$, since otherwise we would have $P_{H_{\Psi}}(NT_\Psi)=0$ based on the union bound.
	We now define the random variables $X_\Psi^{i^*}$ and $X_\Psi^{n,i^*}$ as in Definition~\ref{def:RVs}, and based on Lemma~\ref{lem:convergence} we have $\lim_{n\rightarrow\infty}E\{X_\Psi^{n, i^*}\} = E\{X_\Psi^{i^*}\}$.
	Similarly, we define the random variables $X_\Lambda^{i^*}$ and $X_\Lambda^{n,i^*}$, and once again Lemma~\ref{lem:convergence}
	tells us $\lim_{n\rightarrow\infty}E\{X_\Lambda^{n, i^*}\} = E\{X_\Lambda^{i^*}\}$.

	Based on Definition~\ref{def:events}, we know that for every~$n$, we have~$NT_\Psi^{n, i^*}\subset NT_\Psi^{n+1, i^*}$ and~$NT_\Lambda^{n, i^*}\subset NT_\Lambda^{n+1, i^*}$.
        Therefore, we have~$E\{X_\Psi^{n, i^*}\} = P_{H_{\Psi}}(NT_\Psi^{n, i^*})\leq P_{H_{\Psi}}(NT_\Psi^{n+1, i^*}) = E\{X_\Psi^{n+1, i^*}\}$ and~$E\{X_\Lambda^{n, i}\} = P_{H_{\Lambda}}(NT_\Lambda^{n, i^*})\leq P_{H_{\Lambda}}(NT_\Lambda^{n+1, i^*}) = E\{X_\Lambda^{n+1, i^*}\}$.
        Since the increasing sequence~$\{E\{X_\Psi^{n, i^*}\}\}_{n=1}^{\infty}$ converges to~$E\{X_\Psi^{i^*}\} = P_{H_{\Psi}}(NT_\Psi^{i^*}) > 0$, there should exist a step~$n^*$ such that~$P_{H_{\Psi}}(NT_\Psi^{n^*, i^*}) = E\{X_\Psi^{n^*, i^*}\} > 0$.
        Now, given~$n^*$, let us consider~$P_{H_{\Lambda}}(NT_\Lambda^{n^*, i^*}) = E\{X_\Lambda^{n^*, i^*}\}$.
        This value is computed based on the first~$n^*$ steps of the executions in~$H_\Lambda$.
        Based on Lemma~\ref{lemma:decision-interpret}, we know that these executions contain all of the coin tosses happening in the first~$n^*$ steps of the corresponding executions in~$H_\Psi$.
        Moreover, they might contain more coin tosses as explained above.
        Therefore, if the probability of the event~$NT_\Psi^{n^*, i^*}$ in~$H_\Psi$ is positive, the probability of the corresponding event~$NT_\Lambda^{n^*, i^*}$ in~$H_\Lambda$ should also be positive, \emph{i.e.},~$E\{X_\Lambda^{n^*, i^*}\}>0$.
        Since the sequence~$\{E\{X_\Lambda^{n, i^*}\}\}_{n=1}^{\infty}$ is increasing with~$n$ and converging to~$E\{X_\Lambda^{i^*}\}$, we should therefore have~$E\{X_\Lambda^{i^*}\} > 0$.
        It immediately follows that~$P_{H_{\Lambda}}(NT_\Lambda^{i^*}) = E\{X_\Lambda^{i^*}\} > 0 $.
 
	Finally, since $NT_\Lambda^{i^*}\subset NT_\Lambda$, we should have $0 < P_{H_{\Lambda}}(NT_\Lambda^{i^*})\leq P_{H_{\Lambda}}(NT_\Lambda)$.
        This means that~$\Lambda$ is a scheduler strategy that achieves positive non-termination probability, and finishes our proof.

\end{proof}

\thmtermprobone*
\proofthmtermprobone*

\end{document}